\def\dOi{12(2:9)2016}
\newcommand{\xRightarrow}[2][]{\ext@arrow 0359\Rightarrowfill@{#1}{#2}}
\newcommand{\set}[1]{\{#1\}}
\newcommand{\tup}[1]{\langle{#1}\rangle}
\newcommand{\conf}[2]{\langle #1,#2 \rangle}
\newcommand{\ptrans}[5]{\conf{#1}{#2} \stackrel{#5}{\hookrightarrow} \conf{#3}{#4}}
\newcommand{\utransrule}[4]{#1 \xRightarrow[#4]{#3} #2}
\newcommand{\atrans}[4]{#1 \xrightarrow{#3 \, \mid \, #4} #2}
\newcommand{\atransx}[5]{#1 \xrightarrow[#5]{#3 \, \mid \, #4} #2}
\newcommand{\atransrule}[4]{#1 \xRightarrow{#3 \, \mid \, #4} #2}
\newcommand{\atransrulex}[5]{#1 \underset{#5}{\overset{#3 \, \mid \, #4}{\Longrightarrow}} #2}
\newenvironment{restatement}[1]{\noindent{}{\textbf{#1}.}\itshape}{}
\newcommand{\ext}[2]{\uparrow_{#1}\!(#2)}
\newcommand{\extop}[1]{\uparrow_{#1}}
\newcommand{\al}{\#}
\newcommand{\idealof}[1]{\mathcal{I}(#1)}
\newcommand{\upcl}[1]{{#1}^{\uparrow}}
\newcommand{\scomp}[2]{{#1}\parallel{#2}}
\newcommand{\sheight}[1]{\mathrm{height}(#1)}
\newcommand{\PFUN}{\mathrm{PFun}}
\renewcommand{\emptyset}{\varnothing}
\newcommand{\trstep}[3]{#1_{#2,#3}}
\newcommand\mytup[1]{%
  \@tempcnta=0
  \langle
  \@for\@ii:=#1\do{%
    \@insertbreakingcomma
    \@ii
  }%
  \rangle
}
\def\@insertbreakingcomma{%
  \ifnum \@tempcnta = 0 \else\,,\linebreak[1] \fi
  \advance\@tempcnta\@ne
}
\title[Weighted Pushdown Systems with Indexed Weight Domains]{Weighted Pushdown Systems with Indexed Weight Domains\rsuper*}
\author[Y.~Minamide]{Yasuhiko Minamide}
\address{Department of Mathematical and Computing Sciences, Tokyo Institute of Technology, Japan
}
\email{minamide@is.titech.ac.jp}  %optional
\keywords{pushdown system, reachability analysis, semiring}
\begin{document}
\maketitle

\begin{abstract}
The reachability analysis of weighted pushdown systems is a very powerful technique in verification and analysis of 
recursive programs. Each transition rule of a weighted pushdown system is associated with an element of a bounded semiring representing the weight of the rule. However, we have realized that the restriction of the boundedness is too strict and the formulation of weighted pushdown systems is not general enough for some applications.

To generalize weighted pushdown systems, we first introduce the notion of stack signatures that summarize the effect of a computation of a pushdown system and formulate pushdown systems as automata over the monoid of stack signatures. We then generalize weighted pushdown systems by introducing semirings indexed by the monoid and weaken the boundedness to local boundedness.
\end{abstract}

\section{Introduction}
The reachability analysis of weighted pushdown systems is a very powerful
technique in verification and analysis of recursive programs~\cite{reps:weighted}. 
Each transition rule of a weighted pushdown system is associated with
an element of a semiring representing the weight of the rule.
To guarantee termination of the analysis, the semiring of the weight
must be bounded: there should be no infinite descending sequence of weights.
However, recently, we have realized that this restriction of the boundedness is
too strict and the formulation of weighted pushdown systems is not 
general enough for some applications. 
For the two applications below, the standard algorithm for the reachability
analysis of weighted pushdown systems actually works and terminates.
However, they require semirings that are not bounded and thus the standard
framework of weighted pushdown systems cannot guarantee termination.

The first application is the reachability analysis of conditional pushdown
systems. 
Conditional pushdown systems extend 
pushdown systems 
with the ability to check the whole stack content against a regular 
language~\cite{esparza:regularvaluation,li:conditional}. 
We proposed an algorithm of their reachability analysis 
in our previous work on the analysis of the HTML 5 parser specification~\cite{minamide:html5}.
After the development of the algorithm, we realized that the algorithm 
can be considered as the reachability analysis of weighted pushdown 
systems. However, it required an unbounded semiring.

The second application is the analysis of recursive programs with local
variables. For the efficient analysis of
recursive programs, Suwimonteerabuth proposed an encoding of 
local variables into weight implemented with BDDs~\cite{SuwimonteerabuthPhd}.
The weight has a structure depending on a configuration of stack 
and requires a semiring that is not bounded.

\smallskip

To generalize weighted pushdown systems, we first introduce \emph{stack signatures} that summarize the effect of a computation of a pushdown system 
as a pair of words over a stack alphabet. 
A stack signature $w_1/w_2$ represents a computation of a pushdown system
that pops $w_1$ and pushes $w_2$ as its total effect.
We show that the set of stack signatures forms an ordered monoid,
\textit{i.e.}, a monoid that is equipped with a partial order compatible with the 
multiplication of the monoid. 
We then formulate pushdown systems as automata over the monoid of stack
signatures.

We extend the structure of weight by introducing semirings indexed
by a monoid element. 
An indexed semiring $\mathcal{S}$ over a monoid $\mathcal{M}$ has domains $D_m$ indexed by
$m \in \mathcal{M}$ and indexed operations $\otimes_{m,m'} : D_{m} \times D_{m'} \to D_{m m'}$ and $\oplus_m: D_m \times D_m \to D_m$ for $m,m' \in\mathcal{M}$.
The operations must satisfy the properties of semirings extended to indexed domains.
Weighted pushdown systems are then generalized to those 
over a semiring indexed by the monoid of stack signatures. 
We show that the reachability analysis of weighted pushdown systems 
by Reps \textit{et al.}~\cite{reps:weighted} can
be refined to those over an indexed semiring and the boundedness can be
replaced with the \emph{local boundedness}.

To prove that a structure forms an indexed semiring, we need to show
many properties on its multiplication and addition.
It is rather cumbersome to prove them from scratch.
We show that an indexed semiring can be constructed from a simplified 
structure, called a \emph{weight structure}.
All the indexed semirings used in our applications of weighted pushdown systems are presented
as  weight structures. It is much easier to show a structure forms
a weight structure.

We present several applications of pushdown systems with indexed weighted domains.
The first application is an encoding of a pushdown system into a weighted 
pushdown system whose stack alphabet is a singleton. 
This is a simplified version of the encoding of local variables into weight
by Suwimonteerabuth~\cite{SuwimonteerabuthPhd}.
The second application is an indexed semiring to encode the reachability analysis of
conditional pushdown systems into that of weighted pushdown systems.
We also consider the coverability in well-structured pushdown systems by 
Cai and Ogawa~\cite{ogawa:well}, 
and the reachability in pushdown systems with stack manipulation
by Uezato and Minamide~\cite{uezato:stack}.
Since the indexed semirings used in these applications are locally bounded, 
our framework guarantees termination of the analyses.

\medskip

This paper is organized as follows.
Section~\ref{sec:wautomata} reviews the definitions of semirings and weighted automata.
In Section~\ref{sec:sig}, we introduce stack signatures that summarize the effect of a computation of a pushdown system and show that they form a semiring.
In Section~\ref{sec:indexed}, we introduce semirings indexed by a monoid and
weighted automata are extended to those over an indexed semiring.
Section~\ref{sec:wpds} introduces weighted pushdown automata over
an indexed semiring and extends the standard saturation procedure to them.
Section~\ref{sec:wstructure} presents a simplified structure to easily 
construct a semiring indexed by a monoid.
Several applications of our framework are presented in Section~\ref{sec:application}.
Finally, we discuss related work and conclude.

\section{Semirings and Weighted Automata} \label{sec:wautomata}
We first review the definitions of semirings and weighted automata.
\begin{defi}
A semiring is a structure $\mathcal{S}=\mytup{D,\oplus,\otimes,0,1}$ where
$D$ is a set, $0$ and $1$ are elements of $D$, $\oplus$ and $\otimes$ are
binary operations on $D$ such that
\begin{enumerate}
\item $\tup{D, \oplus, 0}$ is a commutative monoid.
\item $\tup{D, \otimes, 1}$ is a monoid.
\item $\otimes$ distributes over $\oplus$.
\[ \begin{array}{c@{\quad\quad}c}
(x \oplus y) \otimes z = (x  \otimes z) \oplus (y  \otimes z)  &
 x \otimes (y \oplus z)  = (x  \otimes y) \oplus (x  \otimes z)  
\end{array} \]
\item $0$ is an annihilator with respect to $\otimes$: 
$0 \otimes x = 0 = x \otimes 0$ for all $x \in D$.
\end{enumerate}
\end{defi}
We say that a semiring $\mathcal{S}$ is \emph{idempotent} if its addition $\oplus$ is idempotent (i.e., $a \oplus a = a$). For an idempotent semiring 
$\mytup{D,\oplus,\otimes,0,1}$, $\tup{D, \oplus}$ can be
considered as a join semilattice\footnote{In~\cite{reps:weighted}, it is considered as a meet semilattice.}. 
Then, the partial order $\sqsubseteq$ is defined 
by $a \sqsubseteq b$ iff $a \oplus b = b$ for an idempotent semiring. 
We say that an idempotent semiring is \emph{bounded} if there are no infinite
ascending chains with respect to $\sqsubseteq$.

In this paper, we consider weighted automata without initial and
final states.
\begin{defi}
A weighted automaton $\mathcal{A}$ over an idempotent semiring $\mathcal{S}$ and
an alphabet $\Gamma$ is 
a structure $\tup{\Gamma, Q, E}$ 
where $Q$ is a finite set of states,
 $E : Q \times \Gamma \times Q \to \mathcal{S}$
is a set of transition rules each of which associates an element in $\mathcal{S}$ as weight.
\end{defi}

For weighted automata over an alphabet $\Gamma$ and
a semiring $\mathcal{S}=\mytup{D,\oplus,\otimes,0,1}$,
we introduce the transition relation of the form $\atrans{q}{q'}{w}{a}$ 
where $w \in \Gamma^*$ and $a \in D$. It is inductively defined as follows.
\begin{itemize}
\item $\atrans{q}{q}{\epsilon}{1}$ for any $q \in Q$.
\item $\atrans{q}{q'}{\gamma}{a}$ if $a = E(\tup{q, \gamma, q'})$.
\item $\atrans{q}{q'}{ww'}{a \otimes b}$ if $\atrans{q}{q''}{w}{a}$ and
$\atrans{q''}{q'}{w'}{b}$.
\end{itemize}

Then, for two states $q$ and $q'$ and a word $w$, 
we consider the total weight of the transitions of the form $\atrans{q}{q'}{w}{a}$ defined as follows\footnote{This is basically a formal power series, which is used to define the behaviour of weighted automata~\cite{weightedautomata}.}.
\[ \delta(q, w, q') = \bigoplus \set{a \mid \atrans{q}{q'}{w}{a}} \]
This is well-defined because there are only finitely many transitions of this form and we assume that the semiring is idempotent.
In the general theory of weighted automata, we do not impose that
the semiring is idempotent~\cite{weightedautomata}. However, we impose the condition to adopt the simple 
and intuitive definition above.

\section{Stack Signatures\label{sec:sig}} 
We introduce stack signatures that summarize the effect of a transition
on stack as a pair of words over a stack alphabet.
It is shown that the set of stack signatures forms 
a monoid, and then a semiring by introducing a partial order on them. 
Stack signatures naturally appear in the theory of context-free
grammars and pushdown systems~\cite{SuwimonteerabuthPhd,minamide06,tozawa07}. 
We adopt the term `stack signature' introduced by Suwimonteerabuth~\cite{SuwimonteerabuthPhd}. 

The proofs of most results in this section appear in Appendix~\ref{appendix:assoc}.
They are not fundamentally difficult, but require detailed case-analysis.
Thus, we also formalized stack signatures and proved 
their properties in Isabelle/HOL by extending our previous work on a formalization of decision procedures on context-free grammars~\cite{minamide07}\footnote{The proof script can be found at \url{http://www.is.titech.ac.jp/~minamide/stacksig.tar.gz}.}.

The effect of a transition of a pushdown system can be summarized as
a pair of sequences of stack symbols written $w_1/w_2$ where $w_1$
are the symbols popped by the transition and $w_2$ are those
pushed by the transition. We consider that pushing $\gamma$ and then
popping the same $\gamma$ cancel the effect, but popping $\gamma$ and 
then pushing $\gamma$ have the effect $\gamma/\gamma$. 
\begin{defi} 
We call elements of $\Gamma^* \times \Gamma^*$ \emph{stack signatures}
  and write $w/w'$ for a stack signature $\tup{w, w'}$.
\begin{itemize}
\item We say that $w_1/w_1'$ and $w_2/w_2'$ are \emph{compatible} if either $w_1'$ is
a prefix of $w_2$ or $w_2$ is a prefix of $w_1'$. 
Furthermore, they are called \emph{strictly compatible} if $w_1' = w_2$.
\item For compatible $w_1/w_1'$ and $w_2/w_2'$, we define $w_1/w_1' \cdot w_2/w_2'$ by
\[ w_1/w_1' \cdot w_2/w_2' =
\left\{ \begin{array}{l@{\quad\quad}l}
w_1/w_2'w_1'' & \mbox{if $w_1' = w_2w_1''$} \\
w_1w_2''/w_2' & \mbox{if $w_2 = w_1'w_2''$}
            \end{array}\right.
\]
\end{itemize}
\end{defi}

\noindent For example, we have $\gamma_1/\gamma_2\cdot \gamma_2\gamma_3/\gamma_4 = 
\gamma_1\gamma_3/\gamma_4$.
We write $\sigma_1 \parallel \sigma_2$ if stack signatures $\sigma_1$ and $\sigma_2$ are strictly compatible.

By introducing an element $\top$ and extending the definition $\cdot$ as follows,
$\mytup{(\Gamma^* \times \Gamma^*) \cup \set{\top}, \cdot, \epsilon/\epsilon}$ forms a monoid. 
The proof of the associativity of $\cdot$ appears in Appendix~\ref{appendix:assoc}.
We write $\mathcal{M}_{\Gamma}$ for this monoid. 
\[ 
\begin{array}{ll}
\top \cdot \sigma = \sigma \cdot \top = \top & \mbox{for $\sigma \in \mathcal{M}_\Gamma$} \\
w_1/w_1' \cdot w_2/w_2' = \top & \mbox{if $w_1/w_1'$ and $w_2/w_2'$ are not compatible}
\end{array} \]
By relaxing the use of terminology, we call an element of $\mathcal{M}_\Gamma$
a \emph{stack signature} and an element of the form $w/w'$ a \emph{proper stack signature}.

The following isomorphism is used to relate automata and pushdown systems. 
It is clear from $w_1/\epsilon \cdot w_2/\epsilon = w_1w_2/\epsilon$.
\begin{prop} \label{prop:submonoid}
The set $\set{w/\epsilon \mid w \in \Gamma^*}$ is a submonoid of $\mathcal{M}_\Gamma$. 
Furthermore, it is isomorphic to $\Gamma^*$ by the function projecting $w$ from $w/\epsilon$.
\end{prop}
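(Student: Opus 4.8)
The plan is to verify directly that $S = \set{w/\epsilon \mid w \in \Gamma^*}$ contains the unit and is closed under the monoid operation, and then to check that the projection map is a monoid isomorphism onto $\Gamma^*$.

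First I would establish closure. Given any two elements $w_1/\epsilon$ and $w_2/\epsilon$ of $S$, their second components are both $\epsilon$, and since $\epsilon$ is a prefix of every word, the two signatures are always compatible; in particular their product is never $\top$. To pin down the product I examine which clause of the definition of $\cdot$ applies: writing $w_2 = \epsilon\,w_2$ exhibits $w_2$ in the form $w_1'w_2''$ with $w_1' = \epsilon$ and $w_2'' = w_2$, so the second clause yields $w_1/\epsilon \cdot w_2/\epsilon = w_1w_2/\epsilon$, exactly as noted before the statement. Since $w_1w_2/\epsilon \in S$, the set is closed. The unit $\epsilon/\epsilon$ of $\mathcal{M}_\Gamma$ arises by taking $w = \epsilon$ and hence lies in $S$, so $S$ is a submonoid.

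For the isomorphism, I would define $\phi : S \to \Gamma^*$ by $\phi(w/\epsilon) = w$. This map is plainly a bijection, with inverse $w \mapsto w/\epsilon$. It sends the unit $\epsilon/\epsilon$ to the empty word $\epsilon$, which is the unit of $\Gamma^*$ under concatenation. Finally, using the closure computation above, $\phi(w_1/\epsilon \cdot w_2/\epsilon) = \phi(w_1w_2/\epsilon) = w_1w_2 = \phi(w_1/\epsilon)\,\phi(w_2/\epsilon)$, so $\phi$ is a monoid homomorphism and therefore an isomorphism.

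There is no substantial obstacle here: the only point requiring any care is confirming, in the closure step, that it is the second clause of the definition of $\cdot$ that governs the product when both second components are empty, after which both closure and the homomorphism property follow from the single identity $w_1/\epsilon \cdot w_2/\epsilon = w_1w_2/\epsilon$. The whole argument reduces to this one direct computation.
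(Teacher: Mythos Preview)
Your proposal is correct and follows exactly the paper's approach: the paper simply remarks that the result is clear from the identity $w_1/\epsilon \cdot w_2/\epsilon = w_1w_2/\epsilon$, and your argument is precisely a careful unpacking of that one line.
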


We also introduce a partial order on stack signatures:
a transition that pops $w_1$ and pushes $w_2$ can be considered
as one that pops $w_1w$ and pushes $w_2w$ for any $w \in \Gamma^*$.
\begin{defi} 
A partial order $\le$ on stack signatures is 
defined by $w_1/w_2 \le w_1w/w_2w$ for $w_1,w_2,w \in \Gamma^*$
and $\sigma \le \top$ for any stack signature $\sigma$. 
\end{defi}
It is clear that $(\Gamma^* \times \Gamma^*) \cup \set{\top}$ is a join-semilattice.  
This partial order is compatible with the binary operation $\cdot$: if $\sigma_1 \le \sigma_1'$ and $\sigma_2 \le \sigma_2'$, then $\sigma_1 \cdot \sigma_2 \le \sigma_1' \cdot \sigma_2'$ (Lemma~\ref{lem:ordered} in the appendix).
Thus, the monoid of stack signatures is an 
\emph{ordered monoid}\footnote{A monoid is ordered when it is equipped with a compatible partial order.}.
With this order, the compatibility of stack signatures can be understood 
by the strict compatibility.
\begin{lem} Two stack signatures $\sigma_1$ and $\sigma_2$ are compatible if and only if
one of the following holds.
\begin{itemize}
\item $\sigma_1 \le \sigma_1'$ and $\sigma_1' \parallel \sigma_2$
for some $\sigma_1'$.
\item $\sigma_2 \le \sigma_2'$ and $\sigma_1 \parallel \sigma_2'$
for some $\sigma_2'$.
\end{itemize}
\end{lem}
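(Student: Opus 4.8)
The plan is to prove both directions by unpacking the definitions of compatibility, of the order $\le$, and of strict compatibility $\parallel$, reducing everything to statements about prefixes of words. Throughout I take $\sigma_1$ and $\sigma_2$ to be proper stack signatures, say $\sigma_1 = w_1/w_1'$ and $\sigma_2 = w_2/w_2'$, since compatibility is defined only on proper signatures. The key observation I will use repeatedly is that, among proper signatures, $\sigma \le \sigma'$ holds exactly when $\sigma'$ is obtained from $\sigma$ by appending a common word $v$ to both components, i.e. $\sigma' = wv/w'v$ when $\sigma = w/w'$; and that $\sigma_1' \parallel \sigma_2$ means precisely that the push component of $\sigma_1'$ equals the pop component of $\sigma_2$.

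For the forward direction I would assume $\sigma_1$ and $\sigma_2$ compatible and split on the two prefix cases in the definition. If $w_1'$ is a prefix of $w_2$, write $w_2 = w_1' v$ and take $\sigma_1' = w_1 v/w_1' v$; then $\sigma_1 \le \sigma_1'$ by appending $v$, and the push component of $\sigma_1'$ is $w_1' v = w_2$, the pop component of $\sigma_2$, so $\sigma_1' \parallel \sigma_2$, yielding the first bullet. Symmetrically, if $w_2$ is a prefix of $w_1'$, write $w_1' = w_2 u$ and take $\sigma_2' = w_2 u/w_2' u$; then $\sigma_2 \le \sigma_2'$ and the pop component of $\sigma_2'$ is $w_2 u = w_1'$, the push component of $\sigma_1$, so $\sigma_1 \parallel \sigma_2'$, yielding the second bullet.

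For the backward direction I would assume one of the two bullets and recover a prefix relation. If the first bullet holds, from $\sigma_1 \le \sigma_1'$ I get $\sigma_1' = w_1 v/w_1' v$ for some $v$, so its push component is $w_1' v$; the strict compatibility $\sigma_1' \parallel \sigma_2$ then forces $w_2 = w_1' v$, exhibiting $w_1'$ as a prefix of $w_2$ and hence compatibility. The second bullet is handled symmetrically: from $\sigma_2 \le \sigma_2'$ and $\sigma_1 \parallel \sigma_2'$ one obtains $\sigma_2' = w_2 u/w_2' u$ and $w_1' = w_2 u$, so $w_2$ is a prefix of $w_1'$.

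I do not expect a genuine obstacle here; the substance of the lemma is simply that the order $\le$ exactly supplies the residual word $v$ (or $u$) needed to turn a prefix relation into the equality demanded by $\parallel$. The only points that require care are keeping the two symmetric cases matched to the correct bullet, and being explicit that $\sigma \le \sigma'$ among proper signatures means extension of both components by a common word; once these are pinned down, each implication is an immediate computation on words.
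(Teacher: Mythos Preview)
Your argument is correct. The paper does not spell out a proof of this lemma: it is stated in the main text and not revisited in the appendix, being treated as an immediate consequence of the definitions of compatibility, of the order $\le$, and of strict compatibility. Your proof is exactly the direct unpacking of these definitions that the paper implicitly relies on, with the two prefix cases in the definition of compatibility matched to the two bullets via the residual word $v$ (respectively $u$).
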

For example, $\gamma_1\gamma_2/\gamma_3$ and $\gamma_3\gamma_4/\gamma_5$ are compatible
because $\gamma_1\gamma_2/\gamma_3 \le \gamma_1\gamma_2\gamma_4/\gamma_3\gamma_4$
and $\gamma_1\gamma_2\gamma_4/\gamma_3\gamma_4 \parallel \gamma_3\gamma_4/\gamma_5$. 
Then, $\cdot$ on compatible stack signatures can also be understood by
$\cdot$ on strictly compatible stack signatures.
\begin{lem} \mbox{}

\begin{itemize}
\item If $\sigma_1 \le \sigma_1'$ and $\sigma_1' \parallel \sigma_2$, 
then $\sigma_1 \cdot \sigma_2 = \sigma_1' \cdot \sigma_2$.
\item If $\sigma_2 \le \sigma_2'$ and $\sigma_1 \parallel \sigma_2'$, 
then $\sigma_1 \cdot \sigma_2 = \sigma_1 \cdot \sigma_2'$.
\end{itemize}
\end{lem}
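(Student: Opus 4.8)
The plan is to establish both statements directly from the definitions of $\le$, $\cdot$, and strict compatibility $\parallel$, by unwinding what the hypotheses force on the words involved. I would treat the two bullets symmetrically; since $\sigma_1 \cdot \sigma_2 = \top$ exactly when $\sigma_1, \sigma_2$ are incompatible, and the previous lemma tells us that in each bullet the signatures are in fact compatible, I can work entirely with proper stack signatures and avoid the $\top$ case.

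For the first bullet, suppose $\sigma_1 \le \sigma_1'$ and $\sigma_1' \parallel \sigma_2$. By the definition of $\le$, write $\sigma_1 = w_1/w_2$ and $\sigma_1' = w_1 w / w_2 w$ for some $w \in \Gamma^*$, and by strict compatibility write $\sigma_2 = w_2 w / w_3$ for some $w_3$, so that the second component of $\sigma_1'$ equals the first component of $\sigma_2$. The right-hand side is then immediate: $\sigma_1' \cdot \sigma_2 = w_1 w / w_3$ by the strictly-compatible case of the definition of $\cdot$. For the left-hand side I must compute $w_1/w_2 \cdot w_2 w / w_3$ using the general definition of $\cdot$ on compatible signatures. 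Here $w_2$ is a prefix of $w_2 w$ (the first defining case, with $w_2 w = w_2 \cdot w$ and the tail $w_1'' = w$), so $\sigma_1 \cdot \sigma_2 = w_1 / w_3 w = w_1 w / w_3$. Wait—I need to check the arithmetic carefully: matching against the clause $w_1/w_1' \cdot w_2/w_2' = w_1 / w_2' w_1''$ when $w_1' = w_2 w_1''$, here the first signature's push-part is $w_2$ and the second's pop-part is $w_2 w$, so it is actually the \emph{other} clause, $w_1 w_2''/w_2'$ with $w_2 w = w_2 w_2''$ giving $w_2'' = w$; thus $\sigma_1 \cdot \sigma_2 = w_1 w / w_3$, matching $\sigma_1' \cdot \sigma_2$. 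The second bullet is entirely dual, exchanging the roles of the two arguments, and I would present it by the symmetric computation.

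The main obstacle, as the parenthetical above signals, is simply keeping straight which of the two clauses of the definition of $\cdot$ applies and which word plays the role of the leftover tail $w_1''$ or $w_2''$. This is a bookkeeping matter rather than a conceptual one, and a single concrete substitution check (as above) resolves it. The only other point requiring a word of care is justifying that $\sigma_1 \cdot \sigma_2$ and $\sigma_1' \cdot \sigma_2$ are both proper (non-$\top$): this follows because compatibility of $\sigma_1, \sigma_2$ is guaranteed by the preceding lemma under exactly the hypotheses of each bullet, so neither product degenerates to $\top$.
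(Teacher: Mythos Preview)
Your approach is correct: unwinding the definitions as you do gives the result directly. Writing $\sigma_1 = w_1/w_2$, $\sigma_1' = w_1w/w_2w$, and $\sigma_2 = w_2w/w_3$, the second clause of the definition of $\cdot$ (with tail $w_2'' = w$) yields $\sigma_1 \cdot \sigma_2 = w_1w/w_3 = \sigma_1' \cdot \sigma_2$, and the second bullet is symmetric. The paper does not spell out a proof of this lemma either in the main text or in the appendix; it is treated as an immediate consequence of the definitions (and is covered by the accompanying Isabelle/HOL formalization), so your direct computation is exactly the intended argument.

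One minor point of presentation: you do not actually need to invoke the preceding lemma to justify that $\sigma_1 \cdot \sigma_2 \ne \top$. Once you have written out $\sigma_1 = w_1/w_2$ and $\sigma_2 = w_2w/w_3$, compatibility is visible by inspection (the push-part $w_2$ of $\sigma_1$ is a prefix of the pop-part $w_2w$ of $\sigma_2$), so the product is proper by definition. Citing the preceding lemma is harmless but slightly circuitous.
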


Furthermore, we can construct an idempotent semiring
by introducing the bottom element $\bot$ and
extending $\cdot$ for $\bot$ as follows. 
\[ \bot  \cdot x = x \cdot \bot = \bot \quad\quad
\mbox{for all $x \in (\Gamma^* \times \Gamma^*) \cup \set{\top, \bot}$} \]
\begin{prop}
Let $S=(\Gamma^* \times \Gamma^*) \cup \set{\top, \bot}$.
$\tup{S, \sqcup, \cdot, \bot, \epsilon/\epsilon}$
forms an idempotent semiring.
\end{prop}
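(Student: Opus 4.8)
The plan is to verify the four semiring axioms together with idempotency, leaning on the two facts already established for the proper part of the structure: that $\mathcal{M}_\Gamma = (\Gamma^* \times \Gamma^*) \cup \set{\top}$ is a monoid under $\cdot$ with unit $\epsilon/\epsilon$, and that $(\Gamma^* \times \Gamma^*) \cup \set{\top}$ is a join-semilattice whose order is compatible with $\cdot$ (Lemma~\ref{lem:ordered}). The additive and multiplicative structures on $S$ are then obtained by adjoining $\bot$. For $\tup{S, \sqcup, \bot}$, I would observe that adding $\bot$ as a fresh least element turns the join-semilattice into one with bottom $\bot$, so $\sqcup$ is associative, commutative and idempotent and $\bot \sqcup x = x$; this gives both the commutative-monoid axiom and idempotency at once. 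For $\tup{S, \cdot, \epsilon/\epsilon}$, associativity and the unit laws hold on $\mathcal{M}_\Gamma$ by the earlier monoid result, and every remaining case is one in which some argument is $\bot$: since the extension makes $\bot$ absorbing, any product touching $\bot$ collapses to $\bot$ on both sides of $(xy)z = x(yz)$, and $\epsilon/\epsilon \cdot \bot = \bot = \bot \cdot \epsilon/\epsilon$. The annihilator axiom is immediate from the defining equation $\bot \cdot x = x \cdot \bot = \bot$.

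The real content is distributivity, and here the key structural observation I would isolate first is a description of joins of proper signatures: two proper signatures have a common upper bound strictly below $\top$ only when they are $\le$-comparable. Indeed, if $w_1/w_2$ and $w_1'/w_2'$ both lie below some $W_1/W_2$, then $W_1 = w_1 u = w_1' u'$ and $W_2 = w_2 u = w_2' u'$; comparing the lengths of $w_1$ and $w_1'$ and cancelling in the free monoid forces one of $w_1/w_2, w_1'/w_2'$ to be obtained from the other by appending a common word, i.e. they are comparable. Consequently $a \sqcup b \in \set{a, b, \top}$ for all proper $a,b$, the value being $\top$ exactly when $a$ and $b$ are incomparable.

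With this in hand I would prove right distributivity $(a \sqcup b)\cdot c = (a\cdot c)\sqcup(b\cdot c)$ by a short reduction. The cases where some argument is $\bot$, or where $a$ or $b$ is $\top$, are settled directly from the absorption and top laws; when $a \sqcup b$ equals $a$ or $b$ the claim follows from monotonicity of $\cdot$ (Lemma~\ref{lem:ordered}), since the smaller product is absorbed in the join. This leaves the essential case of incomparable proper $a,b$ and proper $c$, where $a \sqcup b = \top$ and the left side is $\top$; I must then show $(a\cdot c)\sqcup(b\cdot c) = \top$. If either product is $\top$ this is clear, so the crux is the key lemma that right multiplication by a proper signature \emph{reflects} comparability: if $a\cdot c$ and $b\cdot c$ are proper and comparable, then $a$ and $b$ are comparable. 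I would prove this by case analysis on the two prefix alternatives in the definition of $\cdot$ for each product together with the defining shape of $\le$, cancelling the common factor $c$; this is the same style of detailed case-analysis used for associativity, and I expect it to be the main obstacle. Left distributivity then follows by applying the same argument through the anti-isomorphism $w/w' \mapsto w'/w$, which reverses $\cdot$ and preserves $\le$, so that right-reflection of comparability yields left-reflection; combining the two distributive laws with the axioms above completes the proof.
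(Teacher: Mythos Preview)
Your proposal is correct and follows essentially the same route as the paper. The paper reduces the proposition to distributivity and proves $(\sigma_1 \sqcup \sigma_2)\cdot\sigma_3 = (\sigma_1\cdot\sigma_3)\sqcup(\sigma_2\cdot\sigma_3)$ in Lemma~\ref{lem:distrib} by exactly your strategy: dispose of the comparable case via monotonicity (Lemma~\ref{lem:ordered}), and in the remaining case assume $(\sigma_1\cdot\sigma_3)\sqcup(\sigma_2\cdot\sigma_3)\ne\top$ and derive comparability of $\sigma_1,\sigma_2$ by a four-way case split on which prefix alternative governs each product. Your phrasing of this as ``right multiplication by a proper signature reflects comparability'' is precisely the content of that case analysis. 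The one small addition in your plan is the use of the anti-involution $w/w'\mapsto w'/w$ to transport right distributivity to left distributivity; the paper simply treats the other side as symmetric, so this is a stylistic economy rather than a different idea.
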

The distributivity of $\cdot$ over $\sqcup$ is proved in Lemma~\ref{lem:distrib}.
This semiring is not bounded because $\epsilon/\epsilon \le
\gamma/\gamma \le \gamma\gamma/\gamma\gamma \le \cdots$.

\section{Semirings Indexed by a Monoid}\label{sec:indexed}
We introduce a semiring indexed by a monoid, which is a typed algebraic
structure where a type is an element of a monoid. Weighted pushdown systems
are generalized by taking this structure as the weight domain in the next section.

\begin{defi} 
Let $\mathcal{M} = \tup{M, \cdot, 1_{\mathcal{M}}}$ be a monoid.
An indexed semiring $\mathcal{S}$ over $\mathcal{M}$ is a structure 
$\tup{\set{D_m}, \set{\oplus_m}, \set{\otimes_{m_1,m_2}}, \set{0_m}, 1}$ such that
\begin{itemize}
\item $D_m$ is a set for each $m \in M$.
\item $\tup{D_m, \oplus_m, 0_m}$ is a commutative monoid for $m\in M$.
\item $\otimes_{m_1,m_2}$ is an associative binary operation of type $D_{m_1} \times D_{m_2} \to D_{m_1m_2}$ for $m_1, m_2 \in M$.
\[ (a \otimes_{m_1,m_2} b) \otimes_{m_1m_2, m_3} c = a \otimes_{m_1,m_2m_3} 
(b \otimes_{m_2,m_3} c) \]
\item $1 \in D_{1_\mathcal{M}}$ is a neutral element of $\otimes_{m,m'}$: $a \otimes_{m,1_{\mathcal{M}}} 1 = 1 \otimes_{1_{\mathcal{M}}, m} a = a$.
\item $\otimes_{m_1,m_2}$ distributes over $\oplus_m$.
\[ (a \oplus_{m_1} b) \otimes_{m_1,m_2} c = (a  \otimes_{m_1,m_2} c) \oplus_{m_1m_2} (b  \otimes_{m_1,m_2} c)  \]
\[ a \otimes_{m_1,m_2} (b \oplus_{m_2} c)  = (a  \otimes_{m_1,m_2} b) \oplus_{m_1m_2} (a  \otimes_{m_1,m_2} c)  \]
\item $0_m$ is an annihilator with respect to $\otimes_{m, m'}$.
\[ 0_{m_1} \otimes_{m_1, m_2} a = 0_{m_1m_2} = b \otimes_{m_1,m_2} 0_{m_2} \]
\end{itemize}
\end{defi}\medskip

\noindent We call $\mathcal{S}$ an idempotent indexed semiring if $\mathcal{S}$ is
an indexed semiring where $\oplus_m$ is idempotent for all $m \in M$.
We introduce partial orders $\sqsubseteq_m$ defined by $a \sqsubseteq_m b$ iff $a \oplus_m b = b$. From distributivity of $\otimes$, it is clear that
$\otimes$ is monotonic with respect to $\sqsubseteq_m$.
If we ignore the monoid structure of each $D_m$, this structure corresponds to a lax monoidal functor $F: \mathcal{M} \to (\mathrm{Set}, \times, \set{*})$ in category theory.

\begin{exa}
Matrices over a semiring have a similar structure, but are indexed by a subgroup instead of a monoid.
Let us consider $m \times n$ matrices over an arbitrary semiring. 
We write $\tup{m,n}$ for the dimensions of $m \times n$ matrices.
Then, the set of dimensions forms a subgroup by introducing $\top$ and defining the binary operation $\cdot$ as 
follows.
\[ \tup{m_1, n_1} \cdot \tup{m_2, n_2} = 
   \left\{
   \begin{array}{ll}
   \tup{m_1, n_2} & \mbox{if $n_1 = m_2$}  \\    
   \top           & \mbox{otherwise}      
   \end{array}
   \right.
\]
Let $D_{\tup{m,n}}$ be the set of $m \times n$ matrices. Then, $D_{\tup{m,n}}$ with
matrix addition and multiplication forms a semiring indexed by the subgroup of dimensions
where $D_\top$ is defined as a singleton.
For boolean matrices, the indexed semiring is idempotent since the addition of
boolean matrices is idempotent. \qed
\end{exa}

The following proposition is used later to consider a semiring indexed by
a submonoid of the stack signatures. The conditions of an indexed semiring
carry over to the substructure.
\begin{prop} \label{prop:subsemiring}
Let $\mathcal{M} = \tup{M, \cdot, 1_{\mathcal{M}}}$ be a monoid and
$\mathcal{S}$ a semiring indexed by $\mathcal{M}$.
If $\mathcal{M}'$ is a submonoid of $\mathcal{M}$, then the restriction
of $\mathcal{S}$ on $\mathcal{M}'$ is a semiring indexed by $\mathcal{M}'$.
\end{prop}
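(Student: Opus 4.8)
The plan is to verify that the restriction of $\mathcal{S}$ to $\mathcal{M}'$ satisfies every clause in the definition of an indexed semiring. Concretely, since $\mathcal{M}'$ is a submonoid of $\mathcal{M}$, it is closed under $\cdot$ and contains the unit $1_{\mathcal{M}}$; hence for every $m \in M'$ the component $D_m$ is already defined, and for $m_1, m_2 \in M'$ the product $m_1 m_2$ lies in $M'$, so the operation $\otimes_{m_1,m_2}: D_{m_1} \times D_{m_2} \to D_{m_1 m_2}$ has its codomain inside the restricted family. Thus I would first define the restricted structure by taking $\set{D_m \mid m \in M'}$, $\set{\oplus_m \mid m \in M'}$, $\set{\otimes_{m_1,m_2} \mid m_1,m_2 \in M'}$, $\set{0_m \mid m \in M'}$, together with the same $1$, which lies in $D_{1_{\mathcal{M}}}$ as required because $1_{\mathcal{M}} \in M'$.

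Next I would check the axioms one by one, observing that each is a universally quantified identity whose instances at indices drawn from $M'$ are simply a subset of the instances already known to hold in $\mathcal{S}$. So $\tup{D_m, \oplus_m, 0_m}$ remains a commutative monoid for each $m \in M'$; associativity of $\otimes$, namely $(a \otimes_{m_1,m_2} b) \otimes_{m_1 m_2, m_3} c = a \otimes_{m_1,m_2 m_3} (b \otimes_{m_2,m_3} c)$, holds for $m_1,m_2,m_3 \in M'$ because all the intermediate indices $m_1 m_2$, $m_2 m_3$ again lie in $M'$ by closure; the neutrality of $1$, the two distributivity laws, and the annihilator property of each $0_m$ likewise transfer verbatim, again using closure to ensure every index appearing in each equation belongs to $M'$. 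No axiom introduces an index outside $M'$, so each restricted axiom is an honest instance of the corresponding axiom for $\mathcal{S}$.

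The only point requiring care, and the nearest thing to an obstacle, is the bookkeeping that guarantees \emph{well-typedness} under restriction: I must confirm that whenever an axiom refers to a composite index such as $m_1 m_2$ or $m_1 m_2 m_3$, that composite is itself in $M'$, for otherwise the relevant $D_{m_1 m_2}$ or operation would not exist in the restricted family. This is exactly what closure of a submonoid under $\cdot$ provides, and the presence of $1_{\mathcal{M}}$ in $M'$ handles the unit axiom. Since these are the only two conditions used, the argument is a routine but careful transfer of identities, and no genuine mathematical difficulty arises beyond tracking indices.
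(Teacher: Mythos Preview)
Your proposal is correct and matches the paper's treatment: the paper does not give a proof, merely remarking that ``the conditions of an indexed semiring carry over to the substructure,'' which is exactly the routine axiom-by-axiom verification you spell out. The only substantive point---that closure of $M'$ under $\cdot$ and containment of $1_{\mathcal{M}}$ keep every index appearing in the axioms inside $M'$---is precisely what you identify.
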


The notion of weighted automata can be extended for an indexed semiring
over the monoid $\Gamma^*$ in the straightforward manner.
\begin{defi} \label{def:indexedwautomata}
Let $\mathcal{S}$ be an idempotent semiring $\tup{\set{D_w}, \set{\oplus_w}, \set{\otimes_{w_1,w_2}}, \set{0_w}, 1}$ indexed by $\Gamma^*$.
A weighted automaton $\mathcal{A}$ over $\mathcal{S}$ is 
a structure $\tup{\Gamma, Q, E}$ 
where $Q$ is a finite set of states, and $E : Q \times \Gamma \times Q \to 
\bigcup_{\gamma\in \Gamma} D_\gamma$ is a set of transition rules assigning
a weight
such that $E(\tup{q,\gamma,q'}) \in D_\gamma$.
\end{defi}

The definition of the transition relation is revised as follows. The only revision
is that we apply indexed $\otimes_{w,w'}$ to combine two transitions for $w$ and $w'$.
\begin{itemize}
\item $\atrans{q}{q}{\epsilon}{1}$ for any $q \in Q$.
\item $\atrans{q}{q'}{\gamma}{a}$ if $a = E(\tup{q, \gamma, q'})$.
\item $\atrans{q}{q'}{ww'}{a \otimes_{w,w'} b}$ if $\atrans{q}{q''}{w}{a}$ and
$\atrans{q''}{q'}{w'}{b}$.
\end{itemize}

\section{Weighted Pushdown Systems over an Indexed Semiring and
Their Reachability Analysis} \label{sec:wpds}
We introduce weighted pushdown systems over a semiring indexed by
the monoid of stack signatures. 
The (generalized) reachability analysis of weighted pushdown systems is refined to 
those over an indexed semiring and the boundedness is relaxed to 
the local boundedness.
We also show that
it is possible to construct an ordinary semiring from an indexed semiring,
but the obtained semiring is not bounded.

\subsection{Weighted Pushdown Systems over an Indexed Semiring} 
We basically consider pushdown systems over a stack alphabet $\Gamma$ as 
automata over the monoid of stack signatures $\mathcal{M}_\Gamma$. 
However, to clarify our presentation we introduce the definition
of weighted pushdown systems independently.
Weight domains $D_\sigma$ are indexed by a stack signature $\sigma$
and forms an indexed semiring over $\mathcal{M}_\Gamma$.

\begin{defi}
Let $\mathcal{S}=\tup{\set{D_\sigma}, \set{\oplus_\sigma}, \set{\otimes_{\sigma_1,\sigma_2}}, \set{0_\sigma}, 1}$ be a semiring indexed by $\mathcal{M}_\Gamma$.
A weighted pushdown system $\mathcal{P}$ over $\mathcal{S}$ is a structure
$\tup{P, \Gamma, \Delta}$ 
where $P$ is a finite set of states, 
$\Gamma$ is a stack alphabet, 
and $\Delta \subseteq P \times  \Gamma \times P \times \Gamma^* \times 
\bigcup_{\gamma\in\Gamma, w\in\Gamma^*} D_{\gamma/w}$
is a finite set of transitions such that $a \in D_{\gamma/w}$ for $\tup{p,\gamma,p',w,a} \in \Delta$.
\end{defi}
A configuration of a pushdown system $\mathcal{P}$ is a pair $\tup{p, w}$ for $p \in P$ and $w \in \Gamma^*$. 
We write $\ptrans{p}{\gamma}{p'}{w}{a}$ if $\tup{p, \gamma, p', w, a} \in  \Delta$. 

We consider pushdown systems as automata over stack signatures and 
define the translation relation as follows:
\begin{itemize}
\item $\atransrule{p}{p}{\epsilon/\epsilon}{1}$.
\item $\atransrule{p}{p'}{\gamma/w}{a}$ if $\ptrans{p}{\gamma}{p'}{w}{a}$.
\item $\atransrule{p}{p'}{\sigma_1 \cdot \sigma_2}{a}$
if $\atransrule{p}{p''}{\sigma_1}{a_1}$, 
$\atransrule{p''}{p'}{\sigma_2}{a_2}$, $a = a_1 \otimes_{\sigma_1,\sigma_2} a_2$ and $\sigma_1 \cdot \sigma_2 \neq \top$.
\end{itemize}
Then, it is clear that $a \in D_\sigma$ if $\atransrule{p}{p'}{\sigma}{a}$.

Traditionally, the transition relation on a pushdown system is defined as
a relation between configurations. To introduce such a definition,
we need to extend an indexed semiring with an additional operation.
\begin{defi} \label{def:conv}
Let $\mathcal{M}$ be an \emph{ordered} monoid with partial order $\le$.
By an indexed semiring over $\mathcal{M}$ we shall mean 
an indexed semiring $\mathcal{S}$ over $\mathcal{M}$ on which there is
a family of conversion functions $\extop{m,m'} :
D_m \to D_{m'}$ indexed by pairs of monoid elements $m \le m'$ such that
\begin{enumerate}
\item $\extop{m,m} = \mathrm{id}$.
\item $\extop{m,m''} =  \extop{m',m''} \circ \extop{m,m'}$ for all $m \le m' \le m''$.
\item \label{conv:prop4}
$\ext{m,m'}{0_m} = 0_{m'}$ and
$\ext{m,m'}{a \oplus_{m} b} = \ext{m,m'}{a} \, \oplus_{m'} \ext{m,m'}{b}$.
\item $\ext{m_1 m_2, m_1'm_2'}{a \otimes_{m_1,m_2} b} = \ext{m_1,m_1'}{a} \otimes_{m_1',m_2'} \ext{m_2,m_2'}{b}$ for all $m_1 \le m_1'$ and  $m_2 \le m_2'$.
\end{enumerate}
\end{defi}

\begin{exa}  \label{exa:hweight}
The structure $\mathcal{S}=\tup{\set{D_\sigma}, \set{\oplus_\sigma}, \set{\otimes_{\sigma,\sigma_2}}, \set{0_\sigma}, 0}$ forms a semiring indexed by the ordered monoid of 
stack signatures. 
\begin{itemize}
\item $D_{w/w'} = \mathbb{N}^{{\ge} \mathrm{max}(|w|,|w'|)} \cup \set{\infty}$ and $D_\top = \set{\infty}$ where $\mathbb{N}^{{\ge} i} = \set{ j \in \mathbb{N} \mid j \ge i}$.
\item $a \oplus_\sigma b = \mathrm{min}(a,b)$ and $0_\sigma = \infty$.
%\item $1_\mathcal{S} = 0$ and $\otimes_{\sigma_1,\sigma_2}$ is defined 
\item $\otimes_{\sigma_1,\sigma_2}$ is defined 
  for compatible $\sigma_1$ and $\sigma_2$ as follows.
\[ a \otimes_{w_1/w_1',w_2/w_2'} b = \left\{
\begin{array}{ll}
\mathrm{max}(|w_2|-|w_1'|+a, b)  & \mbox{if $|w_1'| \le |w_2|$} \\
\mathrm{max}(a, |w_1'|-|w_2|+b) & \mbox{if $|w_2| \le |w_1'|$} 
\end{array} \right. \]
\item The conversion functions are defined by $\ext{w_1/w_1',w_1w/w_2'w}{a} = a+|w|$.
\end{itemize}
It is shown in Example~\ref{exa:heightalg}
that the structure $\mathcal{S}$ really satisfies the conditions of
indexed semirings through the construction introduced in Section~\ref{sec:wstructure}.
This indexed semiring is used to compute the minimum height of transitions 
between two configurations of a pushdown system in Example~\ref{exa:height}. \qed
\end{exa}

For an indexed semiring over the ordered monoid $\mathcal{M}_\Gamma$, 
we write $\extop{w}$ for $\extop{w_1/w_2,w_1w/w_2w}$ if $w_1$ and $w_2$ are
clear from the context.
Then, the standard definition of the transition relation of a weighted pushdown system
is given as follows.
\begin{itemize}
\item $\utransrule{\tup{p,w}}{\tup{p,w}}{\ext{w}{1}}{}$.
\item $\utransrule{\tup{p,\gamma w'}}{\tup{p',ww'}}{\ext{w'}{a}}{}$ if $\ptrans{p}{\gamma}{p'}{w}{a}$.
\item $\utransrule{\tup{p,w}}{\tup{p',w'}}{a}{}$
if $\utransrule{\tup{p,w}}{\tup{p'',w''}}{a_1}{}$, 
$\utransrule{\tup{p'',w''}}{\tup{p',w'}}{a_2}{}$, 
and $a = a_1 \otimes_{w/w'',w''/w'} a_2$.
\end{itemize}

Then, these two definitions of transition relations are equivalent in the following sense.
As a special case of this proposition, we have
$\utransrule{\tup{p,w}}{\tup{p',\epsilon}}{a}{}$ iff
$\atransrule{p}{p'}{w/\epsilon}{a}$.
\begin{prop} \label{prop:conv}
If $\utransrule{\tup{p,w}}{\tup{p',w'}}{a}{}$, then there exist $\sigma$
and $a'$
such that $\sigma \le w/w'$, $\atransrule{p}{p'}{\sigma}{a'}$, and
$a = \ext{\sigma,w/w'}{a'}$.
Conversely, if $\atransrule{p}{p'}{\sigma}{a'}$, then 
$\utransrule{\tup{p,w}}{\tup{p',w'}}{\ext{\sigma,w/w'}{a'}}{}$ for all $\sigma \le w/w'$.
\end{prop}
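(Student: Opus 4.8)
The plan is to prove the two implications separately by structural induction on the two transition derivations: the forward implication by induction on the derivation of $\utransrule{\tup{p,w}}{\tup{p',w'}}{a}{}$, and the converse by induction on the derivation of $\atransrule{p}{p'}{\sigma}{a'}$. Throughout I would rename the stacks in the statement to $x$ and $y$, to free the letter $w$ that occurs inside the transition rules, and I would first record the easy fact that every derivable signature is proper (never $\top$): each generating clause produces $\epsilon/\epsilon$, some $\gamma/v$, or a product $\sigma_1\cdot\sigma_2\neq\top$.

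For the forward implication the two base clauses are immediate: the reflexive clause is matched by $\sigma=\epsilon/\epsilon$, $a'=1$ (using $\epsilon/\epsilon\le x/x$), and the single-step clause by $\sigma=\gamma/v$, $a'=a$ (using $\gamma/v\le \gamma u/v u$). The composition clause is the interesting case. There the derivation factors through an intermediate configuration $\tup{p'',z}$ with $a=a_1\otimes_{x/z,z/y}a_2$, and the induction hypotheses supply $\sigma_1\le x/z$, $\sigma_2\le z/y$ together with $a_1=\ext{\sigma_1,x/z}{a_1'}$ and $a_2=\ext{\sigma_2,z/y}{a_2'}$. Since the push component of $x/z$ and the pop component of $z/y$ are both $z$, these two signatures are strictly compatible and $(x/z)\cdot(z/y)=x/y$; moreover $\sigma_1$ and $\sigma_2$ are themselves compatible, because their matching components are both prefixes of $z$. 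I would then set $\sigma=\sigma_1\cdot\sigma_2$ and $a'=a_1'\otimes_{\sigma_1,\sigma_2}a_2'$: monotonicity of $\cdot$ gives $\sigma\le (x/z)\cdot(z/y)=x/y$, the product clause of the signature relation gives $\atransrule{p}{p'}{\sigma}{a'}$, and the last clause of Definition~\ref{def:conv} yields $a=\ext{\sigma,x/y}{a'}$ in one step.

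For the converse the base cases again follow from the definition of $\le$: $\sigma=\epsilon/\epsilon$ forces $x=y$ and reduces to the reflexive clause, while $\sigma=\gamma/v$ forces $x=\gamma u$, $y=v u$ and reduces to the single-step clause. The composition case is the crux. Given $\sigma=\sigma_1\cdot\sigma_2\neq\top$ and an arbitrary $x/y\ge\sigma_1\cdot\sigma_2$, I must manufacture an intermediate stack $z$ with $\sigma_1\le x/z$, $\sigma_2\le z/y$ and $x/z\parallel z/y$. Writing $\sigma_1=u_1/u_1'$, $\sigma_2=u_2/u_2'$, and $x/y$ as $\sigma_1\cdot\sigma_2$ extended by a common suffix $s$, the two clauses in the definition of $\cdot$ (according to whether $u_2$ is a prefix of $u_1'$ or $u_1'$ is a prefix of $u_2$) pin $z$ down explicitly; one then checks that the two halves recombine strictly to $x/y$ and that $\sigma_1\le x/z$, $\sigma_2\le z/y$. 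With this $z$ the two induction hypotheses give configuration transitions from $\tup{p,x}$ to $\tup{p'',z}$ and from $\tup{p'',z}$ to $\tup{p',y}$ carrying the converted weights, the composition clause of the configuration relation composes them, and the last clause of Definition~\ref{def:conv} once more collapses the product of conversions into $\ext{\sigma,x/y}{a'}$.

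The main obstacle is exactly this construction of the witnessing stack $z$ in the composition case of the converse. It is the only point at which the concrete prefix structure of stack signatures must be unfolded, and it requires the detailed case analysis on the two clauses of $\cdot$; everything else is bookkeeping driven by monotonicity of $\cdot$ and the homomorphism-like final property of the conversion functions.
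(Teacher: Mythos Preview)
Your proposal is correct and follows the same route as the paper: induction on the configuration-level derivation for the forward implication, and induction on the signature-level derivation for the converse, with the composition case handled in each direction via monotonicity of $\cdot$ together with property~(4) of Definition~\ref{def:conv}. The paper spells out the forward direction essentially as you do and dismisses the converse with ``proved in a similar manner''; your extra work---the explicit construction of the intermediate stack $z$ by case analysis on the two clauses of $\cdot$---is exactly what that phrase hides, and you have correctly located it as the only point where the concrete prefix structure must be unpacked.
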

\begin{proof}
We prove the first direction by induction on the derivation of
$\utransrule{\tup{p,w}}{\tup{p',w'}}{a}{}$.
\begin{description}
\item[Case] $\utransrule{\tup{p,w}}{\tup{p,w}}{\ext{w}{1}}{}$. We have $\atransrule{p}{p}{\epsilon/\epsilon}{1}$, $\epsilon/\epsilon \le w/w$, and
$\ext{w}{1} = \ext{\epsilon/\epsilon,w/w}{1}$.

\item[Case] $\utransrule{\tup{p,\gamma w'}}{\tup{p',ww'}}{\ext{w'}{a}}{}$.
We have $\atransrule{p}{p'}{\gamma/w}{a}$ and $\gamma/w \le \gamma w'/w w'$.
\item[Case]
$\utransrule{\tup{p,w}}{\tup{p',w'}}{a}{}$ is obtained from
$\utransrule{\tup{p,w}}{\tup{p'',w''}}{a_1}{}$, 
$\utransrule{\tup{p'',w'}}{\tup{p',w'}}{a_2}{}$, 
and $a = a_1 \otimes_{w/w'',w''/w'} a_2$. 
By the induction hypothesis, we have 
\begin{itemize}
\item $\atransrule{p}{p''}{\sigma_1}{a_1'}$, $\sigma_1 \le w/w''$, and 
$\ext{\sigma_1,w/w''}{a_1'} = a_1$, 
\item $\atransrule{p''}{p'}{\sigma_2}{a_2'}$, $\sigma_2 \le w''/w'$, and $\ext{\sigma_2,w''/w'}{a_2'} = a_2$.
\end{itemize}
By monotonicity of $\cdot$, $\sigma_1 \cdot \sigma_2 \le w/w'$
and then $\atransrule{p}{p'}{\sigma_1\cdot\sigma_2}{a'}$
where $a' = a_1' \otimes_{\sigma_1,\sigma_2} a_2'$.
We also have 
%[ \ext{\sigma_1\cdot\sigma_2, w/w'}{a'} = 
$\ext{\sigma_1\cdot\sigma_2, w/w'}{a_1' \otimes_{\sigma_1,\sigma_2} a_2'} = 
  \ext{\sigma_1, w/w''}{a_1'} \otimes_{w/w'',w''/w'}
  \ext{\sigma_2, w''/w'}{a_2'} = a$.
\end{description}

\noindent The other direction is proved in a similar manner by induction on the derivation of $\atransrule{p}{p'}{\sigma}{a'}$.

\end{proof}

\subsection{Reachability Analysis}
We show that the reachability analysis of weighted pushdown systems 
by Reps~\textit{et al.}~\cite{reps:weighted} can be
generalized for those over an indexed semiring, where we adopt a localized 
version of the boundedness of a semiring. 
\begin{defi}
We say an indexed idempotent semiring over $\mathcal{M}_{\Gamma}$ is \emph{locally bounded} if $D_{\gamma/\epsilon}$ is bounded for all $\gamma \in \Gamma$.
\end{defi}

First, we focus on the (generalized) backward reachability of a configuration with the empty stack
and consider the problem that computes the following function:
\[ \delta(p,w,p') = 
\bigoplus \set{ a \mid \atransrule{p}{p'}{w/\epsilon}{a}} \]
where the above addition is the extension of $\oplus_{w/\epsilon}$ for a set. 
This function is well-defined if the indexed semiring is locally bounded.
It is clear from the following equation:
\[ \delta(p,\gamma w',p') = \bigoplus_{p'' \in P } (\delta(p,\gamma,p'') \otimes_{\gamma/\epsilon,w'/\epsilon} \delta(p'',w',p')) \]
where we have $\delta(p,\gamma,p'') \in D_{\gamma/\epsilon}$ for all $p'' \in P$.
Although there are infinitely many transitions of the form 
$\atransrule{p}{p''}{\gamma/\epsilon}{a}$, $\delta(p,\gamma,p'')$ is well-defined because $D_{\gamma/\epsilon}$ is bounded.

\medskip 

We generalize the reachability analysis of weighted pushdown automata 
for those over an indexed semiring. 
The algorithm is a generalization of the saturation procedure 
on $\mathcal{P}$-automata~\cite{bouajjani:reachability,finkel:direct,reps:weighted}.

Let us consider a weighted pushdown system $\mathcal{P}=\tup{P,\Gamma,\Delta}$
over a semiring $\mathcal{S}$ indexed by $\mathcal{M}_\Gamma$.
We apply the procedure to a weighted automaton over the restriction of $\mathcal{S}$ to $\set{w/\epsilon \mid w \in \Gamma^*}$~\footnote{
The restriction of $\mathcal{S}$ to $\set{w/\epsilon \mid w \in \Gamma^*}$ is
a semiring indexed by $\set{w/\epsilon \mid w \in \Gamma^*}$ by Proposition~\ref{prop:submonoid} and
~\ref{prop:subsemiring}.} and start from $\mathcal{A}_0 = \tup{P, \Gamma, E_0}$, 
which has no transitions, \textit{i.e.}, $E_{0}(\tup{p,\gamma,p'}) = 0_{\gamma/\epsilon}$ for all $p,p'\in P$ and $\gamma \in \Gamma$.
Then, the weighted  automaton $\mathcal{A}_{\mathrm{pre}^*}$ representing 
$\delta_{\mathcal{P}}(p,\gamma,p')$ can be obtained by applying
the \emph{saturation rule} for weighted pushdown systems to $\mathcal{A}_0$
until saturation.
The following is the saturation rule of Reps~\textit{et al.} for
the backward reachability analysis adapted to our framework~\cite{reps:weighted}.
\begin{itemize}
\item If $\ptrans{p}{\gamma}{p'}{w}{a_1}$ and $\atrans{p'}{p''}{w}{a_2}$ in the current automaton,
  add a transition rule $\atrans{p}{p''}{\gamma}{a}$ to the automaton where $a = a_1 \otimes_{\gamma/w,w/\epsilon} a_2$.
\end{itemize}
When we add $\atrans{p}{p''}{\gamma}{a}$, if there already exists
transition $\atrans{p}{p''}{\gamma}{a'}$, then we replace it
with $\atrans{p}{p''}{\gamma}{a \oplus_{\gamma/\epsilon} a'}$.

Since there are only finitely many (one-step) transitions in
$\mathcal{A}_{\mathrm{pre}^*}$, it is clear that 
the application of the rule terminates if the indexed semiring 
is locally bounded.
\begin{thm} Let $\mathcal{P}$ be a weighted pushdown system over a locally bounded 
idempotent semiring indexed by $\mathcal{M}_\Gamma$.
\begin{itemize}
\item The saturation procedure above terminates.
\item Let $\mathcal{A}_{\mathrm{pre}^*}$ be a weighted
 automaton obtained by the saturation procedure.
Then, we have 
$\atransx{p}{p'}{\gamma}{a}{\mathcal{A}_{\mathrm{pre}^*}}$ for $a = \delta_\mathcal{P}(p,\gamma,p')$.
\end{itemize}
\end{thm}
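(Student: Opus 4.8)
The plan is to treat the two claims separately: termination follows from local boundedness and finiteness, while correctness is a two-way comparison between the weights stored in $\mathcal{A}_{\mathrm{pre}^*}$ and the weights of pushdown derivations. For termination, I would note that every transition rule of the working automaton carries a weight $E(\tup{p,\gamma,p'}) \in D_{\gamma/\epsilon}$, that there are only finitely many triples $\tup{p,\gamma,p'}$ (since $P$ and $\Gamma$ are finite), and that each application of the saturation rule replaces a weight by its $\oplus_{\gamma/\epsilon}$-join with a new contribution. Hence every weight is non-decreasing in $\sqsubseteq_{\gamma/\epsilon}$, and a rule application changes the automaton only when it strictly increases one of these finitely many weights. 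If the procedure did not terminate, infinitely many strict increases would occur, so by the pigeonhole principle some fixed triple would be strictly increased infinitely often, yielding an infinite ascending chain in $D_{\gamma/\epsilon}$ and contradicting local boundedness.

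For correctness write $\tau(p,\gamma,p')$ for the weight of $\tup{p,\gamma,p'}$ in $\mathcal{A}_{\mathrm{pre}^*}$, and let $\delta_{\mathcal{A}}(p,w,p') = \bigoplus\set{a \mid \atrans{p}{p'}{w}{a}}$ be the behaviour of an automaton $\mathcal{A}$, so that $\tau(p,\gamma,p') = \delta_{\mathcal{A}_{\mathrm{pre}^*}}(p,\gamma,p')$. I would prove the goal $\tau(p,\gamma,p') = \delta_\mathcal{P}(p,\gamma,p')$ by two inclusions, using behaviours on arbitrary words $w$ as machinery; note that $\delta_{\mathcal{A}}$ and $\delta_\mathcal{P}$ both satisfy the same recursion $\delta(p,\gamma w',p') = \bigoplus_{p''}\delta(p,\gamma,p'')\otimes_{\gamma/\epsilon,w'/\epsilon}\delta(p'',w',p')$ (the former by the definition of automaton transitions together with idempotency and distributivity, the latter being displayed before the theorem).

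For the \emph{completeness} inclusion $\delta_\mathcal{P}\sqsubseteq\tau$ I would prove the stronger claim that $a \sqsubseteq \delta_{\mathcal{A}_{\mathrm{pre}^*}}(p,w,p')$ whenever $\atransrule{p}{p'}{w/\epsilon}{a}$, by induction on the size of the derivation. The base cases ($\epsilon/\epsilon$ with weight $1$, and a rule $\ptrans{p}{\gamma}{p'}{\epsilon}{a}$) follow from the initial transition $\atrans{p}{p}{\epsilon}{1}$ and one application of the saturation rule. For the inductive step I would use a normal-form lemma that isolates the rule $\ptrans{p}{\gamma}{q}{v}{a_0}$ applied to the leftmost stack symbol of $w=\gamma u$, leaving a strictly smaller derivation $\atransrule{q}{p'}{vu/\epsilon}{a'}$ with $a = a_0 \otimes_{\gamma/v,vu/\epsilon} a'$; this rests on the decomposition $\gamma u/\epsilon = \gamma/v \cdot vu/\epsilon$ and associativity of $\cdot$. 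The induction hypothesis gives $a' \sqsubseteq \delta_{\mathcal{A}_{\mathrm{pre}^*}}(q,vu,p')$, and splitting $vu$ into $v$ and $u$ together with the saturation \emph{fixpoint} property — at saturation, $\ptrans{p}{\gamma}{q}{v}{a_0}$ with $\atrans{q}{r}{v}{b}$ forces $a_0 \otimes_{\gamma/v,v/\epsilon} b \sqsubseteq \tau(p,\gamma,r)$ — yields $a \sqsubseteq \delta_{\mathcal{A}_{\mathrm{pre}^*}}(p,\gamma u,p')$ after distributing $\otimes$ over the finite joins over intermediate states. Taking the join over all derivations and specializing to $w=\gamma$ gives $\delta_\mathcal{P}(p,\gamma,p') \sqsubseteq \tau(p,\gamma,p')$.

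For the \emph{soundness} inclusion $\tau\sqsubseteq\delta_\mathcal{P}$ I would maintain, through the saturation stages $\mathcal{A}_t$, the single-symbol invariant $\delta_{\mathcal{A}_t}(p,\gamma,p'')\sqsubseteq\delta_\mathcal{P}(p,\gamma,p'')$; its multi-word version follows for free from the matching recursions and monotonicity of $\otimes$. It holds for $\mathcal{A}_0$ (all weights $0$), and a step only changes the weight of one triple by joining in $a = a_1 \otimes_{\gamma/w,w/\epsilon} a_2$, where $\ptrans{p}{\gamma}{p'}{w}{a_1}$ and $\atrans{p'}{p''}{w}{a_2}$ in $\mathcal{A}_t$. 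Using $a_2 \sqsubseteq \delta_\mathcal{P}(p',w,p'')$, monotonicity, and the composition $\gamma/w \cdot w/\epsilon = \gamma/\epsilon$ (so that every $\atransrule{p'}{p''}{w/\epsilon}{b}$ extends to $\atransrule{p}{p''}{\gamma/\epsilon}{a_1\otimes b}$), one obtains $a \sqsubseteq \delta_\mathcal{P}(p,\gamma,p'')$. The main obstacle here is justifying that $\otimes$ distributes over the \emph{a priori infinite} join defining $\delta_\mathcal{P}(p',w,p'')$: this is exactly where local boundedness is essential, since the ascending chain condition in each bounded $D_{\gamma/\epsilon}$ forces the single-symbol joins to be attained by a finite sub-join, and the finite recursion for $\delta_\mathcal{P}$ over longer words then reduces all the joins in question to finite ones, on which the stated binary distributivity suffices. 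Combining the two inclusions at $w=\gamma$ gives $\tau(p,\gamma,p') = \delta_\mathcal{P}(p,\gamma,p')$, which is the claim.
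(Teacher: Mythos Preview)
Your proposal is correct and follows essentially the same architecture as the paper: termination by local boundedness plus finiteness of triples, completeness by induction on pushdown derivations after rewriting them into the ``first rule on the leftmost symbol, then the rest'' normal form (exactly the paper's reformulated inductive definition of $\atransrule{p}{p'}{w/\epsilon}{a}$), and soundness by induction on the saturation stage $\mathcal{A}_i$. The only differences are presentational: you phrase completeness in terms of the behaviour $\delta_{\mathcal{A}_{\mathrm{pre}^*}}$ rather than a single witnessing transition $a'$, and you make explicit two points the paper leaves implicit, namely the lifting of the single-symbol soundness invariant to words via the matching recursions, and the reduction of the a~priori infinite join $\delta_\mathcal{P}$ to a finite one through local boundedness.
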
\smallskip

\noindent As a corollary, we have $\atransx{p}{p'}{w}{a}{\mathcal{A}_{\mathrm{pre}^*}}$ for $a = \delta_{\mathcal{P}}(p,w,p')$. 
Before the proof of the theorem, we illustrate the saturation procedure by an example.

\begin{exa} \label{exa:height}
The minimum height of transitions between two configurations can be
computed by the indexed semiring of Example~\ref{exa:hweight}.
Let $\mathcal{P} = \tup{P, \Gamma, \Delta}$ be an ordinary pushdown system.
For a computation $\mathcal{C}: \tup{p_1,w_1} \Longrightarrow \tup{p_2,w_2} \Longrightarrow \cdots \Longrightarrow \tup{p_n,w_n}$ of $\mathcal{P}$, the height of $\mathcal{C}$ is defined by $\sheight{\mathcal{C}} = \mathrm{max}_{1\le i \le n}|w_i|$.
We then consider the minimum height of computations between two 
configurations.

The minimum height can be determined by the reachability analysis
of the weighted pushdown system $\mathcal{P}' = \tup{P, \Gamma, \Delta'}$
where $\Delta'$ is given by:
$\tup{p,\gamma,p',w, \mathrm{max}(1,|w|)} \in \Delta'$ if
$\tup{p,\gamma,p',w} \in \Delta$. Then, we have the following transitions in $\mathcal{P}'$.

\begin{itemize}
\item For a transition with no real moves, $\utransrule{\tup{p,w}}{\tup{p,w}}{\extop{\epsilon/\epsilon,w/w}(0)}{\mathcal{P}'}$ where $\extop{\epsilon/\epsilon,w/w}(0) = |w|$. 
\item For a one-step transition for $\tup{p_1,\gamma,p_2,w} \in \Delta$, we have
\[ \utransrule{\tup{p_1,\gamma w'}}{\tup{p_2,w w'}}{\extop{\gamma/w,\gamma w'/ww'}(\mathrm{max}(1,|w|))}{\mathcal{P}'} \]
where
$\extop{\gamma/w,\gamma w'/ww'}(\mathrm{max}(1,|w|) = \mathrm{max}(1,|w|) + |w'| =
\mathrm{max}(|\gamma w'|,|ww'|)$.
\item 
For $\utransrule{\tup{p_1,w_1}}{\tup{p_2,w_2}}{n_1}{\mathcal{P}'}$
and $\utransrule{\tup{p_2,w_2}}{\tup{p_3,w_3}}{n_2}{\mathcal{P}'}$, 
we have $\utransrule{\tup{p_1,w_1}}{\tup{p_3,w_3}}{\mathrm{max}(n_1,n_2)}{\mathcal{P}'}$.
\end{itemize}
Thus, we can compute the minimum height of computations 
by the reachability analysis of $\mathcal{P}'$.

Let us consider the pushdown system $\mathcal{P_{\mathrm{ex}}}$ in Figure~\ref{fig:exa}.
$\mathcal{P_{\mathrm{ex}}}$ is designed so that the following holds.
\[ \begin{array}{rcl}
\tup{p_0,\gamma\gamma^m} \Longrightarrow \tup{p_1,w} & \mbox{iff} &
\mbox{$w = \gamma^{3n+m}$ for some $n > 0$} \\ 
\tup{p_1,w} \Longrightarrow \tup{p_3,\epsilon} & \mbox{iff} &
\mbox{$w = \gamma^{2n}$ for some $n > 0$} \\ 
\end{array} \]
Thus, the minimum height of computations between $\tup{p_0,\gamma}$ and
$\tup{p_3,\epsilon}$ is $6$.

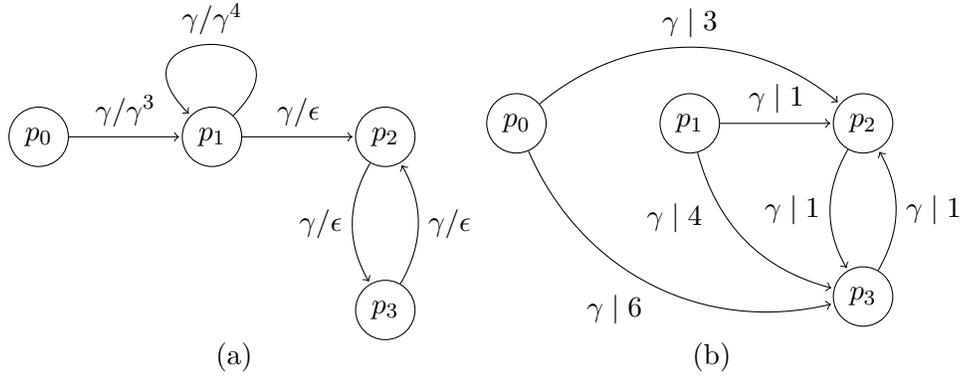
\begin{figure}
\tikzstyle{state without output}=[circle,draw,minimum size=1.5em,every state]

\noindent
\begin{tabular}{cc}
\begin{minipage}{6cm}
\begin{tikzpicture}[shorten >=1pt,node distance=1.5cm,auto]
\node[state]         (p0)               {$p_0$};
\node[state]         (p1) [right=of p0] {$p_1$};
\node[state]         (p2) [right=of p1] {$p_2$};
\node[state]         (p3) [below=of p2] {$p_3$};
\path[->] (p0) edge node {$\gamma/\gamma^3$} (p1)
          (p1) edge [loop] node [swap] {$\gamma/\gamma^4$} (p1)
          (p1) edge node {$\gamma/\epsilon$} (p2)
          (p2) edge [bend right] node [swap] {$\gamma/\epsilon$} (p3)
          (p3) edge [bend right] node [swap] {$\gamma/\epsilon$} (p2);
\end{tikzpicture}
\end{minipage}
&
\begin{minipage}{6cm}
\begin{tikzpicture}[shorten >=1pt,node distance=1.5cm,auto]
\node[state]         (p0)               {$p_0$};
\node[state]         (p1) [right=of p0] {$p_1$};
\node[state]         (p2) [right=of p1] {$p_2$};
\node[state]         (p3) [below=of p2] {$p_3$};
\path[->] 
          
          (p1) edge node {$\gamma \mid 1$} (p2)
          (p2) edge [bend right] node [swap] {$\gamma \mid 1$} (p3)
          (p3) edge [bend right] node [swap] {$\gamma \mid 1$} (p2)

          (p1) edge [bend right] node [pos=0.2,swap] {$\gamma \mid 4$} (p3)
          (p0) edge [bend left=40] node {$\gamma \mid 3$} (p2)
          (p0) edge [bend right=40] node [swap] {$\gamma \mid 6$} (p3)

;
\end{tikzpicture}
\end{minipage} \\
(a) & (b)
\end{tabular}
\caption{(a) pushdown system $\mathcal{P_{\mathrm{ex}}}$. \hspace{.5cm} (b) weighted  automaton $\mathcal{A}_{\mathrm{pre}^*}$ of $\mathcal{P_{\mathrm{ex}}}$.} 
\label{fig:exa}
\end{figure}

Let us determine this by the reachability analysis of $\mathcal{P}_{\mathrm{ex}}'$.
We apply the saturation procedure to $\mathcal{P}_{\mathrm{ex}}'$.  
\begin{enumerate}
\item From $\ptrans{p_1}{\gamma}{p_2}{\epsilon}{1}$ and $\atrans{p_2}{p_2}{\epsilon}{0}$, we add $\atrans{p_1}{p_2}{\gamma}{a_1}$ where 
$a_1 = 1 \otimes_{\gamma/\epsilon,\epsilon/\epsilon} 0 = \mathrm{max}(1,0) = 1$.
Similarly, we add $\atrans{p_2}{p_3}{\gamma}{1}$ and $\atrans{p_3}{p_2}{\gamma}{1}$.
\item From $\atrans{p_1}{p_2}{\gamma}{1}$ and $\atrans{p_2}{p_3}{\gamma}{1}$,
we have $\atrans{p_1}{p_3}{\gamma^2}{a_2}$ where 
$a_2 = 1 \otimes_{\gamma/\epsilon,\gamma/\epsilon} 1 = \mathrm{max}(1+1,1) = 2$.
Similarly, we have $\atrans{p_1}{p_2}{\gamma^3}{3}$.

Then, from $\ptrans{p_0}{\gamma}{p_1}{\gamma/\gamma^3}{3}$ and $\atrans{p_1}{p_2}{\gamma^3}{3}$, we add $\atrans{p_0}{p_2}{\gamma}{3}$. 

\item The other two transitions are added in the same manner.
\end{enumerate}

The transition $\atrans{p_0}{p_3}{\gamma}{6}$ in $\mathcal{A}_{\mathrm{pre}^*}$ 
corresponds to the following computation of $\mathcal{P_{\mathrm{ex}}}$.
\[ \tup{p_0,\gamma} \Longrightarrow \tup{p_1,\gamma^3} 
\Longrightarrow \tup{p_1,\gamma^6} 
\Longrightarrow \cdots \Longrightarrow \tup{p_3,\epsilon}\eqno{\qEd}
\]
\end{exa}\medskip

\noindent The theorem is proved from the following two lemmas.

\begin{lem}
If $\atransrulex{p}{p'}{w/\epsilon}{a}{\mathcal{P}}$,
then 
$\atransx{p}{p'}{w}{a'}{\mathcal{A}_{\mathrm{pre}^*}}$ and $a \sqsubseteq_{w/\epsilon} a'$ for some $a'$.
\end{lem}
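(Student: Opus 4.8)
The plan is to prove the statement by induction on the derivation of $\atransrule{p}{p'}{w/\epsilon}{a}$ (the transition relation of $\mathcal{P}$ read as an automaton over $\mathcal{M}_\Gamma$). The obstruction to a naive induction is the composition rule, which can build the pop-only signature $w/\epsilon$ from a product $\sigma_1\cdot\sigma_2 = w/\epsilon$ whose factors \emph{push} symbols that are only popped later; such a factor has the form $w_1/w_1'$ with $w_1'\neq\epsilon$ and hence does not correspond to any path of $\mathcal{A}_{\mathrm{pre}^*}$, whose transitions are indexed only by signatures in $\set{u/\epsilon \mid u \in \Gamma^*}$. To make the induction go through I would strengthen the statement so that it carries an automaton path reading the pushed part as a continuation. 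Concretely, I would prove: for every derivation of $\atransrule{p}{p'}{w/v}{a}$ and every $\atransx{p'}{q}{v}{b}{\mathcal{A}_{\mathrm{pre}^*}}$, there is a $c$ with $\atransx{p}{q}{w}{c}{\mathcal{A}_{\mathrm{pre}^*}}$ and $a \otimes_{w/v,v/\epsilon} b \sqsubseteq_{w/\epsilon} c$. The lemma is the instance $v=\epsilon$, $q=p'$, $b=1$, using that $1$ is neutral for $\otimes$.

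The base cases are direct. For $\atransrule{p}{p}{\epsilon/\epsilon}{1}$ a path reading $\epsilon$ forces $q=p$ and $b=1$ (the only $\epsilon$-transitions are the identity ones), so $c=1$ works. For a single rule $\ptrans{p}{\gamma}{p'}{v}{a}$ the claim is exactly the content of the saturation rule: given the path $\atransx{p'}{q}{v}{b}{\mathcal{A}_{\mathrm{pre}^*}}$, the rule would add a transition $\atrans{p}{q}{\gamma}{a \otimes_{\gamma/v,v/\epsilon} b}$, and since $\mathcal{A}_{\mathrm{pre}^*}$ is a fixpoint of that rule, the weight $c$ of the single-symbol transition $\atransx{p}{q}{\gamma}{c}{\mathcal{A}_{\mathrm{pre}^*}}$ satisfies $a \otimes_{\gamma/v,v/\epsilon} b \sqsubseteq_{\gamma/\epsilon} c$. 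Here $w=\gamma$, so this is precisely the required conclusion.

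The heart of the argument is the composition case $\sigma_1\cdot\sigma_2 = w/v$ with premises $\atransrule{p}{p''}{\sigma_1}{a_1}$, $\atransrule{p''}{p'}{\sigma_2}{a_2}$ and $a = a_1 \otimes_{\sigma_1,\sigma_2} a_2$. I would first read off from the definition of $\cdot$ the two possible shapes of the product: either the pushed word of $\sigma_1$ extends past the popped word of $\sigma_2$, or conversely; in both, the continuation word $v$ together with the pushed words of $\sigma_1$ and $\sigma_2$ fit together so that the given path reading $v$ can be cut and recombined. In each shape I would split the given path at the appropriate boundary, apply the induction hypothesis to $\sigma_2$ with the sub-path reading the push of $\sigma_2$ to obtain a path reading the pop of $\sigma_2$, recompose it with the remaining sub-path so that the result reads exactly the pushed word of $\sigma_1$, and finally apply the induction hypothesis to $\sigma_1$ with that path to obtain the desired path reading $w$.

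Two auxiliary facts are needed, and I expect the path manipulation to be the main obstacle to get right. First, a decomposition property of the automaton relation: $\atransx{s}{t}{uv}{c}{\mathcal{A}_{\mathrm{pre}^*}}$ implies the existence of $r,c_1,c_2$ with $\atransx{s}{r}{u}{c_1}{\mathcal{A}_{\mathrm{pre}^*}}$, $\atransx{r}{t}{v}{c_2}{\mathcal{A}_{\mathrm{pre}^*}}$ and $c = c_1 \otimes_{u/\epsilon,v/\epsilon} c_2$. This follows by induction on the derivation of the composite transition, comparing $u$ with the split point produced by the last composition step and using associativity of $\otimes$. Second, the final weight comparison: writing $b = b_1 \otimes b_2$ for the split of the continuation, the indexed associativity of $\otimes$ lets me regroup $a \otimes_{w/v,v/\epsilon} b$ as $a_1 \otimes \bigl((a_2 \otimes b_1)\otimes b_2\bigr)$; the induction hypothesis on $\sigma_2$ gives $a_2 \otimes b_1 \sqsubseteq c_2$ and that on $\sigma_1$ gives the final bound, while monotonicity of $\otimes$ with respect to $\sqsubseteq$ (noted earlier from distributivity) propagates these inequalities to conclude $a \otimes_{w/v,v/\epsilon} b \sqsubseteq_{w/\epsilon} c$. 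The bookkeeping of the monoid indices in these regroupings is routine but must be tracked carefully.
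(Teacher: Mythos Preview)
Your proposal is correct, and it takes a genuinely different route from the paper.  The paper does not do induction on the general derivation of $\atransrule{p}{p'}{\sigma}{a}$ at all; instead it first asserts (without proof) that the relation restricted to signatures $w/\epsilon$ admits the equivalent ``first-rule'' inductive definition
\[
\atransrule{p}{p'}{\gamma w/\epsilon}{a}
\quad\text{iff}\quad
\ptrans{p}{\gamma}{p''}{w'}{a_1},\
\atransrule{p''}{p'}{w'w/\epsilon}{a_2},\
a = a_1 \otimes a_2,
\]
and then does a straightforward induction on \emph{that} definition.  Since every subderivation already has a pop-only signature, no strengthening of the statement is needed: one splits the automaton path for $w'w$ at $w'$, applies the saturation rule to the $w'$-part, and composes.

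Your approach instead keeps the original inductive definition and compensates by strengthening the hypothesis with a continuation path reading the pushed word $v$.  This buys self-containedness---you do not rely on the equivalence of the two inductive definitions, which is a genuine (if standard) lemma the paper leaves implicit---at the price of the two-case analysis on the shape of $\sigma_1\cdot\sigma_2$ and the path-decomposition lemma for $\mathcal{A}_{\mathrm{pre}^*}$.  Both arguments use the same ingredients (saturation fixpoint, associativity and monotonicity of $\otimes$, splitting automaton paths); the paper front-loads the work into the reformulated induction principle, while you distribute it across the inductive step.
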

\proof
If we only consider the transition relation of the form $\atransrulex{p}{p'}{w/\epsilon}{a}{\mathcal{P}}$, it has the following equivalent inductive definition.
\begin{itemize}
\item $\atransrule{p}{p}{\epsilon/\epsilon}{1}$.
\item $\atransrule{p}{p'}{\gamma w/\epsilon}{a}$
if $\ptrans{p}{\gamma}{p''}{w'}{a_1}$, 
$\atransrule{p''}{p'}{w'w/\epsilon}{a_2}$, and $a = a_1 \otimes_{\gamma/w',w'w/\epsilon} a_2$.
\end{itemize}
By induction on the derivation of $\atransrulex{p}{p'}{w/\epsilon}{a}{\mathcal{P}}$ in the above form.
\begin{description}
\item[Case] $\atransrulex{p}{p}{\epsilon/\epsilon}{1}{}$. The claim holds because $\atrans{p}{p}{\epsilon}{1}$.

% \item[Case] $\atransrule{p}{p'}{\gamma w/\epsilon}{a}$ is obtained from
% $\ptrans{p}{\gamma}{p''}{w'}{a_1}$, 
% $\atransrule{p''}{p'}{w'w/\epsilon}{a_2}$, and $a = a_1 \otimes_{\gamma/w',w'w/\epsilon} a_2$.
% By induction hypothesis,  $\atrans{p''}{p'}{w'w}{a_2'}$ and 
% $a_2 \sqsubseteq_{w'w/\epsilon} a_2'$. Then, we have 
% \[ \begin{array}{c@{\quad\quad\quad}c}
% \atrans{p''}{p'''}{w'}{a_1'} &
% \atrans{p'''}{p'}{w}{a_2'} 
% \end{array} \]
% and $a_3' = a_1' \otimes_{w_1/\epsilon,w_2/\epsilon} a_2'$ for some $p'''$, $a_1'$, and $a_2'$.

\item[Case] 
$\atransrule{p}{p'}{\gamma w_2/\epsilon}{a}$ is obtained from
%$\atransrulex{p}{p''}{\gamma/w_1}{a_0}{\mathcal{P}}$,
$\ptrans{p}{\gamma}{p''}{w_1}{a_0}$,
$\atransrulex{p''}{p'}{w_1w_2/\epsilon}{a_3}{}$, and 
$a = a_0 \otimes_{\gamma/w_1, w_1w_2/\epsilon} a_3$.
By induction hypothesis,  $\atrans{p''}{p'}{w_1w_2}{a_3'}$ and 
$a_3 \sqsubseteq_{w_1w_2/\epsilon} a_3'$. Then, we have 
\[ \begin{array}{c@{\quad\quad\quad}c}
\atrans{p''}{p'''}{w_1}{a_1'} &
\atrans{p'''}{p'}{w_2}{a_2'} 
\end{array} \]
and $a_3' = a_1' \otimes_{w_1/\epsilon,w_2/\epsilon} a_2'$ for some $p'''$, $a_1'$, and $a_2'$.

Let $\mathcal{A}_{\mathrm{pre}^*} = \tup{P, \Gamma, E_{\mathrm{pre}^*}}$.
By construction of $\mathcal{A}_{\mathrm{pre}^*}$, 
\[ 
a_0 \otimes_{\gamma/w_1, w_1/\epsilon} a_1' \sqsubseteq_{\gamma/\epsilon} E_{\mathrm{pre}^*}(\tup{p,\gamma,p'''})
\]
Hence
\begin{eqnarray*}
a = a_0 \otimes_{\gamma/w_1, w_1w_2/\epsilon} a_3 & \sqsubseteq_{\gamma w_2/\epsilon} &
a_0 \otimes_{\gamma/w_1, w_1w_2/\epsilon} (a_1' \otimes_{w_1/\epsilon,w_2/\epsilon} a_2')  \\
& \sqsubseteq_{\gamma w_2/\epsilon} &
E_{\mathrm{pre}^*}(\tup{p,\gamma,p'''}) \otimes_{\gamma,w_2} a_2' 
\end{eqnarray*}
and 
\[ \atrans{p}{p'}{\gamma w_2}{E_{\mathrm{pre}^*}(\tup{p,\gamma,p'''}) \otimes_{\gamma,w_2} a_2'} \eqno{\qEd}\]
\end{description}\medskip

\noindent Let $\mathcal{A}_{i+1}$ be a weighted automaton obtained by applying the saturation rule once to
$\mathcal{A}_{i}$.
\begin{lem}
If $\atransx{p}{p'}{\gamma}{a}{\mathcal{A}_i}$, then $a \sqsubseteq_{\gamma/\epsilon} \delta_{\mathcal{P}}(p, \gamma, p')$.
\end{lem}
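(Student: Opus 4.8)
The plan is to prove the statement by induction on the number $i$ of applications of the saturation rule, recalling that $\mathcal{A}_i = \tup{P, \Gamma, E_i}$ and that $\atransx{p}{p'}{\gamma}{a}{\mathcal{A}_i}$ for a single symbol $\gamma$ just means $a = E_i(\tup{p,\gamma,p'})$. For the base case $i = 0$ every edge carries weight $E_0(\tup{p,\gamma,p'}) = 0_{\gamma/\epsilon}$, and since $0_{\gamma/\epsilon}$ is the least element of $\sqsubseteq_{\gamma/\epsilon}$ we get $0_{\gamma/\epsilon} \sqsubseteq_{\gamma/\epsilon} \delta_{\mathcal{P}}(p,\gamma,p')$ at once. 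The induction hypothesis then reads: every single-symbol edge weight of $\mathcal{A}_i$ is dominated by the corresponding $\delta_{\mathcal{P}}$.

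Before treating the inductive step, I would first establish a helper that lifts the single-symbol bound to words: if every single-symbol edge of $\mathcal{A}_i$ is bounded and $\atransx{p}{p'}{w}{a}{\mathcal{A}_i}$, then $a \sqsubseteq_{w/\epsilon} \delta_{\mathcal{P}}(p, w, p')$. This I would prove by an inner induction on the length of $w$, using that any nonempty word-transition factors through its first symbol by associativity of $\otimes$. For $w = \epsilon$ the only transition has $p' = p$ and $a = 1$, and $1 \sqsubseteq_{\epsilon/\epsilon} \delta_{\mathcal{P}}(p, \epsilon, p)$ because $\atransrule{p}{p}{\epsilon/\epsilon}{1}$ makes $1$ a summand of $\delta_{\mathcal{P}}(p, \epsilon, p)$. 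For $w = \gamma w'$ the transition decomposes as $\atrans{p}{p''}{\gamma}{b}$ and $\atrans{p''}{p'}{w'}{c}$ with $a = b \otimes_{\gamma/\epsilon, w'/\epsilon} c$; applying the single-symbol hypothesis to $b$, the inner hypothesis to $c$, and monotonicity of $\otimes$ yields $a \sqsubseteq_{\gamma w'/\epsilon} \delta_{\mathcal{P}}(p,\gamma,p'') \otimes_{\gamma/\epsilon,w'/\epsilon} \delta_{\mathcal{P}}(p'',w',p')$, and this product is one of the summands of the recurrence $\delta_{\mathcal{P}}(p, \gamma w', p') = \bigoplus_{p''}(\delta_{\mathcal{P}}(p,\gamma,p'') \otimes_{\gamma/\epsilon,w'/\epsilon} \delta_{\mathcal{P}}(p'',w',p'))$ recorded in the excerpt, so $a \sqsubseteq_{w/\epsilon} \delta_{\mathcal{P}}(p, \gamma w', p')$.

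For the inductive step $\mathcal{A}_i \to \mathcal{A}_{i+1}$, suppose the rule fires on $\ptrans{p}{\gamma}{p_1}{w}{a_1}$ together with $\atrans{p_1}{p_2}{w}{a_2}$ in $\mathcal{A}_i$, giving $E_{i+1}(\tup{p,\gamma,p_2}) = E_i(\tup{p,\gamma,p_2}) \oplus_{\gamma/\epsilon} (a_1 \otimes_{\gamma/w, w/\epsilon} a_2)$; all other edges are unchanged and are covered by the outer hypothesis. The first summand is bounded by the outer hypothesis. For the second, the helper gives $a_2 \sqsubseteq_{w/\epsilon} \delta_{\mathcal{P}}(p_1, w, p_2)$, so by monotonicity $a_1 \otimes_{\gamma/w, w/\epsilon} a_2 \sqsubseteq_{\gamma/\epsilon} a_1 \otimes_{\gamma/w, w/\epsilon} \delta_{\mathcal{P}}(p_1, w, p_2)$. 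Expanding $\delta_{\mathcal{P}}(p_1, w, p_2) = \bigoplus \set{a' \mid \atransrule{p_1}{p_2}{w/\epsilon}{a'}}$ and distributing turns the right-hand side into $\bigoplus \set{a_1 \otimes_{\gamma/w,w/\epsilon} a' \mid \atransrule{p_1}{p_2}{w/\epsilon}{a'}}$; since $\ptrans{p}{\gamma}{p_1}{w}{a_1}$ gives $\atransrule{p}{p_1}{\gamma/w}{a_1}$ and $(\gamma/w)\cdot(w/\epsilon) = \gamma/\epsilon$, each such term witnesses $\atransrule{p}{p_2}{\gamma/\epsilon}{\cdot}$ and is a summand of $\delta_{\mathcal{P}}(p,\gamma,p_2)$, so the whole join is $\sqsubseteq_{\gamma/\epsilon} \delta_{\mathcal{P}}(p,\gamma,p_2)$. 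Both summands of $E_{i+1}(\tup{p,\gamma,p_2})$ therefore lie below $\delta_{\mathcal{P}}(p,\gamma,p_2)$, and because $\sqsubseteq_{\gamma/\epsilon}$ is a join order their sum does too, which closes the induction.

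I expect the main obstacle to be the helper lemma bounding word-transitions, as it is the only place where the single-symbol hypothesis must be propagated through an arbitrarily long automaton path and matched against the recursive definition of $\delta_{\mathcal{P}}$. Getting the indices of $\otimes$ right, verifying the strict compatibility $(\gamma/w)\cdot(w/\epsilon) = \gamma/\epsilon$, and invoking distributivity to commute $a_1 \otimes (-)$ past the finite join defining $\delta_{\mathcal{P}}$ are the delicate points; the remaining steps are routine uses of monotonicity of $\otimes$ and the join-semilattice properties of $\sqsubseteq$.
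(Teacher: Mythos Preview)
Your proof is correct and follows the same inductive scheme as the paper. The one place where you are more explicit is the helper lemma lifting the single-symbol bound to arbitrary words $w$: the paper simply writes ``By induction hypothesis, $a_2 \sqsubseteq_{w/\epsilon} \delta_{\mathcal{P}}(p'', w, p')$'' without isolating this step, whereas you prove it by a separate inner induction on $|w|$ using the recurrence for $\delta_{\mathcal{P}}$. That extra care is justified, since the outer induction hypothesis is stated only for single symbols; otherwise the structure, the use of monotonicity of $\otimes$, and the final join argument match the paper's proof exactly.
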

\begin{proof}
By induction on $i$. 
For $i=0$, the statement trivially holds because
$a = 0_{\gamma/\epsilon}$ for $\atransx{p}{p'}{\gamma}{a}{\mathcal{A}_0}$. 
By assuming the case for $i$, we show the case for $i+1$.
We only consider the case where
$\atransx{p}{p'}{\gamma}{a}{\mathcal{A}_{i+1}}$ is added by the last application
of the saturation rule. Let us assume that $\atransx{p}{p'}{\gamma}{a}{\mathcal{A}_{i+1}}$ is added because of
$\ptrans{p}{\gamma}{p''}{w}{a_1}$,
$\atransx{p''}{p'}{w}{a_2}{\mathcal{A}_i}$, 
$\atransx{p}{p'}{\gamma}{a_0}{\mathcal{A}_i}$, and
$a = a_1 \otimes_{\gamma/w,w/\epsilon} a_2 \oplus_{\gamma/\epsilon} a_0$.

By induction hypothesis,
$a_2 \sqsubseteq_{w/\epsilon} \delta_{\mathcal{P}}(p'', w, p')$ and
$a_0 \sqsubseteq_{\gamma/\epsilon} \delta_{\mathcal{P}}(p, \gamma, p')$.
We also have $a_1 \otimes_{\gamma/w,w/\epsilon} \delta_{\mathcal{P}}(p'', w, p') \sqsubseteq_{\gamma/\epsilon} \delta_{\mathcal{P}}(p,\gamma, p')$
from $\ptrans{p}{\gamma}{p''}{w}{a_1}$. Hence,
$a_1 \otimes_{\gamma/w,w/\epsilon} a_2 \sqsubseteq_{\gamma/\epsilon} a_1 \otimes \delta_{\mathcal{P}}(p'', w, p') \sqsubseteq_{\gamma/\epsilon} \delta_{\mathcal{P}}(p,\gamma, p')$.
Thus, $a \sqsubseteq_{\gamma/\epsilon} \delta_{\mathcal{P}}(p, \gamma, p')$. 
\end{proof}

\subsection{Reachability to a Regular Set of Configurations}
In previous works of the reachability analysis of pushdown systems, it is 
common to consider the reachability problem to a regular set of 
configurations. For a weighted pushdown automaton over an indexed semiring, this problem must be generalized
for a regular set with weight represented by a weighted automaton.

Let us consider an indexed semiring $\mathcal{S}$ over $\mathcal{M}_\Gamma$
and a weighted pushdown system $\mathcal{P}$ over $\mathcal{S}$.
We also consider a weighted automaton $\mathcal{A}$ over the restriction of $\mathcal{S}$
to $\set{w/\epsilon \mid w \in \Gamma^*}$ with the initial states $q_0$ and
the set of final states $F$.
Without loss of generality, we assume that there are no incoming transitions to
$q_0$.  
For a given state $p'$, 
$\mathcal{A}$ represents the set of configurations $\set{\tup{p',w'} \mid \mbox{$w'$ is accepted by $\mathcal{A}$}}$.
Then, the generalized reachability problem to the regular set of configurations
is to compute the following function\footnote{For simplicity, we consider 
the set of configurations whose state is a fixed $p'$. It is easy to extend 
the discussion for the general case.}.
\[ \delta_{\mathcal{P},\mathcal{A}}(p,w,p') = 
\bigoplus_{q \in F} \set{ a \otimes_{\sigma,w'/\epsilon} a' \mid \mbox{$\atransrulex{p}{p'}{\sigma}{a}{\mathcal{P}}$, 
$\atransx{q_0}{q}{w'}{a'}{\mathcal{A}}$, and $\sigma \cdot w'/\epsilon = w/\epsilon$}} \]
This function can be computed by applying the saturation procedure to
the pushdown system $\mathcal{P}'$ obtained by combining $\mathcal{P}$ and
$\mathcal{A}$ with the identification of $p'$ and $q_0$. 
This corresponds to the saturation procedure using $\mathcal{P}$-automata.

The condition $\sigma \cdot w'/\epsilon = w/\epsilon$ above is equivalent to $\sigma \le w/w'$. Furthermore, if the indexed semiring is equipped with the conversion functions $\extop{\sigma_1, \sigma_2}$, we have the following.
\begin{eqnarray*}
\delta_{\mathcal{P},\mathcal{A}}(p,w,p') & =  &
\bigoplus_{q \in F} \set{ a \otimes_{\sigma,w'/\epsilon} a' \mid \mbox{$\atransrulex{p}{p'}{\sigma}{a}{\mathcal{P}}$, 
$\atransx{q_0}{q}{w'}{a'}{\mathcal{A}}$, and $\sigma \cdot w'/\epsilon = w/\epsilon$}} \\
& = &
\bigoplus_{q \in F} \set{ \ext{\sigma,w/w'}{a} \otimes_{w/w',w'/\epsilon} a' \mid \mbox{$\atransrulex{p}{p'}{\sigma}{a}{\mathcal{P}}$, 
$\atransx{q_0}{q}{w'}{a'}{\mathcal{A}}$, and $\sigma \le w/w'$}} \\
& & \mbox{(by Definition~\ref{def:conv} (\ref{conv:prop4}))}   \\
& = & 
\bigoplus_{q \in F} \set{ a \otimes_{w/w',w'/\epsilon} a' \mid \mbox{
$\utransrule{\tup{p,w}}{\tup{p',w'}}{a}{\mathcal{P}}$ 
and $\atransx{q_0}{q}{w'}{a'}{\mathcal{A}}$}} \\
& & \mbox{(by Proposition~\ref{prop:conv})}   \\
\end{eqnarray*}

\noindent The reason why we need to consider a weighted automaton $\mathcal{A}$ instead of just an
automaton is that $D_{w/\epsilon}$ does not have a neutral element on $\otimes$
in general. Thus, we need to consider $a'$ above.

\subsection{Constructing a Semiring from an indexed Semiring over Stack Signatures}
We show that an ordinary semiring can be constructed from a semiring
indexed by stack signatures.
However, the semiring obtained by the construction is
not bounded even for a locally bounded indexed semiring.
Thus, the standard framework of the reachability analysis of weighted
pushdown systems cannot guarantee termination of the saturation procedure.
Although a similar construction appears in~\cite{SuwimonteerabuthPhd},
the definition of $\oplus$ differs from ours and his construction fails to satisfy
the distributivity of $\otimes$ over $\oplus$.

 Let $\mathcal{S}=\tup{\set{D_\sigma}, \set{\oplus_\sigma}, \set{\otimes_{\sigma_1,\sigma_2}}, \set{0_\sigma}, 1_{\mathcal{S}}, \extop{\sigma,\sigma'}}$ be a semiring indexed by the ordered monoid $\mathcal{M}_\Gamma$.
Then, we define a structure $\tup{D, \oplus, \otimes, \bot, 1}$ as follows.
\begin{itemize}
\item  $D = \bigcup_{\sigma \in \mathcal{M}_\Gamma} \set{\tup{\sigma, a} \mid a \in D_\sigma} \cup \set{\bot}$.
\item $1$ is $\tup{\epsilon/\epsilon, 1_{\mathcal{S}}}$.
\item $\oplus$ is defined by $\bot \oplus x = x  = x \oplus \bot$ for all $x \in D$ and
\[ 
\tup{\sigma_1, a} \oplus \tup{\sigma_2, b} = 
\tup{\sigma_1 \sqcup \sigma_2, \ext{\sigma_1,\sigma_1 \sqcup \sigma_2}{a} \oplus_{\sigma_1\sqcup\sigma_2} \ext{\sigma_2,\sigma_1 \sqcup \sigma_2}{b}}.
\]
\item $\otimes$ is defined by $\tup{\sigma_1, a} \otimes \tup{\sigma_2, b} = \tup{\sigma_1 \cdot \sigma_2, a \otimes_{\sigma_1,\sigma_2} b}$ and $x \otimes \bot = \bot = \bot \otimes x$ for all $x \in D$.
\end{itemize}
\begin{thm} \label{thm:semiring}
$\tup{D, \oplus, \otimes, \bot, 1}$ forms a semiring.
\end{thm}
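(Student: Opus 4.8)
The plan is to verify the four semiring axioms for $\tup{D, \oplus, \otimes, \bot, 1}$ directly, with $\bot$ playing the role of both the additive unit and the annihilator. Every clause involving $\bot$ holds immediately from its defining equations $\bot \oplus x = x = x \oplus \bot$ and $x \otimes \bot = \bot = \bot \otimes x$, so the work concentrates on proper elements $\tup{\sigma, a}$. For such elements each identity splits into an \emph{index} coordinate, governed by the monoid/semilattice structure on $\mathcal{M}_\Gamma$, and a \emph{weight} coordinate, governed by the indexed semiring $\mathcal{S}$ together with its conversion functions $\extop{\sigma, \sigma'}$. I would establish each identity by matching the two coordinates separately, noting first that joins and products of elements of $\mathcal{M}_\Gamma$ stay in $\mathcal{M}_\Gamma$ (possibly equal to $\top$), so the defining formulas for $\oplus$ and $\otimes$ always land in a legitimate domain $D_\sigma$.

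First I would show $\tup{D, \oplus, \bot}$ is a commutative monoid. Commutativity of $\oplus$ follows from commutativity of the join $\sqcup$ and of each $\oplus_m$. For associativity, the index coordinate of both groupings of $\tup{\sigma_1,a}\oplus\tup{\sigma_2,b}\oplus\tup{\sigma_3,c}$ is $\sigma_1 \sqcup \sigma_2 \sqcup \sigma_3$ by associativity of $\sqcup$; writing $\tau$ for this join, I would reduce both weight coordinates to the common value
\[ \ext{\sigma_1, \tau}{a} \oplus_{\tau} \ext{\sigma_2, \tau}{b} \oplus_{\tau} \ext{\sigma_3, \tau}{c}. \]
The reduction uses compatibility of the conversions with $\oplus$ (Definition~\ref{def:conv}, clause 3) to distribute $\extop{\sigma_1\sqcup\sigma_2,\,\tau}$ over the inner sum, and functoriality $\extop{\sigma_1\sqcup\sigma_2,\,\tau}\circ\extop{\sigma_1,\,\sigma_1\sqcup\sigma_2} = \extop{\sigma_1,\,\tau}$ (clause 2) to collapse the nested conversions. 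Neutrality of $\bot$ is immediate.

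Next I would check $\tup{D, \otimes, 1}$ is a monoid. Associativity of $\otimes$ matches the index coordinates by associativity of $\cdot$ in $\mathcal{M}_\Gamma$ and the weight coordinates by the stated associativity of the indexed product $\otimes_{m_1,m_2}$ (this remains valid even when a product lands in $D_\top$, since $\top \in \mathcal{M}_\Gamma$). For the unit $1 = \tup{\epsilon/\epsilon, 1_\mathcal{S}}$, since $\epsilon/\epsilon = 1_{\mathcal{M}}$ and $1_\mathcal{S}$ is the neutral element of the indexed product, one gets $\tup{\sigma,a}\otimes 1 = \tup{\sigma, a} = 1 \otimes \tup{\sigma, a}$. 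That $\bot$ is an annihilator holds by the definition of $\otimes$.

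The main obstacle is distributivity, where both coordinates must be reconciled at once. Consider $(\tup{\sigma_1,a}\oplus\tup{\sigma_2,b})\otimes\tup{\sigma_3,c}$ and write $\sigma = \sigma_1 \sqcup \sigma_2$. On the index coordinate the identity $(\sigma_1 \sqcup \sigma_2)\cdot\sigma_3 = (\sigma_1\cdot\sigma_3)\sqcup(\sigma_2\cdot\sigma_3)$ is precisely the distributivity of $\cdot$ over $\sqcup$ in the stack-signature semiring (Lemma~\ref{lem:distrib}). On the weight coordinate I would first apply right-distributivity of $\otimes_{\sigma,\sigma_3}$ over $\oplus_\sigma$, and then rewrite each summand by the key step
\[ \ext{\sigma_1, \sigma}{a} \otimes_{\sigma, \sigma_3} c = \ext{\sigma_1\cdot\sigma_3,\ \sigma\cdot\sigma_3}{a \otimes_{\sigma_1, \sigma_3} c}, \]
which is the conversion--product compatibility (Definition~\ref{def:conv}, clause 4) instantiated with the trivial conversion $\ext{\sigma_3,\sigma_3}{c} = c$ on the unchanged factor (clause 1). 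This yields exactly the weight of the right-hand side $(\tup{\sigma_1,a}\otimes\tup{\sigma_3,c})\oplus(\tup{\sigma_2,b}\otimes\tup{\sigma_3,c})$, whose index coordinate $(\sigma_1\cdot\sigma_3)\sqcup(\sigma_2\cdot\sigma_3)$ coincides with $\sigma\cdot\sigma_3$ by the same lemma. The opposite distributive law is handled symmetrically, using left-distributivity of the indexed product and clause 4 with the identity conversion on the left factor. I expect this coordinated use of Lemma~\ref{lem:distrib} and clause 4 to be the delicate point, since it is exactly here that compatibility of the conversions with the monoid multiplication is required; the remaining axioms are routine coordinatewise checks.
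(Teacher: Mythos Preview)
Your proposal is correct and follows essentially the same approach as the paper's proof: the paper only writes out associativity of $\oplus$ and one distributive law, but uses exactly the reductions you describe (functoriality and $\oplus$-compatibility of the conversions for associativity; indexed distributivity followed by clause~4 of Definition~\ref{def:conv} for the distributive law, with Lemma~\ref{lem:distrib} supplying the index-coordinate identity $(\sigma_1\sqcup\sigma_2)\cdot\sigma_3=(\sigma_1\cdot\sigma_3)\sqcup(\sigma_2\cdot\sigma_3)$). Your write-up is in fact more complete than the paper's, which leaves the monoid axioms for $\otimes$, the $\bot$-cases, and the other distributive law implicit.
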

\proof
We show the associativity of $\oplus$ 
and the distributivity of $\otimes$ over $\oplus$.
\begin{itemize}
\item Associativity of $\oplus$. Let $\sigma = \sigma_1 \sqcup \sigma_2\sqcup\sigma_3$.
\begin{eqnarray*}
(\tup{\sigma_1, a} \oplus \tup{\sigma_2, b}) \oplus \tup{\sigma_3, c} & = &
\tup{\sigma_1 \sqcup \sigma_2, \ext{\sigma_1,\sigma_1 \sqcup \sigma_2}{a} \,\oplus_{\sigma_1\sqcup\sigma_2} \ext{\sigma_2,\sigma_1 \sqcup \sigma_2}{b}} \oplus \tup{\sigma_3, c} \\
& = & \tup{\sigma, \ext{\sigma_1,\sigma}{a} \, \oplus_{\sigma} 
\ext{\sigma_2,\sigma}{b} \, \oplus_{\sigma} \ext{\sigma_3,\sigma}{b}} \\
& = &  \tup{\sigma_1, a} \oplus (\tup{\sigma_2, b} \oplus \tup{\sigma_3, c}) 
\end{eqnarray*}

\item $\otimes$ distributes over $\oplus$. 
Let $\sigma = \sigma_1\cdot \sigma_3 \sqcup \sigma_2\cdot\sigma_3$.
\begin{eqnarray*}
(\tup{\sigma_1, a} \oplus \tup{\sigma_2, b}) \otimes \tup{\sigma_3, c} & = &
\tup{\sigma_1 \sqcup \sigma_2, \ext{\sigma_1,\sigma_1 \sqcup \sigma_2}{a} \,\oplus_{\sigma_1\sqcup\sigma_2} \ext{\sigma_2,\sigma_1 \sqcup \sigma_2}{b}} \otimes \tup{\sigma_3, c} \\
& = & \tup{\sigma, \ext{\sigma_1,\sigma_1\sqcup\sigma_2}{a} \otimes_{\sigma_1\sqcup\sigma_2,\sigma_3} c \, 
    \oplus_{\sigma}
 \ext{\sigma_2,\sigma_1\sqcup\sigma_2}{b} \otimes_{\sigma_1\sqcup\sigma_2,\sigma_3} c}\\
& = &   \tup{\sigma, \ext{\sigma_1\sigma_3,\sigma}{a\otimes_{\sigma_1,\sigma_3}c} 
\, \oplus_{\sigma} 
\ext{\sigma_2\sigma_3,\sigma}{b\otimes_{\sigma_2,\sigma_3}c}} \\
& = &   \tup{\sigma_1\cdot \sigma_3, a\otimes_{\sigma_1,\sigma_3}c} \oplus \tup{\sigma_2\cdot\sigma_3, b \otimes_{\sigma_2,\sigma_3} c}\\
 & = &   (\tup{\sigma_1, a} \otimes \tup{\sigma_3, c}) \oplus
       (\tup{\sigma_2, b} \otimes \tup{\sigma_3, c})\rlap{\hbox to 95 pt{\hfill\qEd}}
\end{eqnarray*}
\end{itemize}\medskip

\noindent The construction also works for any semiring indexed by an 
ordered monoid $\mathcal{M}$ if $\mathcal{M}$ has the join operation $\sqcup$.

Suwimonteerabuth did not consider the partial order on stack signatures and
defined the addition of the semiring $\oplus'$ in the following
manner~\cite{SuwimonteerabuthPhd}:
\[ \begin{array}{l}
\tup{\sigma_1, a} \oplus' \tup{\sigma_2, b} = 
\left\{ \begin{array}{l@{\quad\quad}l}
\tup{\sigma_1, a \oplus_{\sigma_1} b} &
\mbox{if $\sigma_1 =  \sigma_2$} \\
(\top, \bullet) & \mbox{otherwise}
\end{array} \right. \\
\end{array} \]
where we assume $D_{\top} = \set{\bullet}$.
However, $\otimes$ does not distribute over $\oplus'$, and thus his construction fails to
form a semiring.
\[ 
(\tup{\epsilon/\epsilon, a} \oplus' \tup{\gamma/\gamma, b}) \otimes \tup{\gamma/\gamma, c}
=  \tup{\top,\bullet} \otimes \tup{\gamma/\gamma,c} = \tup{\top,\bullet} \]
\begin{eqnarray*} 
(\tup{\epsilon/\epsilon, a} \otimes \tup{\gamma/\gamma, c}) \oplus' (\tup{\gamma/\gamma, b} \otimes \tup{\gamma/\gamma, c}) \hspace*{-1cm} \\
& =  &\tup{\gamma/\gamma, a \otimes_{\epsilon/\epsilon,\gamma/\gamma} c} \oplus'
   \tup{\gamma/\gamma, b \otimes_{\gamma/\gamma,\gamma/\gamma} c} \\
& =  & \tup{\gamma/\gamma, a \otimes_{\epsilon/\epsilon,\gamma/\gamma} c \oplus_{\gamma/\gamma}
b \otimes_{\gamma/\gamma,\gamma/\gamma} c}
\end{eqnarray*}\smallskip\vspace{-4 pt}

\noindent It should be noted that the semiring constructed in Theorem~\ref{thm:semiring}
is not bounded as the following sequence shows.
\[ \tup{\epsilon/\epsilon, a} \sqsubset
\tup{\gamma/\gamma, \ext{\gamma}{a}} \sqsubset
\tup{\gamma\gamma/\gamma\gamma, \ext{\gamma\gamma}{a}} \sqsubset \cdots \]
This is one of the reasons why we refine the formulation of
the reachability analysis of weighted pushdown systems in this paper.

The semiring constructed in Theorem~\ref{thm:semiring} actually has the structure of
 a graded semiring. 
Although a graded structure is usually defined for rings~\cite{algebra}, we apply it to
semirings.
A graded semiring $\tup{D,\oplus,\times,1,0}$ over 
$\mathcal{M}$ is a semiring
where $D = \biguplus_{m \in \mathcal{M}} D_{m}$, $D_m$ is a commutative monoid, and
$D_mD_{m'} \subseteq D_{mm'}$ for all $m, m' \in \mathcal{M}$.
It is clear that the semiring in Theorem~\ref{thm:semiring} is a graded semiring over $\mathcal{M}_\Gamma \cup \set{\bot}$ where 
$D = \biguplus_{\sigma \in \mathcal{M}_\Gamma} D_\sigma' \uplus D_\bot'$, $D_\sigma' = \set{\tup{\sigma, a} \mid a \in D_\sigma}$, and $D_{\bot}' =\set{\bot}$.
Furthermore, $D_\sigma'$ has no infinite ascending chains on
$\sqsubset$ if the indexed semiring is locally bounded.
Thus, it is also possible to present our framework based on graded semirings.

\section{Simplified Structure: Multiplication on Strictly Compatible Signatures} \label{sec:wstructure}
An indexed semiring has a multiplication indexed by two stack signatures.
However, it is often simpler to consider and implement a restricted
multiplication defined only for strictly compatible signatures. 
We show that an indexed semiring over the ordered monoid of stack signatures can be constructed from such a structure.

We introduce \emph{weight structures} that have a restricted multiplication $\odot_{\sigma_1,\sigma_2}$ for strictly compatible $\sigma_1$ and $\sigma_2$.

\begin{defi}
A weight structure $\mathcal{W}$ over a stack alphabet $\Gamma$ is 
$\mytup{\set{D_\sigma}, \set{\oplus_{\sigma}}, \set{\odot_{\sigma_1,\sigma_2}}, \set{0_\sigma}, \set{1_\sigma},
\set{\extop{\sigma,\sigma'}}}$ such that
\begin{itemize}
\item $D_\sigma$ is a set for each proper stack signature $\sigma$.
\item $\tup{D_\sigma, \oplus_{\sigma}, 0_\sigma}$ is a commutative monoid
  for each proper stack signature $\sigma$.
\item $\odot_{\sigma_1,\sigma_2}$ is an associative binary operation of 
$D_{\sigma_1} \times D_{\sigma_2} \to D_{\sigma_1\sigma_2}$ for strictly compatible
signatures $\sigma_1$ and $\sigma_2$.

\item $1_\sigma \in D_\sigma$ is an indexed neutral element for $\epsilon/\epsilon \le \sigma$: $a \, \odot_{\sigma',\sigma} 1_\sigma = a$ and 
$1_\sigma \, \odot_{\sigma,\sigma''} b = b$.

\item $0_\sigma$ is an annihilator 
with respect to $\odot_{\sigma, \sigma'}$:
$0_{\sigma_1} \odot_{\sigma_1, \sigma_2} a = 0_{\sigma_1\sigma_2} = b \, \odot_{\sigma_1,\sigma_2} 0_{\sigma_2}$.

\item $\odot$ distributes over $\oplus$.
\[ \begin{array}{rcl}
(a \oplus_{\sigma_1} b) \odot_{\sigma_1,\sigma_2} c & = & (a  \odot_{\sigma_1,\sigma_2} c) \oplus_{\sigma_1\sigma_2} (b  \odot_{\sigma_1,\sigma_2} c)  \\
a \odot_{\sigma_1,\sigma_2} (b \oplus_{\sigma_2} c)  & = & (a  \odot_{\sigma_1,\sigma_2} b) \oplus_{\sigma_1\sigma_2} (a  \odot_{\sigma_1,\sigma_2} c)  
\end{array} \]

\item $\extop{\sigma,\sigma'}$ is a conversion function of $D_\sigma \to D_{\sigma'}$ for $\sigma \le \sigma'$ such that
\begin{itemize}
\item $\extop{\sigma,\sigma} = \mathrm{id}$ and $\extop{\sigma,\sigma''} =  \extop{\sigma',\sigma''} \circ \extop{\sigma,\sigma'}$ for all $\sigma \le \sigma' \le \sigma''$.
\item $\ext{\sigma,\sigma'}{0_\sigma} = 0_{\sigma'}$ and
$\ext{\sigma,\sigma'}{a \oplus b} = \ext{\sigma,\sigma'}{a} \oplus \ext{\sigma,\sigma'}{b}$
\item $\ext{\sigma_1 \cdot \sigma_2,\sigma_1' \cdot \sigma_2'}{a \odot b} = \ext{\sigma_1,\sigma_1'}{a} \odot \ext{\sigma_2,\sigma_2'}{b}$ for 
$\sigma_1 \le \sigma_1'$, $\sigma_2 \le \sigma_2'$, $\sigma_1$ and $\sigma_2$ are strictly compatible, and $\sigma_1'$ and $\sigma_2'$ are strictly compatible.
\item $\ext{\sigma,\sigma'}{1_\sigma} = 1_{\sigma'}$ for $\epsilon/\epsilon \le \sigma \le \sigma'$. 
\end{itemize}

\end{itemize}

\end{defi}

\noindent We show that the multiplication of an indexed semiring over $\mathcal{M}_\Gamma$
can be obtained from that of a weight structure.
Let $\set{D'_\sigma}$ be a family of $\set{D_\sigma} \cup \set{D_\top}$ where
$D_\top = \set{\bullet}$.
Then, the multiplication on $D'_\sigma$ is defined as follows.
\[ x \, \otimes_{\sigma_1,  \sigma_2} y \! = \!
 \left\{ \! \begin{array}{l@{\quad\quad}l}
\ext{\sigma_1,\sigma_1'}{x}  \odot_{\sigma_1',\sigma_2} y & \mbox{if $\sigma_1 \le \sigma_1'$ and $\scomp{\sigma_1'}{\sigma_2}$} \\
x \, \odot_{\sigma_1,\sigma_2'} \ext{\sigma_2,\sigma_2'}{y}  & \mbox{if $\sigma_2 \le \sigma_2'$ and
$\scomp{\sigma_1}{\sigma_2'}$} \\
\bullet & \mbox{otherwise}
             \end{array}\right.
\]
The other operations are extended for $D_\top$ in a straightforward manner.
Then, we obtain a semiring indexed by the ordered monoid $\mathcal{M}_\Gamma$.
\begin{thm}
Let $\tup{\set{D_\sigma}, \set{\oplus_{\sigma}}, \set{\odot_{\sigma_1,\sigma_2}}, \set{0_\sigma}, \set{1_\sigma}, \set{\extop{\sigma,\sigma'}}}$ be a weight structure.
Then, $\tup{\set{D_\sigma'}, \set{\oplus_{\sigma}}, \set{\otimes_{\sigma_1,\sigma_2}}, \set{0_\sigma}, 1_{\epsilon/\epsilon},\set{\extop{\sigma,\sigma'}}}$ is an indexed semiring over an ordered
monoid $\mathcal{M}_\Gamma$.
\end{thm}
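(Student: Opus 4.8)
The plan is to verify each axiom of an indexed semiring for the constructed structure $\tup{\set{D_\sigma'}, \set{\oplus_\sigma}, \set{\otimes_{\sigma_1,\sigma_2}}, \set{0_\sigma}, 1_{\epsilon/\epsilon}, \set{\extop{\sigma,\sigma'}}}$. The commutative monoid structure of each $\tup{D_\sigma, \oplus_\sigma, 0_\sigma}$, the distributivity of $\otimes$ over $\oplus$, the annihilator property of $0_\sigma$, and the conversion-function axioms (Definition~\ref{def:conv}) are all either inherited directly from the weight structure $\mathcal{W}$ or follow by short calculations; I would dispatch these first, extending everything to $D_\top = \set{\bullet}$ in the evident way so that $\top$ behaves as an absorbing index. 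The genuinely new content is the definition of the general multiplication $\otimes_{\sigma_1,\sigma_2}$ from the restricted multiplication $\odot$ on strictly compatible signatures, so the bulk of the work concerns $\otimes$.

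First I would check that $\otimes_{\sigma_1,\sigma_2}$ is well-defined, i.e. that the two clauses of its definition agree whenever both apply. If $\sigma_1 \le \sigma_1'$ with $\scomp{\sigma_1'}{\sigma_2}$ and simultaneously $\sigma_2 \le \sigma_2'$ with $\scomp{\sigma_1}{\sigma_2'}$, then by Lemma~\ref{lem:ordered}-style monotonicity one can choose a common refinement and use the compatibility laws from the weight structure (the axiom $\ext{\sigma_1\cdot\sigma_2,\sigma_1'\cdot\sigma_2'}{a \odot b} = \ext{\sigma_1,\sigma_1'}{a} \odot \ext{\sigma_2,\sigma_2'}{b}$) to show both expressions equal a single value $\ext{\sigma_1,\sigma_1''}{x} \odot \ext{\sigma_2,\sigma_2''}{y}$ with $\scomp{\sigma_1''}{\sigma_2''}$. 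Concretely, compatibility of $\sigma_1$ and $\sigma_2$ means (by the lemma relating compatibility to strict compatibility) that exactly one of the two sides raises one argument to meet the other, and well-definedness reduces to checking the overlap case where the two liftings coincide.

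The neutral-element law $a \otimes_{\sigma,1_\mathcal{M}} 1_{\epsilon/\epsilon} = a = 1_{\epsilon/\epsilon} \otimes_{1_\mathcal{M},\sigma} a$ follows because $\epsilon/\epsilon$ is the monoid unit and $\epsilon/\epsilon \le \sigma$ always holds, so the relevant clause of $\otimes$ lifts $1_{\epsilon/\epsilon}$ to $1_\sigma$ via $\ext{\epsilon/\epsilon,\sigma}{1_{\epsilon/\epsilon}} = 1_\sigma$ and then applies the weight-structure identity $a \odot_{\sigma',\sigma} 1_\sigma = a$. The main obstacle is associativity of $\otimes$, namely $(a \otimes_{\sigma_1,\sigma_2} b) \otimes_{\sigma_1\sigma_2,\sigma_3} c = a \otimes_{\sigma_1,\sigma_2\sigma_3} (b \otimes_{\sigma_2,\sigma_3} c)$. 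Here both sides perform two liftings interleaved with two restricted multiplications, and I must track how the implicit lifted signatures chosen in the inner $\otimes$ propagate to the outer one. I would handle this by a case analysis on the relative overlaps of $\sigma_1$, $\sigma_2$, $\sigma_3$ (which of the consecutive pairs have the left or the right argument lifted), lift all three to a common strictly-compatible triple so that every $\otimes$ collapses to the associative $\odot$, and then invoke the compatibility axiom relating $\extop{}$ to $\odot$ together with $\extop{\sigma',\sigma''} \circ \extop{\sigma,\sigma'} = \extop{\sigma,\sigma''}$ to show both sides equal $\ext{}{a} \odot (\ext{}{b} \odot \ext{}{c}) = (\ext{}{a} \odot \ext{}{b}) \odot \ext{}{c}$. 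The $\top$ cases are routine since any incompatibility forces the result to $\bullet$ on both sides. I expect this associativity bookkeeping to be the only step requiring genuine care, and it mirrors the detailed case analysis already used for associativity of $\cdot$ in Appendix~\ref{appendix:assoc}.
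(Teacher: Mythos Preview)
Your proposal is essentially correct and follows the same approach as the paper: the paper also notes that most axioms carry over directly from the weight structure and isolates the associativity of $\otimes$ (Lemma~\ref{lem:stackstucture1}) as a key property, proving it by the same case analysis on how $\sigma_1,\sigma_2,\sigma_3$ overlap (using the five cases of Lemma~\ref{lem:cases}) and reducing each case to associativity of $\odot$ via the compatibility of $\extop{}$ with $\odot$.

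One point worth flagging: you classify the conversion-function axioms of Definition~\ref{def:conv} as ``short calculations,'' but axiom~(\ref{conv:prop4}) there, namely $\ext{\sigma_1\sigma_2,\sigma_1'\sigma_2'}{a \otimes b} = \ext{\sigma_1,\sigma_1'}{a} \otimes \ext{\sigma_2,\sigma_2'}{b}$, is \emph{not} inherited from the weight structure, since the weight-structure axiom only covers strictly compatible pairs while here $\sigma_1,\sigma_2$ are merely compatible. The paper treats this as the second key lemma (Lemma~\ref{lem:stackstucture2}) and proves it by its own case analysis of the same flavour. The technique you describe for well-definedness and associativity is exactly what is needed, so this is not a gap in method, only in emphasis; just be aware that this axiom requires roughly as much bookkeeping as associativity.
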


Two key properties of the indexed semiring are proved by the following lemmas.
The other properties are easily proved from the corresponding properties of
a weight structure.

\begin{lem} \label{lem:stackstucture1}
$(a \otimes_{\sigma_1,\sigma_2} b) \otimes_{\sigma_1\sigma_2,\sigma_3} c = 
a \otimes_{\sigma_1,\sigma_2\sigma_3} (b \otimes_{\sigma_2,\sigma_3} c)$.
\end{lem}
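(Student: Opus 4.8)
The plan is to prove the associativity of $\otimes$ by reducing it to the associativity of the underlying restricted multiplication $\odot$, which is guaranteed by the weight structure axioms. The difficulty is that $\otimes_{\sigma_1,\sigma_2}$ is defined by a case split: to combine $a \in D_{\sigma_1}$ and $b \in D_{\sigma_2}$ when $\sigma_1$ and $\sigma_2$ are merely compatible (not strictly compatible), we first lift one argument along a conversion function $\extop{}$ so that the two signatures become strictly compatible, and only then apply $\odot$. Thus both sides of the claimed identity expand into expressions built from $\odot$ and $\extop{}$, and the real work is showing these expanded forms coincide.

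First I would normalize the situation using the lemmas from Section~\ref{sec:sig}. By the lemma characterizing compatibility via strict compatibility, whenever $\sigma_1 \cdot \sigma_2 \neq \top$ there exist liftings $\sigma_1 \le \sigma_1'$ (or $\sigma_2 \le \sigma_2'$) making the pair strictly compatible, and by the subsequent lemma the value of $\cdot$ is unchanged under such lifting. So I would first dispose of the degenerate cases where some product equals $\top$: in those cases the relevant entries collapse to $D_\top = \set{\bullet}$, and both sides reduce to $\bullet$, so the identity holds trivially. The substantive case is when all the products $\sigma_1\sigma_2$, $\sigma_2\sigma_3$, $\sigma_1\sigma_2\sigma_3$ are proper signatures.

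For that case, the strategy is to choose a single common lifting against which all three of $a$, $b$, $c$ can be expressed with strictly compatible indices, reducing both associativity groupings to one canonical $\odot$-expression. Concretely, I would pick liftings $\sigma_1 \le \tau_1$, $\sigma_2 \le \tau_2$, $\sigma_3 \le \tau_3$ with $\tau_1 \parallel \tau_2$, $\tau_2 \parallel \tau_3$, and $\tau_1\tau_2 \parallel \tau_3$, $\tau_1 \parallel \tau_2\tau_3$ all holding simultaneously; such a common lifting exists because compatibility of the signatures forces the pushed/popped words to align on a common extension. Then I would unfold the definition of $\otimes$ on each side into $\odot$ applied to converted arguments $\ext{\sigma_i,\tau_i}{\cdot}$. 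At this point the key weight structure axioms take over: the conversion-multiplication compatibility $\ext{\sigma_1\cdot\sigma_2,\,\sigma_1'\cdot\sigma_2'}{a \odot b} = \ext{\sigma_1,\sigma_1'}{a} \odot \ext{\sigma_2,\sigma_2'}{b}$ lets me push conversions through products so that the indices match, and the functoriality $\extop{\sigma,\sigma''} = \extop{\sigma',\sigma''}\circ\extop{\sigma,\sigma'}$ lets me compose the liftings coherently. After these rewrites, both sides become the \emph{same} expression of the form $\ext{\sigma_1,\tau_1}{a} \odot \ext{\sigma_2,\tau_2}{b} \odot \ext{\sigma_3,\tau_3}{c}$, and associativity of $\odot$ from the weight structure closes the proof.

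The main obstacle I anticipate is the bookkeeping of \emph{which} argument gets lifted in each instance of $\otimes$, since the definition lifts the left argument when $\sigma_1 \le \sigma_1'$ and the right when $\sigma_2 \le \sigma_2'$, and these choices may differ between the inner and outer multiplications on the two sides of the equation. The cleanest way to control this is to commit to a single common refinement $\tau_1,\tau_2,\tau_3$ up front and verify that every occurrence of $\otimes$ can be computed against that refinement, invoking the lemma that lifting does not change the value of $\cdot$ to justify that the chosen liftings are consistent with the actual case selected by the definition. Once that alignment is established, the remaining manipulations are purely mechanical applications of the weight structure axioms, so I would expect to present them compactly rather than exhaustively.
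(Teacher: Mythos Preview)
Your route is genuinely different from the paper's. The paper proceeds by direct case analysis on the compatibility pattern of $\sigma_1,\sigma_2,\sigma_3$ using Lemma~\ref{lem:cases} (five cases, three written out and two dismissed by symmetry), in each case unfolding the definition of $\otimes$ with the specific liftings that case supplies and chaining the weight-structure axioms by hand. Your single-common-refinement idea is more uniform and can be made to work, but it needs more care than your sketch indicates.

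The problem is your claim that after rewriting, both sides literally \emph{become} $\ext{\sigma_1,\tau_1}{a}\odot\ext{\sigma_2,\tau_2}{b}\odot\ext{\sigma_3,\tau_3}{c}$. That expression lives in $D_{\tau_1\tau_2\tau_3}$, and the inner piece $\ext{\sigma_1,\tau_1}{a}\odot\ext{\sigma_2,\tau_2}{b}$ lives in $D_{\tau_1\tau_2}$, which in general is \emph{not} $D_{\sigma_1\sigma_2}$: already in the situation of case~(2) of Lemma~\ref{lem:cases} one has $\sigma_1\sigma_2<\tau_1\tau_2$ strictly for any common refinement with $\tau_1\parallel\tau_2\parallel\tau_3$, so you cannot equate $a\otimes b\in D_{\sigma_1\sigma_2}$ with it directly. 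What does hold uniformly is the auxiliary identity $\ext{\sigma_1\sigma_2,\tau_1\tau_2}{a\otimes_{\sigma_1,\sigma_2} b}=\ext{\sigma_1,\tau_1}{a}\odot_{\tau_1,\tau_2}\ext{\sigma_2,\tau_2}{b}$ whenever $\sigma_i\le\tau_i$ and $\tau_1\parallel\tau_2$ (a short two-case check against the definition of $\otimes$). Applying that twice on each grouping yields equality of $\ext{\sigma_1\sigma_2\sigma_3,\tau_1\tau_2\tau_3}{(a\otimes b)\otimes c}$ and $\ext{\sigma_1\sigma_2\sigma_3,\tau_1\tau_2\tau_3}{a\otimes(b\otimes c)}$. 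To conclude equality of the unconverted expressions you must then choose the refinement so that $\tau_1\tau_2\tau_3=\sigma_1\sigma_2\sigma_3$ exactly, making the outer conversion the identity; this is possible, but it is not automatic from mere existence of a common refinement, and it matters because $\extop{}$ is \emph{not} assumed injective in the weight-structure axioms. With these two points made explicit your argument goes through and is arguably tidier than the paper's explicit five-way split; without them there is a real gap.
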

\proof
We prove the claim by analyzing the cases where $\sigma_1\sigma_2\sigma_3 \ne \top$
by Lemma~\ref{lem:cases}. The proofs of
two cases are omitted because they are symmetric to other cases.
\begin{description}
\item[Case] $\sigma_1 \le \sigma_1'$, $\sigma_3 \le \sigma_3'$, $\scomp{\sigma_1'}{\sigma_2}$, and $\scomp{\sigma_2}{\sigma_3'}$.
\begin{eqnarray*}
(a \otimes_{\sigma_1,\sigma_2} b) \otimes_{\sigma_1\sigma_2,\sigma_3} c 
& = & (\ext{\sigma_1,\sigma_1'}{a} \odot_{\sigma_1',\sigma_2} b)  
\otimes_{\sigma_1\sigma_2,\sigma_3} c  \\
& = & (\ext{\sigma_1,\sigma_1'}{a} \odot_{\sigma_1',\sigma_2} b)  \,\odot_{\sigma_1'\sigma_2,\sigma_3'} \ext{\sigma_3,\sigma_3'}{c}  \\
& = & \ext{\sigma_1,\sigma_1'}{a} \odot_{\sigma_1',\sigma_2\sigma_3'} (b \,\odot_{\sigma_2,\sigma_3'} \ext{\sigma_3,\sigma_3'}{c})  \\
& = & a \otimes_{\sigma_1,\sigma_2\sigma_3} (b \otimes_{\sigma_2,\sigma_3} c) 
\end{eqnarray*}

\item[Case] $\sigma_1 \le \sigma_1'$, $\sigma_2 \le \sigma_2'$, $\scomp{\sigma_1'}{\sigma_2}$, and $\scomp{\sigma_2'}{\sigma_3}$.
We have $\sigma_1' \le \sigma_1''$ and $\scomp{\sigma_1''}{\sigma_2'}$ for some $\sigma_1''$.
\begin{eqnarray*}
(a \otimes_{\sigma_1,\sigma_2} b) \otimes_{\sigma_1\sigma_2,\sigma_3} c 
& = & (\ext{\sigma_1,\sigma_1'}{a} \odot_{\sigma_1',\sigma_2} b)  
\otimes_{\sigma_1\sigma_2,\sigma_3} c  \\
& = & \ext{\sigma_1'\sigma_2,\sigma_1''\sigma_2'}{\ext{\sigma_1,\sigma_1'}{a} \odot_{\sigma_1',\sigma_2} b}  \odot_{\sigma_1''\sigma_2',\sigma_3} c  \\
& = & (\ext{\sigma_1,\sigma_1''}{a} \odot_{\sigma_1'',\sigma_2'} \ext{\sigma_2,\sigma_2'}{b})  \odot_{\sigma_1''\sigma_2',\sigma_3} c  \\
& = & \ext{\sigma_1,\sigma_1''}{a} \odot_{\sigma_1'',\sigma_2'\sigma_3} (\ext{\sigma_2,\sigma_2'}{b}  \odot_{\sigma_2',\sigma_3} c)  \\
& = & a \otimes_{\sigma_1,\sigma_2\sigma_3} (b \otimes_{\sigma_2,\sigma_3} c) 
\end{eqnarray*}

\item[Case] $\sigma_2 \le \sigma_2' \le \sigma_2''$, 
$\scomp{\sigma_1}{\sigma_2'}$, and $\scomp{\sigma_2''}{\sigma_3}$.
We have $\sigma_1 \le \sigma_1''$ and $\scomp{\sigma_1''}{\sigma_2''}$ for some $\sigma_1''$.
\begin{eqnarray*}
(a \otimes_{\sigma_1,\sigma_2} b) \otimes_{\sigma_1\sigma_2,\sigma_3} c 
& = & (a \, \odot_{\sigma_1,\sigma_2'} \ext{\sigma_2,\sigma_2'}{b})  
\otimes_{\sigma_1\sigma_2,\sigma_3} c  \\
& = & \ext{\sigma_1\sigma_2',\sigma_1''\sigma_2''}{a \, \odot_{\sigma_1',\sigma_2} \ext{\sigma_2,\sigma_2'}{b}}  \odot_{\sigma_1''\sigma_2'',\sigma_3} c  \\
& = & (\ext{\sigma_1,\sigma_1''}{a} \odot_{\sigma_1'',\sigma_2''} \ext{\sigma_2,\sigma_2''}{b})  \odot_{\sigma_1''\sigma_2'',\sigma_3} c  \\
& = & \ext{\sigma_1,\sigma_1''}{a} \odot_{\sigma_1'',\sigma_2''\sigma_3} (\ext{\sigma_2,\sigma_2''}{b}  \odot_{\sigma_2'',\sigma_3} c)  \\
& = & a \otimes_{\sigma_1,\sigma_2\sigma_3} (b
      \otimes_{\sigma_2,\sigma_3} c)
\rlap{\hbox to 144 pt{\hfill\qEd}} 
\end{eqnarray*}
\end{description}

\begin{lem} \label{lem:stackstucture2}
If $\sigma_1 \le \sigma_1'$ and $\sigma_1'\cdot \sigma_2 \ne \top$,
then $\ext{\sigma_1 \sigma_2, \sigma_1'\sigma_2}{x \otimes_{\sigma_1,\sigma_2} y} = \ext{\sigma_1,\sigma_1'}{x} \otimes_{\sigma',\sigma_2} y$.
\end{lem}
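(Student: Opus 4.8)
The plan is to unfold both sides through the definition of $\otimes$ in terms of the weight-structure operations $\odot$ and $\extop$, collapse each into a single application of $\odot$, and then invoke two axioms of a weight structure: the composition law $\extop{\sigma',\sigma''}\circ\extop{\sigma,\sigma'} = \extop{\sigma,\sigma''}$ and, crucially, the fact that conversion distributes over $\odot$ on strictly compatible signatures. Before doing so I would first check that the left product $x \otimes_{\sigma_1,\sigma_2} y$ is even defined: since $\sigma_1'\cdot\sigma_2 \ne \top$ the signatures $\sigma_1'$ and $\sigma_2$ are compatible, and a short prefix argument (if $w_2w$ is a prefix of $u_1$ or $u_1$ a prefix of $w_2w$, then $w_2$ and $u_1$ are prefix-comparable) shows $\sigma_1$ and $\sigma_2$ are compatible too, so $\sigma_1\cdot\sigma_2\ne\top$ and $x \otimes_{\sigma_1,\sigma_2} y$ falls under one of the two defining clauses. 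I would then split on the characterization of compatibility of $\sigma_1'$ and $\sigma_2$.

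In the first case, $\sigma_1' \le \sigma_1^*$ and $\scomp{\sigma_1^*}{\sigma_2}$ for some $\sigma_1^*$. Then $\sigma_1 \le \sigma_1^*$ as well, so the first defining clause applies to both products with the common witness $\sigma_1^*$ (using well-definedness of $\otimes$, i.e.\ independence of the chosen witness), and after composing conversions both evaluate to $\ext{\sigma_1,\sigma_1^*}{x} \odot_{\sigma_1^*,\sigma_2} y$. By the lemma relating $\cdot$ on compatible and strictly compatible signatures, $\sigma_1 \cdot \sigma_2 = \sigma_1^* \cdot \sigma_2 = \sigma_1' \cdot \sigma_2$, so $\extop{\sigma_1\sigma_2,\sigma_1'\sigma_2}$ is the identity and the claim is immediate.

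In the second case, $\sigma_2 \le \sigma_2^*$ and $\scomp{\sigma_1'}{\sigma_2^*}$ for some $\sigma_2^*$. Here the right-hand side unfolds by the second clause to $\ext{\sigma_1,\sigma_1'}{x} \odot_{\sigma_1',\sigma_2^*} \ext{\sigma_2,\sigma_2^*}{y}$, which lies in $D_{\sigma_1'\sigma_2}$ because $\sigma_1'\cdot\sigma_2 = \sigma_1'\cdot\sigma_2^*$. For the left-hand side I would produce strictly compatible $\tau_1 \parallel \tau_2$ with $\sigma_1 \le \tau_1 \le \sigma_1'$ and $\sigma_2 \le \tau_2 \le \sigma_2^*$ for which $x \otimes_{\sigma_1,\sigma_2} y = \ext{\sigma_1,\tau_1}{x} \odot_{\tau_1,\tau_2} \ext{\sigma_2,\tau_2}{y}$ (taking $\tau_2=\sigma_2$ when the first clause computes this product and $\tau_1=\sigma_1$ when the second does). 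Since $\tau_1\cdot\tau_2 = \sigma_1\cdot\sigma_2$ and $\sigma_1'\cdot\sigma_2^* = \sigma_1'\cdot\sigma_2$, the distributivity-of-conversion axiom applied to the pairs $\tau_1\le\sigma_1'$, $\tau_2\le\sigma_2^*$ gives
\[ \ext{\sigma_1\sigma_2,\sigma_1'\sigma_2}{\ext{\sigma_1,\tau_1}{x} \odot_{\tau_1,\tau_2} \ext{\sigma_2,\tau_2}{y}} = \ext{\tau_1,\sigma_1'}{\ext{\sigma_1,\tau_1}{x}} \odot_{\sigma_1',\sigma_2^*} \ext{\tau_2,\sigma_2^*}{\ext{\sigma_2,\tau_2}{y}}, \]
and collapsing the nested conversions by composition turns the right-hand side into $\ext{\sigma_1,\sigma_1'}{x} \odot_{\sigma_1',\sigma_2^*} \ext{\sigma_2,\sigma_2^*}{y}$, exactly the target.

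I expect the main obstacle to be the bookkeeping in the second case: exhibiting the intermediate strictly compatible pair $\tau_1 \parallel \tau_2$ and verifying $\tau_1 \le \sigma_1'$, $\tau_2 \le \sigma_2^*$ together with the monoid identities $\tau_1\cdot\tau_2 = \sigma_1\cdot\sigma_2$ and $\sigma_1'\cdot\sigma_2^* = \sigma_1'\cdot\sigma_2$. This rests on the same prefix combinatorics over $\Gamma$ that underlies Lemma~\ref{lem:stackstucture1}, and is the only genuinely case-dependent part; once the witnesses are in place, the composition and distributivity laws for $\extop$ finish the computation mechanically.
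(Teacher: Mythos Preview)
Your proof is correct and rests on the same two weight-structure axioms the paper uses---the composition law $\extop{\sigma',\sigma''}\circ\extop{\sigma,\sigma'}=\extop{\sigma,\sigma''}$ and distributivity of conversion over $\odot$---but you organize the case split differently. You split on how the \emph{right-hand} product $\ext{\sigma_1,\sigma_1'}{x}\otimes_{\sigma_1',\sigma_2} y$ unfolds, whereas the paper splits on how the \emph{left-hand} product $x\otimes_{\sigma_1,\sigma_2} y$ unfolds, obtaining three subcases (its Case~$\sigma_1\le\sigma_1''$ with $\scomp{\sigma_1''}{\sigma_2}$ further bifurcates according to whether $\sigma_1'\le\sigma_1''$ or $\sigma_1''\le\sigma_1'$). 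Your second case absorbs the paper's last two subcases into a single uniform argument by passing through the intermediate strictly compatible pair $(\tau_1,\tau_2)$; the price is the prefix bookkeeping you correctly flag as the main obstacle (verifying $\tau_1\le\sigma_1'$ and $\tau_2\le\sigma_2^*$), which the paper sidesteps by producing the needed witnesses explicitly in each subcase. Conversely, your presentation invokes the distributivity axiom once in a fully symmetric form, which makes its role clearer. Both routes are equally valid and of comparable length.
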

\proof \mbox{}
\begin{description}
\item[Case] $\sigma_1 \le \sigma_1''$ and $\scomp{\sigma_1''}{\sigma_2}$.
We have $(\sigma_1' \sqcup \sigma_1'') \cdot \sigma_2 = 
\sigma_1'\cdot \sigma_2 \sqcup \sigma_1''\cdot \sigma_2
= \sigma_1'\cdot \sigma_2 \sqcup \sigma_1\cdot \sigma_2
= (\sigma_1' \sqcup \sigma_1) \cdot \sigma_2 = \sigma_1' \cdot \sigma_2$,
Then, either $\sigma_1' \le \sigma_1''$ or $\sigma_1'' \le \sigma_1'$ holds.
\begin{description}
\item[Subcase] $\sigma_1' \le \sigma_1''$. 
We have $\sigma_1\cdot\sigma_2 =  \sigma_1'\cdot\sigma_2 =  \sigma_1'' \cdot\sigma_2$. 
    \begin{eqnarray*}
      \ext{\sigma_1 \sigma_2, \sigma_1'\sigma_2}{x \otimes_{\sigma_1,\sigma_2} y} & =  &
      \ext{\sigma_1'' \sigma_2, \sigma_1'\sigma_2}{\ext{\sigma_1,\sigma_1''}{x} \odot_{\sigma_1'',\sigma_2}
 y} \\
      & = & \ext{\sigma_1,\sigma_1''}{x} \odot_{\sigma_1'',\sigma_2} y \\
      & = & \ext{\sigma_1',\sigma_1''}{\ext{\sigma_1,\sigma_1'}{x}} \odot_{\sigma_1'',\sigma_2} y \\
      & = & \ext{\sigma_1,\sigma_1'}{x} \otimes_{\sigma_1',\sigma_2} y
    \end{eqnarray*} 
\item[Subcase] $\sigma_1'' \le \sigma_1'$.
  From $\scomp{\sigma_1''}{\sigma_2}$ and $\sigma_1'' \le \sigma_1'$, 
  $\sigma_2 \le \sigma_2'$ and $\scomp{\sigma_1'}{\sigma_2'}$ for some $\sigma_2'$.
    \begin{eqnarray*}
      \ext{\sigma_1 \sigma_2, \sigma_1'\sigma_2}{x \otimes_{\sigma_1,\sigma_2} y} & =  &
      \ext{\sigma_1'' \sigma_2, \sigma_1'\sigma_2}{\ext{\sigma_1,\sigma_1''}{x} \odot_{\sigma_1'',\sigma_2} y} \\
      & = & 
      \ext{\sigma_1'' \sigma_2, \sigma_1'\sigma_2'}{\ext{\sigma_1,\sigma_1''}{x} \odot_{\sigma_1'',\sigma_2} y} \\
      & = & \ext{\sigma_1,\sigma_1'}{x} \, \odot_{\sigma_1',\sigma_2'} \ext{\sigma_2,\sigma_2'}{y} \\
      & = & \ext{\sigma_1,\sigma_1'}{x} \otimes_{\sigma_1',\sigma_2} y
    \end{eqnarray*} 
\end{description}
\item[Case] $\sigma_2 \le \sigma_2'$ and $\scomp{\sigma_1}{\sigma_2'}$.
From $\scomp{\sigma_1}{\sigma_2'}$ and $\sigma_1 \le \sigma_1'$,
$\sigma_2' \le \sigma_2''$ and $\scomp{\sigma_1'}{\sigma_2''}$ for some $\sigma_2''$.
    \begin{eqnarray*}
      \ext{\sigma_1 \sigma_2, \sigma_1'\sigma_2}{x \otimes_{\sigma_1,\sigma_2} y} & =  &
      \ext{\sigma_1 \sigma_2', \sigma_1'\sigma_2''}{x \, \odot_{\sigma_1,\sigma_2'}  \ext{\sigma_2,\sigma_2'}{y}} \\
& =  &
      \ext{\sigma_1,\sigma_1'}{x} \, \odot_{\sigma_1',\sigma_2''} \ext{\sigma_2,\sigma_2''}{y} \\
      & = & \ext{\sigma_1,\sigma_1'}{x} \otimes_{\sigma_1',\sigma_2} y
\rlap{\hbox to 139 pt{\hfill\qEd}} 
    \end{eqnarray*}
\end{description}

\noindent We present a weight structure for the indexed semiring in Example~\ref{exa:hweight}.
It is almost trivial to check that it really forms a weight structure. On the other hand,
if we directly define the indexed semiring, we have to repeat proofs similar to those of
Lemma~\ref{lem:stackstucture1} and~\ref{lem:stackstucture2}.
\begin{exa} \label{exa:heightalg}
$\mytup{\set{D_\sigma}, \set{\oplus_{\sigma}}, \set{\odot_{\sigma_1,\sigma_2}}, \set{0_\sigma}, \set{1_\sigma}, \set{\extop{\sigma,\sigma'}}}$ given by the following components forms a weight structure.
\begin{itemize}
\item $D_{w/w'} = \mathbb{N}^{{\ge} \mathrm{max}(|w|,|w'|)} \cup \set{\infty}$.
\item $a \oplus_\sigma b = \mathrm{min}(a,b)$ and $0_\sigma = \infty$.
$\tup{D_\sigma, \oplus_{\sigma}, 0_\sigma}$ is clearly a commutative monoid.
\item $a \odot_{\sigma_1,\sigma_2} b = \mathrm{max}(a, b)$. It is clearly associative and its anihilator is $\infty$. 
\item $1_{w/w} = |w|$. $1_{w/w} \odot_{w/w,w/w'} b = \mathrm{max}(|w|, b) = b$ since 
$b \in \mathbb{N}^{{\ge} \mathrm{max}(|w|,|w'|)}$.
\item $\ext{w_1/w_2,w_1w/w_2w}{a} = a+|w|$.
We only show $\ext{\sigma_1 \cdot \sigma_2,\sigma_1' \cdot \sigma_2'}{a \odot b} = \ext{\sigma_1,\sigma_1'}{a} \odot \ext{\sigma_2,\sigma_2'}{b}$.
Let $\sigma_1 = w_1/w$ and $\sigma_2 = w/w_2$. Then, $\sigma_1' = w_1w'/ww'$ and $\sigma_2' = ww'/w_2w'$ for some $w'$.
\begin{eqnarray*}
\ext{\sigma_1 \cdot \sigma_2,\sigma_1' \cdot \sigma_2'}{a \odot b} & = & \mathrm{max}(a,b) + |w'| \\
& = & \mathrm{max}(a+|w'|, b+|w'|) \\
& = & \ext{\sigma_1,\sigma_1'}{a} \odot \ext{\sigma_2,\sigma_2'}{b} 
\end{eqnarray*}

\end{itemize}
\end{exa}

\section{Applications} \label{sec:application}
We present four applications of the readability analysis of weighted pushdown 
automata over indexed semirings.
The indexed semirings used in these examples are locally bounded and thus
 our framework guarantees termination of the analyses.

\subsection{Encoding of Local Variables into Weight} \label{sec:encoding}
Suwimonteerabuth applied a semiring similar to one constructed from
an indexed semiring to encode local variables of a recursive
program into weight~\cite{SuwimonteerabuthPhd}. 
Although his implementation worked without any problem,  it is 
actually not in the standard framework of weighted pushdown systems
because the semiring is not bounded.

We show that his encoding can be formulated more naturally with
an indexed semiring. In order to simplify our presentation, we give 
an encoding of a pushdown system into a weighted pushdown system with
a singleton stack alphabet. 
Since local variables can be encoded into a stack alphabet, the same
approach can be applied for the encoding of local variables.

Let us consider a singleton stack alphabet $\Gamma' = \set{\al}$. We write $m/n$ for a stack signature $\al^m/\al^n$. 
We will construct a weight structure to translate pushdown systems over a stack
alphabet $\Gamma$.
We define a weight structure $\mathcal{W}_\Gamma = \mytup{\set{D_\sigma}, \set{\oplus_{\sigma}}, \set{\odot_{\sigma_1,\sigma_2}}, \set{0_\sigma}, \set{1_\sigma}, \set{\extop{\sigma_1,\sigma_2}}}$ as follows.
\begin{itemize}
\item $D_{m/n}$ is the set of relations between $\Gamma^m$ and $\Gamma^n$: $D_{m/n} = 2^{\Gamma^m \times \Gamma^n}$.
\item $0_{m/n} = \emptyset$ and $1_{m/m} = \set{\tup{x,x} \mid x \in \Gamma^m}$.
\item $R_1 \odot_{l/m,m/n} R_2$ is a composition of two relations $R_1$ and $R_2$: $R_1 \circ R_2$
where $R_1 \subseteq \Gamma^l \times \Gamma^m$ and
$R_2 \subseteq \Gamma^m \times \Gamma^n$.

\item $R_1 \oplus_{m/n} R_2$ is the union of two relations $R_1$ and $R_2$: 
$R_1 \cup R_2$ where
$R_1, R_2 \subseteq \Gamma^m \times \Gamma^n$.
\item $\extop{l/m,l+1/m+1}$ extends the domain of a relation and is defined by
\[ \ext{l/m, l+1/m+1}{R} = 
\set{ \tup{\tup{x, z}, \tup{y, z}}
\mid \tup{x,y} \in R \land z \in \Gamma} \]
where we consider $\Gamma^{k+1} = \Gamma^k \times \Gamma$.
\end{itemize}
It is straightforward to show this structure forms a weight structure.
Furthermore, it induces a locally bounded indexed semiring because $D_{m/n}$ is 
the power set of a finite set and ordered by the set inclusion. 

We show how to simulate a pushdown system $\mathcal{P} = \tup{P, \Gamma, \Delta}$ 
by a weighted pushdown system $\mathcal{P'}$ over
the weight structure $\mathcal{W}_\Gamma$. 
Let $\mathcal{P'}$ be $\tup{P, \Gamma', \Delta'}$ such that 
\[ \tup{p, \#, p', \#^m, a} \in \Delta' \quad\quad\mbox{iff}\quad\quad 
\tup{p, \gamma, p', w} \in \Delta \]
where $|w| = m$ and $a = \set{ \tup{\gamma,w} }$. 

Then, $\mathcal{P}$ and $\mathcal{P}'$ are equivalent in the following sense:
\[ \utransrule{p}{p'}{w/w'}{\mathcal{P}} \quad\quad \Longleftrightarrow 
\quad\quad
\atransrulex{p}{p'}{m/m'}{a}{\mathcal{P}'} \land \tup{w,w'} \in a \]
where $m = |w|$ and $m' = |w'|$.
Then, we can check the reachability in $\mathcal{P}$ by checking
that in $\mathcal{P}'$.

\subsection{Conditional Pushdown Systems\label{sec:conditional}} 
Esparza \textit{et al.}\ introduced pushdown systems with checkpoints that
have the ability to inspect the whole stack content against a regular language~\cite{esparza:regularvaluation}.
Li and Ogawa reformulated their definition and called them conditional pushdown systems~\cite{li:conditional}. 
We review conditional pushdown systems and then formulate the reachability analysis in our previous work~\cite{minamide:html5} as that of weighted pushdown systems.

\begin{defi}
A conditional pushdown system $\mathcal{P}$ is a structure
$\tup{P, \Gamma, \Delta}$ 
where $P$ is a finite set of states, 
$\Gamma$ is a stack alphabet, 
and $\Delta \subseteq P \times \Gamma \times P \times \Gamma^* 
\times \mathrm{Reg}(\Gamma)$
is a set of transitions where
$\mathrm{Reg}(\Gamma)$ is the set of regular languages over $\Gamma$. 
\end{defi}

We write $\ptrans{p}{\gamma}{p'}{w}{R}$ if $\tup{p, \gamma, p', w, R} \in  \Delta$ as weighted pushdown systems.
The transition relation of a conditional pushdown system is defined as follows.
\begin{itemize}
\item $\utransrule{\tup{p,w}}{\tup{p,w}}{}{}$.
\item $\utransrule{\tup{p,\gamma w'}}{\tup{p',ww'}}{}{}$ if $\ptrans{p}{\gamma}{p'}{w}{R}$ and $w' \in R$.
\item $\utransrule{\tup{p,w}}{\tup{p',w'}}{}{}$
if $\utransrule{\tup{p,w}}{\tup{p'',w''}}{}{}$ and
$\utransrule{\tup{p'',w''}}{\tup{p',w'}}{}{}$.
\end{itemize}
In the second case above, the transition can be taken only when
the current stack content excluding its top is included in 
the regular language $R$ given as the condition of the rule.

We show that the transition of a conditional pushdown system can be simulated
by that of a weighted pushdown system without conditional rules.
Let us design a weight structure for this simulation.
We use the same domain  for all proper stack signatures $\sigma$: $D_\sigma = 2^{\Gamma^*}$. 
Then, the weight structure $\mytup{\set{D_\sigma}, \set{\oplus_{\sigma}}, \set{\odot_{\sigma_1,\sigma_2}}, \set{0_\sigma}, \set{1_\sigma}, \set{\extop{\sigma,\sigma'}}}$ is given as follows.
\begin{itemize}
\item $0_\sigma = \emptyset$ and $1_\sigma = \Gamma^*$.
\item $a \oplus_\sigma b = a \cup b$.
\item $a \odot_{\sigma_1,\sigma_2} b = a \cap b$ for strictly compatible signatures $\sigma_1$ and $\sigma_2$.
\item $\ext{w_1/w_2,w_1w/w_2w}{a} =  w^{-1}a$
where $w^{-1}a$ is left quotient defined by $w^{-1}a = \set{w' \mid w w' \in a}$.
\end{itemize}
From basic properties
of left quotient and set operations, it is clear that this structure forms a weight structure.
Then, for a conditional pushdown system $\mathcal{P}$ we obtain a weighted 
pushdown system $\mathcal{P}'$ over the indexed semiring above
by considering a conditional transition rule $\ptrans{p}{\gamma}{p'}{w}{R}$
as a weighted one.

A conditional pushdown system $\mathcal{P}$ is simulated by
a weighted pushdown system $\mathcal{P}'$ in the following sense.
\begin{itemize}
\item If $\utransrule{\tup{p_1, w_1}}{\tup{p_2, w_2}}{}{\mathcal{P}}$,
then there exist $w$, $w_1'$, and $w_2'$ such that 
$\atransrulex{p_1}{p_2}{w_1'/w_2'}{a}{\mathcal{P}'}$,
$w \in a$, and $w_1/w_2 = w_1'w/w_2'w$.
\item If $\atransrulex{p_1}{p_2}{w_1/w_2}{a}{\mathcal{P}'}$ and
$w \in a$,
then $\utransrule{\tup{p_1, w_1w}}{\tup{p_2, w_2w}}{}{\mathcal{P}}$.
\end{itemize}

Please note that this weight structure is not locally bounded because
$2^{\Gamma^*}$ is not bounded with respect to the set inclusion.
However, $D_\sigma$ can be restricted to the set $D \subseteq 2^{\Gamma^*}$ inductively defined as 
follows.
\begin{itemize}
\item $\emptyset \in D$ and $\Gamma^* \in D$.
\item $R \in D$ if $\ptrans{p}{\gamma}{p'}{w}{R}$ for some $p$, $\gamma$, $p'$, $w$.
\item $R_1 \cap R_2 \in D$ and $R_1 \cup R_2 \in D$ if $R_1 \in D$ and $R_2 \in D$.
\item $w^{-1}R \in D$ if $R \in D$ and $w \in \Gamma^*$.
\end{itemize}
This set $D$ is finite because the set of transitions is finite, 
there are finitely many languages obtained from each regular language 
with left quotient, and left quotient distributes
over union and intersection.
Thus, we obtain a locally bounded indexed semiring by using $D$.
This gives the algorithm of the backward reachability analysis for
conditional pushdown systems that we used to analyze the HTML5
parser specification~\cite{minamide:html5}.

\subsection{Well-Structured Pushdown Systems}
Cai and Ogawa introduced well-structured pushdown systems (WSPDS) where 
the set of states and stack alphabet 
can be possibly infinite well-quasi-ordered sets.
They showed that the coverability problem is decidable for
WSPDS with a finite set of states and then extended the result for
several subclasses of WSPDS~\cite{ogawa:well}.
We show that the coverability of WSPDS with a finite set of
states can also be decided through a translation to weighted 
pushdown systems with indexed weight domains.

A quasi-ordering $(D, {\preceq})$ is a reflexive and transitive binary 
relation on $D$. A quasi-order $(D, {\preceq})$ is a well-quasi-order  
if, for each infinite sequence $a_1, a_2, a_3, \ldots$ in $D$,
there exist $i, j$ such that $i < j$ and $a_i \preceq a_j$.
A set $I \subseteq D$ is an ideal if $a \in I$ and $a \preceq b$ imply $b \in I$.
The upward closure of $A \subseteq D$ is 
$\upcl{A} = \set{ b \in D \mid \exists a \in A. a \preceq b}$.
The family of ideals over $A$ is denoted by $\idealof{A}$.

Well-structured pushdown systems are defined as follows where
$\PFUN(A,B)$ denotes the set of partial functions from $A$ to $B$.
\begin{defi}
A well-structured pushdown system is a structure $\tup{P, \Gamma, \Delta}$
where $P$ is a finite set of states, $\Gamma$ is a possibly infinite
set of stack symbols with well-quasi-order $\preceq$, and 
$\Delta \subseteq P \times P \times 
\bigcup_{i \in \mathbb{N}} \PFUN(\Gamma,\Gamma^{i})$ is a finite set
of monotonic transition rules.
A transition rule $\tup{p, p', \phi}$ is monotonic if $\phi$ is monotonic on $\preceq$.
\end{defi}
If $\tup{p, p', \phi} \in \Delta$ and $\phi \in \PFUN(\Gamma,\Gamma^{i})$, 
then $\phi^{-1}(X) \in \idealof{\Gamma}$ for any $X \in \idealof{\Gamma^i}$
by the monotonicity of $\phi$.
The transition relation of a WSPDS is defined as follows.
\begin{itemize}
\item $\utransrule{\tup{p,w}}{\tup{p,w}}{}{}$.
\item $\utransrule{\tup{p,\gamma w'}}{\tup{p',\phi(\gamma)w'}}{}{}$ if $\tup{p,p',\phi} \in \Delta$ and $\phi(\gamma)$ is defined.
\item $\utransrule{\tup{p,w}}{\tup{p',w'}}{}{}$
if $\utransrule{\tup{p,w}}{\tup{p'',w''}}{}{}$ and
$\utransrule{\tup{p'',w''}}{\tup{p',w'}}{}{}$.
\end{itemize}

Cai and Ogawa showed that the coverability problem of WSPDS is decidable.
We say that $\tup{p_2,w_2}$ is \emph{covered} by $\tup{p_1,w_1}$ if we have
$\utransrule{\tup{p_1,w_1}}{\tup{p_2,w_2'}}{}{}$ for some $w_2'$ such that $w_2 \preceq w_2'$.
The key to the development of the coverability analysis of WSPDS by Cai and Ogawa is the following lemma. This also makes it possible to construct a locally bounded indexed semiring.
\begin{lem}[Finkel et al.~\cite{finkel98b}] \label{lem:ideal}
If $\preceq$ is a well-quasi-order, then any infinite sequence $I_0 \subseteq I_1 \subseteq I_2 \subseteq \cdots$ of ideals eventually stabilizes.
\end{lem}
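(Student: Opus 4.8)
The plan is to prove the statement by contradiction, using only the defining property of a well-quasi-order together with the upward closure of ideals. Recall that an ideal here is an upward-closed set, so the crucial feature I will exploit is that if $a \in I$ and $a \preceq b$, then $b \in I$.

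First I would assume the chain does not stabilize. Then for every index $N$ there is some $n \ge N$ with $I_n \neq I_N$, and since the chain is ascending this means $I_N \subsetneq I_n$. Iterating this observation, I can extract a strictly increasing subchain, so after relabelling I may assume $I_0 \subsetneq I_1 \subsetneq I_2 \subsetneq \cdots$. For each $k \ge 1$ I pick a witness $a_k \in I_k \setminus I_{k-1}$, which exists precisely because the inclusion $I_{k-1} \subsetneq I_k$ is strict.

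Next I would apply the well-quasi-order hypothesis to the infinite sequence $a_1, a_2, a_3, \ldots$: there exist indices $i < j$ with $a_i \preceq a_j$. Now I use that $I_i$ is an ideal: from $a_i \in I_i$ and $a_i \preceq a_j$ I conclude $a_j \in I_i$. Since $i < j$ gives $I_i \subseteq I_{j-1}$, this yields $a_j \in I_{j-1}$, contradicting the choice $a_j \in I_j \setminus I_{j-1}$. Hence no such non-stabilizing chain exists, and the sequence of ideals eventually stabilizes.

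There is essentially no deep technical obstacle here; the only point requiring a little care is the extraction of a strictly increasing subchain from a merely non-stabilizing ascending chain, which is routine. The argument is in fact exactly the standard equivalence between the well-quasi-order property of $\preceq$ and the ascending chain condition on upward-closed subsets of $D$, so I expect the write-up to be short.
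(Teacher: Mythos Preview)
Your argument is correct and is precisely the standard proof of this classical fact. Note, however, that the paper does not give its own proof of this lemma: it is stated with attribution to Finkel et al.\ and used without proof, so there is nothing in the paper to compare against beyond observing that your write-up supplies the routine argument the paper omits.
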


For the coverability analysis, we translate a WSPDS into a weighted pushdown system 
with a singleton stack alphabet $\Gamma' = \set{\al}$. Then we translate
the transition rule $\tup{p, p', \phi} \in \Delta$ in WSPDS into the following transition in
a weighted pushdown system $\mathcal{P}'$:
\[\utransrule{\tup{p,\al}}{\tup{p',\al^i}}{\phi^{-1}}{\mathcal{P}'} \]
where $\phi \in \PFUN(\Gamma,\Gamma^{i})$. We adopt $\phi^{-1}$ as a weight instead of
$\phi$ because we apply $\phi^{-1}(X) \in \idealof{\Gamma}$ for any $X \in \idealof{\Gamma^i}$.
The weight structure $\mytup{\set{D_\sigma}, \set{\oplus_{\sigma}}, \set{\odot_{\sigma_1,\sigma_2}}, \set{0_\sigma}, \set{1_\sigma}, \set{\extop{\sigma_1,\sigma_2}}}$ is defined as follows.
\begin{itemize}
\item $D_{m/n} =\Gamma^n \to \idealof{\Gamma^m}$.
\item $0_{m/n} = \lambda x. \emptyset$ and $1_{m/m} = \lambda x.\upcl{\set{x}}$.
\item $f_1 \odot_{l/m,m/n} f_2$ is the composition of functions: $\hat{f_1} \circ f_2$ where $\hat{f_1}(X) = \bigcup_{x\in X}f_1(x)$.
\item $f_1 \oplus_{m/n} f_2$ is defined by $\lambda x. f_1(x) \cup f_2(x)$.
\item $\extop{l/m,l+1/m+1}$ extends the domain and range of a function and 
is defined as follows:
\[ \ext{l/m,l+1/m+1}{f} = \lambda \tup{y,z}. f(y) \times \upcl{\set{z}} \]
where $y \in \Gamma^m$ and $z \in \Gamma$.
\end{itemize}
$\tup{D_{m/n},\oplus_{m/n}, 0_{m/n}}$ is clearly a commutative monoid.
The other properties of a weight structure can be easily verified.
Furthermore, it induces a locally bounded indexed semiring because $D_{m/0}$ is 
isomorphic to $\idealof{\Gamma^m}$ and there are no infinite 
ascending chains of ideals by Lemma~\ref{lem:ideal}.
It should be noted that $D_{m/n}$ is not bounded in general for $n > 0$.

We translate a WSPDS $\mathcal{P} = \tup{P, \Gamma, \Delta}$ to
a weighted pushdown system $\mathcal{P'} = \tup{P, \Gamma', \Delta'}$
 over the above weight structure.
The set of transition rules $\Delta'$ is defined by 
\[ \tup{p, \#, p', \#^i, a} \in \Delta' \quad\quad\mbox{if}\quad\quad 
\tup{p, p', \phi} \in \Delta \;\mbox{and} \; \phi \in \PFUN(\Gamma,\Gamma^i) \]
where $a = \lambda w. \phi^{-1}(\upcl{\set{w}})$.

Then, $\mathcal{P}$ and $\mathcal{P}'$ are closely related in the following sense.
The proof appears in Appendix~\ref{appendix:wellpds}.
\begin{prop} \label{prop:wellpds} \mbox{}

\begin{itemize}
\item If $\utransrule{\tup{p_1,w_1}}{\tup{p_2,w_2}}{}{\mathcal{P}}$,
then $\utransrule{\tup{p_1,m_1}}{\tup{p_2,m_2}}{a}{\mathcal{P}'}$ and $w_1 \in  a(w_2)$.
\item If $\utransrule{\tup{p_1,m_1}}{\tup{p_2,m_2}}{a}{\mathcal{P}'}$ and 
$w_1 \in  a(w_2)$, then 
$\utransrule{\tup{p_1,w_1}}{\tup{p_2,w_2'}}{}{\mathcal{P}}$ for some $w_2 \preceq w_2'$.
\end{itemize}
 where $m_1 = |w_1|$ and $m_2 = |w_2|$.
\end{prop}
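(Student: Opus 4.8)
The plan is to establish each of the two inclusions by induction on the derivation of the relevant transition relation, exploiting the rule-by-rule correspondence between the reflexive, one-step, and composition rules on the two sides. The guiding intuition is that $a(w_2)$ should be the ideal of stack contents $w_1$ from which $\tup{p_1,w_1}$ can reach a configuration covering $\tup{p_2,w_2}$, so that the membership $w_1 \in a(w_2)$ is precisely the reachability witness to be propagated through the induction.

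For the forward direction I would induct on $\utransrule{\tup{p_1,w_1}}{\tup{p_2,w_2}}{}{\mathcal{P}}$. The reflexive case uses $1_{m/m} = \lambda x.\upcl{\set{x}}$, so $w_1 \in a(w_1)$ holds by reflexivity of $\preceq$. The one-step case $\utransrule{\tup{p_1,\gamma w'}}{\tup{p_2,\phi(\gamma)w'}}{}{\mathcal{P}}$ is matched by the rule with weight $a_0 = \lambda w.\phi^{-1}(\upcl{\set{w}})$; extending by $w'$ gives $a(\phi(\gamma)w') = a_0(\phi(\gamma)) \times \upcl{\set{w'}}$, and since $\phi(\gamma) \preceq \phi(\gamma)$ we have $\gamma \in a_0(\phi(\gamma))$, hence $\gamma w' \in a(\phi(\gamma)w')$. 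The composition case is the cleanest: strict compatibility collapses $\otimes$ to $\odot$, so $a = \hat{a_1} \circ a_2$ and $a(w_3) = \bigcup_{x \in a_2(w_3)} a_1(x)$; feeding in the induction hypotheses $w_2 \in a_2(w_3)$ and $w_1 \in a_1(w_2)$ gives $w_1 \in a_1(w_2) \subseteq a(w_3)$ at once.

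For the reverse direction I would induct on $\utransrule{\tup{p_1,m_1}}{\tup{p_2,m_2}}{a}{\mathcal{P}'}$. The reflexive and one-step cases run symmetrically, and it is here that the relaxation $w_2 \preceq w_2'$ first arises: from $w_1 \in \upcl{\set{w_2}}$, or from $\gamma \in \phi^{-1}(\upcl{\set{u}})$ together with a $\preceq$-larger tail, one only recovers the target up to covering. The composition case is the genuine obstacle. Unfolding $w_1 \in \hat{a_1}(a_2(w_3))$ yields an intermediate $w_2$ with $w_2 \in a_2(w_3)$ and $w_1 \in a_1(w_2)$, and the two induction hypotheses give $\utransrule{\tup{p_1,w_1}}{\tup{p_2,\bar w_2}}{}{\mathcal{P}}$ with $\bar w_2 \succeq w_2$ and $\utransrule{\tup{p_2,w_2}}{\tup{p_3,\bar w_3}}{}{\mathcal{P}}$ with $\bar w_3 \succeq w_3$. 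These refuse to compose as they stand, since the first computation ends at $\bar w_2$ while the second starts at $w_2$. I would bridge the gap with a separate monotonicity lemma for the full transition relation of a WSPDS: if $\utransrule{\tup{p,w}}{\tup{p',v}}{}{\mathcal{P}}$ and $w \preceq w^{\dagger}$, then $\utransrule{\tup{p,w^{\dagger}}}{\tup{p',v^{\dagger}}}{}{\mathcal{P}}$ for some $v^{\dagger} \succeq v$. This is the strong-compatibility property of well-structured transition systems and follows by a routine induction from the monotonicity of each $\phi$ and the product order on $\Gamma^*$. Lifting the second computation from $w_2$ to $\bar w_2$ supplies a common intermediate configuration, and composing in $\mathcal{P}$ then yields $\utransrule{\tup{p_1,w_1}}{\tup{p_3,w_3^{\dagger}}}{}{\mathcal{P}}$ with $w_3^{\dagger} \succeq w_3$. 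Proving and correctly applying this monotonicity lemma is the main work.
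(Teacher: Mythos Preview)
Your proposal is correct and matches the paper's proof essentially step for step: both directions are by induction on the respective derivations, with the composition case of the reverse direction bridged by lifting the second computation along $w_2 \preceq \bar w_2$ via monotonicity of the WSPDS transition relation. The paper simply invokes this as ``the monotonicity of $\mathcal{P}$'' without spelling out the auxiliary lemma, so your account is in fact slightly more explicit than the original.
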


Then, the coverability in $\mathcal{P}$ can be checked by applying 
the reachability analysis to $\mathcal{P}'$ in the following manner.
Let us consider the coverability of $\tup{p,w}$ for $w=\gamma_1\gamma_2\cdots\gamma_n$.
We represent $w$ 
by a weighted automaton $\mathcal{A}_w = \tup{\set{q_0, q_1, \ldots, q_n}, \set{\al}, \Delta_w, q_0, \set{q_n}}$
where $\tup{q_{i-1},q_{i}, \al, \upcl{\set{\gamma_i}}} \in \Delta_w$ for $1 \le i \le n$.
Then, $\tup{p,w}$ is covered by $\tup{p',w'}$ in $\mathcal{P}$  if and only if
$w' \in \delta_{\mathcal{P}',\mathcal{A}_w}(p,\al^m,p')$ where $m = |w'|$.

\subsection{Pushdown Systems with Stack Manipulation} \label{sec:trpds}
Uezato and Minamide introduced pushdown systems with stack manipulation (TrPDS) 
that can modify the whole stack content with a \emph{letter-to-letter} finite-state 
transducer at each transition~\cite{uezato:stack}.
TrPDS generalizes conditional pushdown systems~\cite{esparza:regularvaluation,li:conditional} and discrete timed pushdown 
systems~\cite{Abdulla:discrete}.
They showed that the reachability problem of a TrPDS is decidable if the closure of
transductions appearing in the transition rules is finite.

The behaviour of a letter-to-letter transducer whose input and output alphabets  
are $\Gamma$ is characterized by a regular language over $\Gamma \times \Gamma$.
Thus, we identify a letter-to-letter transducer with a corresponding 
regular language over $\Gamma \times \Gamma$ and call it a \emph{transduction}.
Let $w = a_1a_2\cdots a_n$ and $w' = b_1b_2\cdots b_n$.
We abuse the tuple notation and write $\tup{w, w'}$ for $\tup{a_1,b_1}\tup{a_1,b_1}\cdots\tup{a_n,b_n}$ if it is clear from the context.
For a transduction $t$, the left quotient of the transduction is defined as follows:
$\tup{\gamma_1, \gamma_2}^{-1}t = \set{ \tup{w_1, w_2} \mid \tup{\gamma_1w_1, \gamma_2w_2} \in t}$.

We say that $\mathcal{T} \subseteq \mathrm{Reg}(\Gamma \times \Gamma)$ is \emph{closed} if
the following hold.
\begin{itemize}
\item $\emptyset \in \mathcal{T}$ and $\set{\tup{w,w} \mid w \in \Gamma^*} \in \mathcal{T}$.
\item If $t_1, t_2 \in \mathcal{T}$, then $t_1 \circ t_2 \in \mathcal{T}$ and
$t_1 \cup t_2 \in \mathcal{T}$.
\item If $t \in \mathcal{T}$, then $\tup{\gamma_1,\gamma_2}^{-1}t \in \mathcal{T}$ for all $\gamma_1,\gamma_2 \in \Gamma$.
\end{itemize}
We sometimes write $0_\mathcal{T}$ and $1_\mathcal{T}$ for $\emptyset$ and $\set{\tup{w,w} \mid w \in \Gamma^*}$, respectively.

\begin{defi}
A TrPDS $\mathcal{P}$ is a structure $\tup{P, \Gamma, \mathcal{T}, \Delta}$ 
where $P$ is a finite set of states, 
$\Gamma$ is a stack alphabet, 
$\mathcal{T} \subseteq \mathrm{Reg}(\Gamma \times \Gamma)$ is a finite, closed set of
transductions, 
and $\Delta \subseteq P \times \Gamma \times P \times \Gamma^* 
\times \mathcal{T}$
is a set of transitions.
\end{defi}

We write $\ptrans{p}{\gamma}{p'}{w}{t}$ if $\tup{p, \gamma, p', w, t} \in  \Delta$ as weighted pushdown systems.
The transition relation of a TrPDS is defined as follows.
\begin{itemize}
\item $\utransrule{\tup{p,w}}{\tup{p,w}}{}{}$.
\item $\utransrule{\tup{p,\gamma w'}}{\tup{p',ww''}}{}{}$ if $\ptrans{p}{\gamma}{p'}{w}{t}$ and $\tup{w',w''} \in t$.
\item $\utransrule{\tup{p,w}}{\tup{p',w'}}{}{}$
if $\utransrule{\tup{p,w}}{\tup{p'',w''}}{}{}$ and
$\utransrule{\tup{p'',w''}}{\tup{p',w'}}{}{}$.
\end{itemize}
In the second case above, the stack content below the top is
modified by the transduction $t$.

A TrPDS can be simulated by combining the ideas of simulations in
Section~\ref{sec:encoding} and \ref{sec:conditional}.
We again use the singleton stack alphabet $\Gamma' = \set{\al}$
and define weight structure $\mytup{\set{D_\sigma}, \set{\oplus_{\sigma}}, \set{\odot_{\sigma_1,\sigma_2}}, \set{0_\sigma}, \set{1_\sigma}, \set{\extop{\sigma_1,\sigma_2}}}$ as follows.
\begin{itemize}
\item $D_{m/n} = \Gamma^m \times \Gamma^n \to \mathcal{T}$.
\item $0_{m/n}(w_1,w_2) = 0_{\mathcal{T}}$ and \label{fix:trpds}
\[ \begin{array}{rcl}
1_{m/m}(w_1,w_2) & =  &
\left\{ \begin{array}{ll} 
    1_{\mathcal{T}} & \mbox{(if $w_1 = w_2$)} \\
    0_{\mathcal{T}} & \mbox{(otherwise)}.
                 \end{array} \right.
             \end{array}
             \]
                 
\item For
$f_1 \in \Gamma^l \times \Gamma^m \to \mathcal{T}$ and
$f_2 \in \Gamma^m \times \Gamma^n \to \mathcal{T}$, 
$f_1 \odot_{l/m,m/n} f_2$ is defined by
\[ \lambda (w_1,w_3). \bigcup_{w_2 \in \Gamma^m} f_1(w_1,w_2) \circ f_2(w_2,w_3). \]

\item For $f_1, f_2 \in \Gamma^m \times \Gamma^n \to \mathcal{T}$,
$f_1 \oplus_{m/n} f_2$ is defined by
\[ \lambda (w_1,w_2). f_1(w_1,w_2) \cup f_2(w_1,w_2). \]
\item $\extop{l/m,l+1/m+1}$ extends the domain of a function and is defined 
by
\[ \ext{l/m,l+1/m+1}{f}(w_1\gamma_1,w_2\gamma_2) = \tup{\gamma_1,\gamma_2}^{-1}f(w_1,w_2). \]
\end{itemize}
This structure forms a weight structure, and induces a locally bounded indexed semiring because $\mathcal{T}$ is a finite set.

We simulate a TrPDS $\mathcal{P}=\tup{P, \Gamma, \Delta}$ by a weighted pushdown system
 $\mathcal{P'} = \tup{P, \set{\al}, \Delta'}$. 
For a transduction $t \in \mathcal{T}$, we define the function 
$\trstep{t}{\gamma}{w} : \Gamma \times \Gamma^{|w|} \to \mathcal{T}$ as follows.
\[
\trstep{t}{\gamma}{w}(\gamma', w') = 
\left\{
\begin{array}{ll}
t & \mbox{if $\gamma' = \gamma$ and $w' = w$} \\
0_\mathcal{T} & \mbox{otherwise}
\end{array}
\right.
\]
Then, $\Delta'$ is given by 
\[ \tup{p, \#, p', \#^{|w|}, \trstep{t}{\gamma}{w}} \in \Delta' \quad\quad\mbox{iff}\quad\quad \tup{p, \gamma, p', w, t} \in \Delta. \]
$\mathcal{P}$ is simulated by $\mathcal{P}'$ in the following sense.
Hence, the reachability in $\mathcal{P}$ can be decided by the reachability analysis in $\mathcal{P}'$. 
The proof of the following proposition appears in Appendix~\ref{appendix:trpds}.
\begin{prop} \label{prop:trpds}
Let $m_1 = |w_1|$ and $m_2 = |w_2|$.
\begin{itemize}
\item If $\utransrule{\tup{p_1,w_1}}{\tup{p_2,w_2}}{}{\mathcal{P}}$,
then $\utransrule{\tup{p_1,m_1}}{\tup{p_2,m_2}}{a}{\mathcal{P}'}$ and 
$\tup{\epsilon,\epsilon} \in a(w_1,w_2)$ for some $a$.

\item If $\utransrule{\tup{p_1,m_1}}{\tup{p_2,m_2}}{a}{\mathcal{P}'}$ and 
$\tup{\epsilon,\epsilon} \in a(w_1,w_2)$, then 
$\utransrule{\tup{p_1,w_1}}{\tup{p_2,w_2}}{}{\mathcal{P}}$.
\end{itemize}
\end{prop}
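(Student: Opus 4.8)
The plan is to prove both implications by induction on the derivation of the relevant transition relation, exploiting the fact that the TrPDS relation $\utransrule{\cdot}{\cdot}{}{\mathcal{P}}$ and the weighted relation $\utransrule{\cdot}{\cdot}{a}{\mathcal{P}'}$ are generated by the same three-rule schema: the empty move, a single rule application, and sequential composition. For the first bullet I would induct on the derivation of $\utransrule{\tup{p_1,w_1}}{\tup{p_2,w_2}}{}{\mathcal{P}}$, and for the converse on the derivation of $\utransrule{\tup{p_1,m_1}}{\tup{p_2,m_2}}{a}{\mathcal{P}'}$, treating the three rules separately and applying the induction hypothesis to the immediate subderivations. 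Since the statement quantifies over all $w_1,w_2$ of the given lengths, the induction needs no strengthening. The entire content of the argument lies in computing how the weight $a$ evaluates at the pair $\tup{w_1,w_2}$, so the first step is to establish a closed form for this evaluation in the two base cases.

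The key computation is to unfold the conversion function on a one-step weight. I would first record that $\ext{\al^m}{1} = 1_{m/m}$, immediate from the weight-structure law $\ext{\sigma,\sigma'}{1_\sigma} = 1_{\sigma'}$ together with $1 = 1_{\epsilon/\epsilon}$ and $1_{0/0}(\epsilon,\epsilon) = 1_{\mathcal{T}}$; hence the empty-move weight evaluated at $\tup{w_1,w_2}$ is $1_{\mathcal{T}}$ exactly when $w_1 = w_2$ and $\emptyset$ otherwise. For a single rule $\ptrans{p}{\gamma}{p'}{w}{t}$ of $\mathcal{P}$, the corresponding $\mathcal{P}'$-transition carries weight $\ext{\al^k}{\trstep{t}{\gamma}{w}}$, and by iterating $\ext{l/m,l+1/m+1}{f}(w_1\gamma_1,w_2\gamma_2) = \tup{\gamma_1,\gamma_2}^{-1}f(w_1,w_2)$ one letter at a time I would show, for $\gamma_0\in\Gamma$, $v,v'\in\Gamma^k$, and $u\in\Gamma^{|w|}$,
\[ \ext{\al^k}{\trstep{t}{\gamma}{w}}(\gamma_0 v, u v') = \tup{v,v'}^{-1}\,\trstep{t}{\gamma}{w}(\gamma_0, u). \]
Because $\trstep{t}{\gamma}{w}(\gamma_0,u)$ equals $t$ when $\gamma_0=\gamma$ and $u=w$ and $\emptyset$ otherwise, and because $\tup{\epsilon,\epsilon}\in\tup{v,v'}^{-1}t$ holds precisely when $\tup{v,v'}\in t$, this single identity yields both base cases at once: membership of $\tup{\epsilon,\epsilon}$ in the evaluated weight is equivalent to the side conditions $\gamma_0=\gamma$, $u=w$, $\tup{v,v'}\in t$ that license the TrPDS rule.

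For the composition rule the signatures are $m/m''$ and $m''/m'$, which are strictly compatible, so the indexed product $\otimes$ collapses to $\odot$ (taking $\sigma_1'=\sigma_1$ and using $\extop{\sigma,\sigma}=\mathrm{id}$), giving $a(w_1,w_2)=\bigcup_{u\in\Gamma^{m''}} a_1(w_1,u)\circ a_2(u,w_2)$. In the forward direction I would take $u$ to be the intermediate stack content and combine $\tup{\epsilon,\epsilon}\in a_1(w_1,u)$ and $\tup{\epsilon,\epsilon}\in a_2(u,w_2)$ from the hypotheses to get $\tup{\epsilon,\epsilon}\in a_1(w_1,u)\circ a_2(u,w_2)$. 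In the converse direction, from $\tup{\epsilon,\epsilon}\in a(w_1,w_2)$ I would extract a witness $u$ and a witnessing middle word $z$ of the relational composition $\circ$; crucially, every element of $\mathcal{T}$ is a letter-to-letter transduction, so $z$ must have length $0$, i.e.\ $z=\epsilon$. This produces $\tup{\epsilon,\epsilon}\in a_1(w_1,u)$ and $\tup{\epsilon,\epsilon}\in a_2(u,w_2)$, to which the induction hypothesis applies.

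The main obstacle I anticipate is the bookkeeping in the one-step identity above: one must keep the two $\Gamma$-word components synchronized while peeling them off with the repeated extension, and verify that iterating single-letter left quotients $\tup{\gamma_1,\gamma_2}^{-1}$ composes into the multi-letter quotient $\tup{v,v'}^{-1}$. Everything else — the monoid and annihilator laws, and the reduction of $\otimes$ to $\odot$ under strict compatibility — is routine once the letter-to-letter restriction is invoked to control word lengths in the relational composition.
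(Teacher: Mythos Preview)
Your proposal is correct and follows essentially the same route as the paper's proof: induction on the three-rule derivation in each direction, the closed form $\ext{\al^k}{\trstep{t}{\gamma}{w}}(\gamma_0 v, u v') = \tup{v,v'}^{-1}\trstep{t}{\gamma}{w}(\gamma_0,u)$ for the one-step case, and the letter-to-letter observation to force the middle witness of the relational composition to be $\epsilon$ in the converse composition case. Your explicit remark that the signatures $m/m''$ and $m''/m'$ are strictly compatible (so $\otimes$ collapses to $\odot$) is a detail the paper uses silently, but otherwise the arguments coincide.
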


The backward reachability analysis similar to the above was presented by
Uezato and Minamide~\cite{uezato:stack}. 
However, they used an ad-hoc extension of automata to generalize the saturation 
procedure and their presentation was rather complicated. 
We here greatly clarify the presentation by using our framework of weighted 
pushdown systems.

\section{Related Work}
An automaton over a monoid $\mathcal{M}$ is called a generalized $\mathcal{M}$-automaton by
Eilenberg~\cite{eilenberg:automata}. 
The textbook of Sakarovitch discusses automata over several classes of
monoids including free groups and commutative monoids~\cite{sakarovitch:elements}.
As far as we know, this paper is the first work that discusses the reachability
analysis of pushdown systems by considering them as automata over the monoid of stack signatures.

Let us consider a paired alphabet $\widetilde{\Gamma} = \Gamma \cup \overline{\Gamma}$
where $\overline{\Gamma}  = \set{\overline{a} \mid a \in \Gamma}$.
Letters $\gamma$ and $\overline{\gamma}$ correspond to a push and a pop of 
$\gamma$, respectively. 
Then, the monoid $\mathcal{M}_\Gamma$ is closely related to the monoid over
$\widetilde{\Gamma}^*$ obtained by Shamir congruence~\cite{shamir:representation}, which is generated by $\gamma\overline{\gamma} = \epsilon$. If we add the relation
$\gamma\overline{\gamma'} = \top$ for $\gamma \ne \gamma'$, then the reduced form of a word over
$\widetilde{\Gamma}$ has the following form: $\overline{w_1}w_2$ or $\top$.
If we write $w_1/{w_2}^R$ for $\overline{w_1}w_2$, we obtain a stack signature\footnote{${w_2}^R$ is the reverse of $w_2$.}.

Esparza \textit{et al.} showed that
conditional pushdown systems can be translated to ordinary pushdown systems~\cite{esparza:regularvaluation}. 
Hence, the reachability can be decided via the translation. 
However, it is not practical to apply the translation because of exponential blowup of the size of pushdown systems. 
The algorithm formulated in Section~\ref{sec:conditional} as the reachability analysis of
weighted pushdown systems has also an exponential complexity.
However, it avoids the 
exponential blowup by the translation before applying the reachability analysis and worked well for the analysis of the HTML5 parser specification~\cite{minamide:html5}.

Reps \textit{et al.}~\cite{reps:weighted} developed both of the
forward and backward analysis of weighted pushdown systems. 
Although our backward analysis is a direct extension of their analysis,
the forward reachability analysis cannot directly be extended for
indexed weight domains. This is because $a \in D_{\gamma / \gamma'\gamma''}$ 
cannot be decomposed to $a = a_1 \otimes a_2$ for
$a_1 \in D_{\gamma /\gamma''}$  and $a_2 \in D_{\epsilon/\gamma'}$ in general.
%\[ a = \Oplus_{i \in I} b_i \otimes c_i \]
If this decomposition is possible, a slightly modified version of
their forward reachability analysis can be extended for indexed weighted 
domains (we add a new states $q_r$ 
indexed by a transition rule $r$  as the original forward reachability analysis considered by Esparza~\textit{et.al}~\cite{esparza:efficient} instead of $q_{p',\gamma'}$ indexed by a state $p'$ and a pushdown symbol $\gamma'$.). 
However, among the four indexed semirings in Section~\ref{sec:application}, 
only the indexed semiring for conditional pushdown systems enables the 
decomposition above.
 It should be noted that Cai and Ogawa developed the forward reachability analysis of well-structured pushdown systems by combining the saturation procedure
with the Karp-Miller acceleration instead of the ideal representation~\cite{ogawa:well}.

\section{Conclusions}
We have introduced the monoid of stack signatures to treat
pushdown systems as automata over the monoid. 
Then, weighted pushdown systems are generalized by adopting
a semiring indexed by stack signatures as weight. 
This generalization makes it possible to relax the restriction
of boundedness and extend the applications of the reachability analysis 
of weighted pushdown systems.

We have shown that by designing proper indexed semirings, the reachability analysis     
of several extensions of pushdown systems can be achieved by a translation to
weighted pushdown systems and their reachability analysis.
Although the reachability analysis of those extensions were already developed 
by directly extending the analysis of ordinary pushdown systems, our approach
clarifies the analysis by separating  the design of indexed semirings,
which depends on each extension, from the general algorithm of the 
reachability analysis.

The indexed semirings for the applications in this paper are given
through weight structures. We consider that it is simpler to construct 
and implement indexed semirings through weight structures than to directly construct them. However, we are not completely satisfied with the formulation
of weight structures because their definition looks rather ad-hoc mathematically.
We would like to investigate more abstract notion corresponding to
weight structures.

\section*{Acknowledgments}
I would like to thank Stefan Schwoon for inspiring discussions and
suggestions. Schwoon informed me of the work of Suwimonteerabuth on
the encoding of local variables into weight.
I would also like to thank Shin-ya Katsumata for his comments on
lax monoidal functors and graded semirings.
The paper has also benefited from constructive feedback and suggestions
by the anonymous referees, which are greatly appreciated.
This work has been partially supported by JSPS Grant-in-Aid for Science
Research (C) 24500028 and 15K00087, and the Kayamori Foundation of
Informational Science Advancement.

\bibliographystyle{alpha}
\bibliography{sa}

\appendix
\section{Proofs on Stack Signatures} \label{appendix:assoc}

\begin{lem}
$(w_1/w_1' \cdot w_2/w_2') \cdot w_3/w_3' = 
 w_1/w_1' \cdot (w_2/w_2' \cdot w_3/w_3')$
\end{lem}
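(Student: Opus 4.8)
The plan is to sidestep a direct three-fold case analysis by realizing every stack signature as a partial function on stacks, so that associativity is inherited from composition of partial functions, where it is automatic. Concretely, I would represent each proper signature $w/w'$ by the partial map $\phi_{w/w'}$ on $\Gamma^*$ sending a stack $wt$ to $w't$ and undefined on stacks that do not have $w$ as a prefix, and represent $\top$ by the empty (nowhere-defined) partial map. This matches the informal semantics: $\phi_{w/w'}$ pops $w$ off the top and pushes $w'$.

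First I would verify that $\phi$ is injective on all of $\mathcal{M}_\Gamma$. The domain of $\phi_{w/w'}$ is exactly $w\Gamma^*$, whose shortest element is $w$, and $\phi_{w/w'}(w) = w'$; hence both $w$ and $w'$ are recoverable from the function. Since every $\phi_{w/w'}$ has nonempty domain, it is moreover distinct from the empty map representing $\top$. Thus distinct signatures give distinct partial functions.

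The key step is the homomorphism property $\phi_{\sigma_1 \cdot \sigma_2} = \phi_{\sigma_2} \circ \phi_{\sigma_1}$, where $\circ$ is composition applying $\phi_{\sigma_1}$ first. Writing $\sigma_1 = w_1/w_1'$ and $\sigma_2 = w_2/w_2'$, the composite sends $s$ first to $w_1' t$ (requiring $s = w_1 t$) and then applies $\phi_{\sigma_2}$, which succeeds exactly when $w_1' t$ begins with $w_2$; this is possible for some $t$ precisely when $w_1'$ and $w_2$ are prefix-comparable, i.e.\ when $\sigma_1$ and $\sigma_2$ are compatible. In the subcase $w_1' = w_2 w_1''$ a short computation gives the composite $w_1 t \mapsto w_2' w_1'' t$, namely $\phi_{w_1 / w_2' w_1''}$, and in the subcase $w_2 = w_1' w_2''$ it gives $w_1 w_2'' t' \mapsto w_2' t'$, namely $\phi_{w_1 w_2'' / w_2'}$; both agree with the two defining cases of $\cdot$. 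When $w_1'$ and $w_2$ are incomparable the composite is nowhere defined, matching $\sigma_1 \cdot \sigma_2 = \top$, and the cases involving $\top$ itself are immediate because the empty map is a two-sided zero for composition.

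Associativity then follows formally: $\phi_{(\sigma_1\cdot\sigma_2)\cdot\sigma_3} = \phi_{\sigma_3}\circ(\phi_{\sigma_2}\circ\phi_{\sigma_1})$ and $\phi_{\sigma_1\cdot(\sigma_2\cdot\sigma_3)} = (\phi_{\sigma_3}\circ\phi_{\sigma_2})\circ\phi_{\sigma_1}$ coincide by associativity of partial-function composition, and injectivity of $\phi$ forces the two bracketings of the product to be the same signature. The main obstacle is the homomorphism step, which is where the prefix comparisons genuinely live; but it is only a two-way split on the relation between $w_1'$ and $w_2$, far tamer than the nested branching one meets attacking the three-fold identity directly. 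If one prefers to argue entirely within words, the alternative is the elementary but tedious route: expand both sides by the definition of $\cdot$, branch on which of $w_1', w_2$ is a prefix of the other and then on how the resulting middle word compares with $w_3, w_3'$, and check each leaf --- this is the detailed case analysis that the appendix and the Isabelle/HOL formalization carry out.
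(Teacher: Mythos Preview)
Your argument is correct: the map $\phi$ is injective, the homomorphism identity $\phi_{\sigma_1\cdot\sigma_2}=\phi_{\sigma_2}\circ\phi_{\sigma_1}$ holds by exactly the two-way (plus one trivial) prefix split you describe, and associativity of partial-function composition finishes the job. This is genuinely different from the paper's proof, which proceeds by a direct nested case analysis on the prefix relations among $w_1',w_2$ and then $w_2',w_3$ (with a further sub-split in one branch), verifying the identity leaf by leaf. Your route is more conceptual---it explains \emph{why} the operation is associative by realizing signatures as stack transformations---and it collapses the three-fold branching into a single binary homomorphism check. The paper's approach, while more tedious, has a side benefit you should be aware of: the very case decomposition it performs is recorded as a separate lemma (the five-case characterisation of when $\sigma_1\cdot\sigma_2\cdot\sigma_3\neq\top$), and that lemma is later reused verbatim to prove associativity of the derived multiplication $\otimes$ in the weight-structure construction of Section~6. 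Your semantic embedding does not immediately produce that case list, so if you adopt this proof you would still need to extract those cases separately when you reach the indexed-semiring construction.
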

\proof By case analysis on the prefix relation. We omit the cases
where $(w_1/w_1' \cdot w_2/w_2') \cdot w_3/w_3' = 
 w_1/w_1' \cdot (w_2/w_2' \cdot w_3/w_3') = \top$.
\begin{enumerate}
\item $w_1'$ is a prefix of $w_2$, i.e., $w_2 = w_1'w_2''$.
\begin{enumerate}
\item $w_2'$ is a prefix of $w_3$, i.e., $w_3 = w_2'w_3''$.
\begin{eqnarray*}
(w_1/w_1' \cdot w_2/w_2') \cdot w_3/w_3' 
         & = & w_1w_2''/w_2' \cdot w_3/w_3' \\
         & = & w_1w_2''w_3''/w_3' \\
         & = & w_1/w_1' \cdot w_1'w_2''w_3''/w_3'  \\
         & = & w_1/w_1' \cdot (w_2/w_2' \cdot w_3/w_3')  
\end{eqnarray*}
\item $w_3$ is a prefix of $w_2'$, i.e., $w_2' = w_3w_2'''$.       
\begin{eqnarray*}
(w_1/w_1' \cdot w_2/w_2') \cdot w_3/w_3'  
         & = & w_1w_2''/w_2' \cdot w_3/w_3' \\
         & = & w_1w_2''/w_3'w_2''' \\
         & = & w_1/w_1' \cdot w_2/w_3'w_2'''  \\
         & = & w_1/w_1' \cdot (w_2/w_2' \cdot w_3/w_3')  
\end{eqnarray*}
\end{enumerate}
\item $w_2$ is a prefix of $w_1'$, i.e., $w_1' = w_2w_1''$.
\begin{enumerate}
\item $w_2'$ is a prefix of $w_3$, i.e., $w_3 = w_2'w_3''$.
\begin{enumerate}
\item $w_1''$ is a prefix of $w_3''$, i.e., $w_3'' = w_1''w$.
\begin{eqnarray*}
 (w_1/w_1' \cdot w_2/w_2') \cdot w_3/w_3' 
         & = & w_1/w_2'w_1'' \cdot w_2'w_3''/w_3' \\
         & = & w_1w/w_3' \\
         & = & w_1/w_2w_1'' \cdot w_2w_3''/w_3'  \\
         & = & w_1/w_1' \cdot (w_2/w_2' \cdot w_3/w_3')  
\end{eqnarray*}
\item $w_3''$ is a prefix of $w_1''$. Symmetric to the case above.
%\item Otherwise. $(w_1/w_1' \cdot w_2/w_2') \cdot w_3/w_3' = 
% w_1/w_1' \cdot (w_2/w_2' \cdot w_3/w_3') = \top$.
\end{enumerate}
\item $w_3$ is a prefix of $w_2'$, i.e., $w_2' = w_3w_2'''$.  
This case is symmetric to Case (1a).\qed
\end{enumerate}
\end{enumerate}

\begin{lem} \label{lem:cases}
If $\sigma_1 \cdot \sigma_2 \cdot \sigma_3 \ne \top$, one of the followings holds.
\begin{enumerate} 
\item $\sigma_1 \le \sigma_1'$, $\sigma_3 \le \sigma_3'$, $\scomp{\sigma_1'}{\sigma_2}$, and $\scomp{\sigma_2}{\sigma_3'}$.

\item $\sigma_1 \le \sigma_1'$, $\sigma_2 \le \sigma_2'$, $\scomp{\sigma_1'}{\sigma_2}$, and $\scomp{\sigma_2'}{\sigma_3}$.

\item $\sigma_3 \le \sigma_3'$, $\sigma_2 \le \sigma_2'$, 
$\scomp{\sigma_2}{\sigma_3'}$, and $\scomp{\sigma_1}{\sigma_2'}$.

\item $\sigma_2 \le \sigma_2' \le \sigma_2''$, 
$\scomp{\sigma_1}{\sigma_2'}$, and $\scomp{\sigma_2''}{\sigma_3}$.

\item $\sigma_2 \le \sigma_2' \le \sigma_2''$, 
$\scomp{\sigma_1}{\sigma_2''}$, and $\scomp{\sigma_2'}{\sigma_3}$.
\end{enumerate}
\end{lem}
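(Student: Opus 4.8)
The plan is to lean on the characterization of compatibility proved earlier in this section, namely that two stack signatures $\tau_1$ and $\tau_2$ are compatible if and only if either $\tau_1 \le \tau_1'$ with $\scomp{\tau_1'}{\tau_2}$, or $\tau_2 \le \tau_2'$ with $\scomp{\tau_1}{\tau_2'}$, for some extension. First I would record that, because $\top$ is absorbing and $\cdot$ is associative, the hypothesis $\sigma_1 \cdot \sigma_2 \cdot \sigma_3 \ne \top$ forces every factor and every sub-product to be proper; in particular $\sigma_1 \cdot \sigma_2 \ne \top$ and $\sigma_2 \cdot \sigma_3 \ne \top$, so the pairs $(\sigma_1, \sigma_2)$ and $(\sigma_2, \sigma_3)$ are each compatible, and also $(\sigma_1 \cdot \sigma_2) \cdot \sigma_3 \ne \top$. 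Applying the characterization to each pair yields two dichotomies: for $(\sigma_1,\sigma_2)$, either (a) $\sigma_1 \le \sigma_1'$ with $\scomp{\sigma_1'}{\sigma_2}$ or (b) $\sigma_2 \le \bar\sigma_2$ with $\scomp{\sigma_1}{\bar\sigma_2}$; and for $(\sigma_2,\sigma_3)$, either (c) $\sigma_2 \le \tilde\sigma_2$ with $\scomp{\tilde\sigma_2}{\sigma_3}$ or (d) $\sigma_3 \le \sigma_3'$ with $\scomp{\sigma_2}{\sigma_3'}$.

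This gives four combinations, three of which map immediately onto the conclusions. The combination (a)$\wedge$(d) keeps $\sigma_2$ unextended on both sides and is exactly Case~1; (a)$\wedge$(c) extends $\sigma_2$ only toward $\sigma_3$ and is Case~2 (rename $\tilde\sigma_2$ to $\sigma_2'$); and (b)$\wedge$(d) extends $\sigma_2$ only toward $\sigma_1$ and is Case~3. In each of these three, the two strict-compatibility requirements involve at most one extension of $\sigma_2$, so they coexist without any further argument.

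The remaining combination (b)$\wedge$(c) is the crux, and handling it is where the work lies. Here both strict-compatibility conditions are witnessed by extensions of $\sigma_2 = u/v$, say $\bar\sigma_2 = us_1/vs_1$ with $\scomp{\sigma_1}{\bar\sigma_2}$ and $\tilde\sigma_2 = us_2/vs_2$ with $\scomp{\tilde\sigma_2}{\sigma_3}$, and I must show these two extensions are comparable in order to land in Case~4 or Case~5. To do this I would use the earlier identity $\sigma_1 \cdot \sigma_2 = \sigma_1 \cdot \bar\sigma_2$ to compute that the pushed word of $\sigma_1 \cdot \sigma_2$ is $vs_1$, while $\scomp{\tilde\sigma_2}{\sigma_3}$ makes the popped word of $\sigma_3$ equal to $vs_2$; since $(\sigma_1 \cdot \sigma_2) \cdot \sigma_3 \ne \top$, these two words must be prefix-comparable, forcing $s_1$ and $s_2$ to be prefix-comparable and hence $\bar\sigma_2 \le \tilde\sigma_2$ or $\tilde\sigma_2 \le \bar\sigma_2$. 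In the first subcase $\sigma_2 \le \bar\sigma_2 \le \tilde\sigma_2$ with $\scomp{\sigma_1}{\bar\sigma_2}$ and $\scomp{\tilde\sigma_2}{\sigma_3}$ gives Case~4, and in the second $\sigma_2 \le \tilde\sigma_2 \le \bar\sigma_2$ gives Case~5. The only genuinely nontrivial step is this comparability of the two extensions, which is precisely what the well-definedness of the full triple product buys us; the rest is bookkeeping with the prefix order.
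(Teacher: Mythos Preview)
Your proposal is correct and amounts to the same argument the paper has in mind. The paper's own proof is a one-liner, ``obtained by inspecting the proof of the above lemma,'' where the associativity lemma was proved by the concrete prefix case-split: whether $w_1'$ is a prefix of $w_2$ or vice versa, and whether $w_2'$ is a prefix of $w_3$ or vice versa. Your dichotomies (a)/(b) and (c)/(d) are exactly that case-split, just rephrased through the compatibility characterization lemma rather than in raw prefix language; your three easy combinations correspond to the paper's cases 1(a), 1(b), 2(b), and your ``crux'' case (b)$\wedge$(c) is the paper's case 2(a), whose further subdivision into (i)/(ii) is precisely your comparability argument for $\bar\sigma_2$ and $\tilde\sigma_2$. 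If anything, your writeup is more explicit than the paper's about \emph{why} the two extensions of $\sigma_2$ must be comparable (the paper's associativity proof simply asserts the sub-split and omits the $\top$ branches), so the substance is the same.
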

\proof
This lemma is obtained by inspecting the proof of the above lemma.
\qed

\begin{lem} \label{lem:ordered}
If $\sigma_1 \le \sigma_1'$ and $\sigma_2 \le \sigma_2'$, then $\sigma_1 \cdot \sigma_2 \le \sigma_1' \cdot \sigma_2'$.
\end{lem}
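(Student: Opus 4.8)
The plan is to prove the statement by first collapsing all the cases in which $\top$ occurs and then reducing what remains to a one-line word cancellation. First I would dispose of the trivial possibilities. If $\sigma_1' = \top$ or $\sigma_2' = \top$, then $\sigma_1' \cdot \sigma_2' = \top$ and the desired $\sigma_1 \cdot \sigma_2 \le \top$ holds by definition; similarly, if $\sigma_1' \cdot \sigma_2' = \top$ the conclusion is immediate no matter what $\sigma_1 \cdot \sigma_2$ is. So I may assume that $\sigma_1'$, $\sigma_2'$, and their product are all proper. Since $\top$ is the greatest element, $\sigma_1' \ne \top$ and $\sigma_2' \ne \top$ force $\sigma_1$ and $\sigma_2$ to be proper too, and then $\sigma_i \le \sigma_i'$ means exactly that $\sigma_i'$ arises from $\sigma_i$ by appending a common suffix to both components. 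Thus I can fix words and write $\sigma_1 = a/b$, $\sigma_1' = au/bu$, $\sigma_2 = c/d$, $\sigma_2' = cv/dv$.

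Next I would record the key observation that compatibility is inherited downward under suffixing. If $\sigma_1$ and $\sigma_2$ were incompatible, then the pushed word $b$ of $\sigma_1$ and the popped word $c$ of $\sigma_2$ would already disagree at some position strictly before $\min(|b|,|c|)$; this disagreement survives verbatim in $bu$ and $cv$, so $\sigma_1'$ and $\sigma_2'$ would be incompatible as well. Taking the contrapositive, the standing assumption that $\sigma_1' \cdot \sigma_2'$ is proper guarantees that $\sigma_1 \cdot \sigma_2$ is proper. This step is exactly what rules out the one genuinely dangerous configuration, namely $\top = \sigma_1 \cdot \sigma_2 \not\le \sigma_1' \cdot \sigma_2'$, and so it cannot be skipped.

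With everything now proper, I would finish by a four-way case analysis: whether $bu$ is a prefix of $cv$ or $cv$ a prefix of $bu$ (this selects the defining clause of $\cdot$ for $\sigma_1' \cdot \sigma_2'$), crossed with whether $b$ is a prefix of $c$ or $c$ a prefix of $b$ (this selects the clause for $\sigma_1 \cdot \sigma_2$). In each branch $\sigma_1' \cdot \sigma_2'$ takes a form $x/y$ and $\sigma_1 \cdot \sigma_2$ a form $x_0/y_0$, and the single word equation coming from the two prefix hypotheses — for instance $cv = buz$ together with $c = bc''$ yields $c''v = uz$ — collapses precisely to the identity $x/y = x_0 s/y_0 s$ that witnesses $\sigma_1 \cdot \sigma_2 \le \sigma_1' \cdot \sigma_2'$, where the suffix $s$ is read off as the relevant leftover word. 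I expect the only real obstacle to be bookkeeping rather than ideas: the definition of $\cdot$ is itself a case split, and the witnessing suffix $s$ is a different concatenation in each branch, so the decompositions and their cancellations must be tracked carefully. Since two of the four cases are mirror images of the other two, it suffices to carry out two of them in full detail.
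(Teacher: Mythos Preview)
Your proposal is correct. The handling of the $\top$ cases, the contrapositive argument that incompatibility of $b$ and $c$ would persist in $bu$ and $cv$, and the four-way word-cancellation case analysis all go through exactly as you describe; the symmetry you invoke to halve the casework is the anti-involution $w/w' \mapsto w'/w$, which indeed pairs the cases.

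The paper takes a slightly different and shorter route. Rather than keeping $\sigma_1'$ and $\sigma_2'$ arbitrary compatible signatures, it first observes that whenever $\sigma_1'\cdot\sigma_2' \ne \top$ one can pass to $\sigma_1'' \ge \sigma_1'$ and $\sigma_2'' \ge \sigma_2'$ that are \emph{strictly} compatible and have the same product; so it suffices to assume $bu = cv$ from the outset. That single equation collapses your $2\times 2$ case split to a $1\times 2$ split: once $bu = cv$, the relation between $b$ and $c$ already determines the relation between $u$ and $v$, and the computation becomes a one-liner (up to the remaining symmetry). Your approach buys self-containment---you never need the reduction-to-strict-compatibility lemma---at the cost of carrying two independent prefix hypotheses through the cancellation instead of one equality.
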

\proof
It is sufficient to prove the proposition for the case $\sigma_1' \cdot \sigma_2' \ne \top$.
Then, there exist strictly compatible $\sigma_1''$ and $\sigma_2''$ such that
$\sigma_1' \le \sigma_1''$, $\sigma_2' \le \sigma_2''$, and
$\sigma_1' \cdot \sigma_2' = \sigma_1'' \cdot \sigma_2''$.
Thus, we can assume that $\sigma_1'$ and $\sigma_2'$ are strictly compatible.
\begin{description}
\item[Case $\sigma_1 \cdot \sigma_2 \ne \top$]
Without loss of generality, we assume that $\sigma_1 = w_1/w$ and
$\sigma_2 = w w_2/w_2'$. Then, we have $\sigma_1' = w_1w_2w'/w w_2 w'$ and
$\sigma_2' = ww_2w'/w_2'w'$ for some $w'$.
Hence, $w_1 w_2/w_2' = \sigma_1 \cdot \sigma_2 \le 
\sigma_1' \cdot \sigma_2' = w_1 w_2w'/w_2'w'$.
\item[Case $\sigma_1 \cdot \sigma_2 = \top$]
This case contradicts $\sigma_1' \cdot \sigma_2' \ne \top$.
\end{description}
\qed

\begin{lem} \label{lem:lele}
Let $\sigma \ne \top$.
If $\sigma_1 \le \sigma$ and $\sigma_2 \le \sigma$, then either
$\sigma_1 \le \sigma_2$ or $\sigma_2 \le \sigma_1$.
\end{lem}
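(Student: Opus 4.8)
The plan is to reduce everything to right-cancellation in the free monoid $\Gamma^*$. First I would dispose of $\top$. Since $\sigma \ne \top$, it is a proper signature $u_1/u_2$. Neither defining clause of $\le$ places $\top$ below a proper signature: the clause $\sigma' \le \top$ only puts elements below $\top$, and the clause $w_1/w_2 \le w_1w/w_2w$ relates proper signatures. Hence $\top \le \sigma$ fails, so $\sigma_1 \le \sigma$ and $\sigma_2 \le \sigma$ force both $\sigma_1$ and $\sigma_2$ to be proper; write $\sigma_1 = a_1/a_2$ and $\sigma_2 = b_1/b_2$.

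Next I would unfold the two hypotheses using the definition of $\le$ on proper signatures. From $\sigma_1 \le \sigma$ I obtain a word $w$ with $u_1 = a_1 w$ and $u_2 = a_2 w$, and from $\sigma_2 \le \sigma$ a word $w'$ with $u_1 = b_1 w'$ and $u_2 = b_2 w'$. The crucial observation is that $w$ and $w'$ are both \emph{suffixes} of $u_1$ (equivalently of $u_2$), and any two suffixes of a fixed word are comparable under the suffix relation: the shorter one is a suffix of the longer.

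Finally, without loss of generality assume $|w| \le |w'|$ (the opposite case being symmetric), so that $w' = v w$ for some $v \in \Gamma^*$. Substituting into $a_1 w = u_1 = b_1 w' = b_1 v w$ and cancelling $w$ on the right gives $a_1 = b_1 v$; likewise $a_2 = b_2 v$. Therefore $\sigma_2 = b_1/b_2 \le b_1 v/b_2 v = a_1/a_2 = \sigma_1$, which is the desired conclusion; the symmetric case $|w'| \le |w|$ yields $\sigma_1 \le \sigma_2$ instead.

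I expect the argument to be essentially routine; the only points needing care are the $\top$ bookkeeping at the start and keeping track that the append in the definition of $\le$ is on the right, so that the shared factors are suffixes and the relevant cancellation is right-cancellation rather than their prefix/left counterparts.
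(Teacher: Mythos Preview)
Your argument is correct and is exactly the routine case analysis the paper alludes to (the paper's proof is simply ``easily proved by case analysis''); the two cases are precisely whether $|w| \le |w'|$ or $|w'| \le |w|$, i.e., which of the common suffixes is longer. No gaps remain.
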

\proof
This lemma can be easily proved by  case analysis.
\qed

\begin{lem} \label{lem:distrib}
$(\sigma_1 \sqcup \sigma_2) \cdot \sigma_3 =
(\sigma_1 \cdot \sigma_3) \sqcup (\sigma_2 \cdot \sigma_3)$.
\end{lem}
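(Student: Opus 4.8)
The plan is to prove the two inequalities $(\sigma_1 \cdot \sigma_3) \sqcup (\sigma_2 \cdot \sigma_3) \le (\sigma_1 \sqcup \sigma_2) \cdot \sigma_3$ and $(\sigma_1 \sqcup \sigma_2) \cdot \sigma_3 \le (\sigma_1 \cdot \sigma_3) \sqcup (\sigma_2 \cdot \sigma_3)$ separately, since $\sqcup$ is the join for the partial order $\le$. The first inequality is immediate from monotonicity of $\cdot$ (Lemma~\ref{lem:ordered}): from $\sigma_1 \le \sigma_1 \sqcup \sigma_2$ and $\sigma_2 \le \sigma_1 \sqcup \sigma_2$ we get $\sigma_i \cdot \sigma_3 \le (\sigma_1 \sqcup \sigma_2) \cdot \sigma_3$ for $i=1,2$, and the join of the two left sides is then bounded by their common upper bound. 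So the whole content lies in the reverse inequality, which combined with the first yields the stated equality. Before attacking it I would dispose of the cases where one of $\sigma_1,\sigma_2,\sigma_3$ is $\bot$ or $\top$: these follow directly from the defining equations $\bot \cdot x = x \cdot \bot = \bot$, $\bot \sqcup x = x$, $\top \cdot \sigma = \top$ for $\sigma \ne \bot$, and $\top \sqcup x = \top$, leaving only the case where $\sigma_1,\sigma_2,\sigma_3$ are all proper signatures.

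In the all-proper case I would split on whether $\sigma_1$ and $\sigma_2$ are comparable. If they are, say $\sigma_1 \le \sigma_2$, then $\sigma_1 \sqcup \sigma_2 = \sigma_2$, and monotonicity gives $\sigma_1 \cdot \sigma_3 \le \sigma_2 \cdot \sigma_3$, so both sides collapse to $\sigma_2 \cdot \sigma_3$. The only remaining case is that $\sigma_1,\sigma_2$ are incomparable proper signatures, and here Lemma~\ref{lem:lele} is decisive: two elements with a common proper upper bound must be comparable, so incomparable $\sigma_1,\sigma_2$ admit no proper upper bound and therefore $\sigma_1 \sqcup \sigma_2 = \top$; consequently the left side equals $\top \cdot \sigma_3 = \top$ (as $\sigma_3$ is proper). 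It then remains to show that the right side $(\sigma_1 \cdot \sigma_3) \sqcup (\sigma_2 \cdot \sigma_3)$ is also $\top$. Applying Lemma~\ref{lem:lele} once more, this join can be proper only if both $\sigma_1 \cdot \sigma_3$ and $\sigma_2 \cdot \sigma_3$ are proper and comparable. Hence it suffices to establish the following order-reflection (right-cancellation) property: if $\sigma_1 \cdot \sigma_3$ and $\sigma_2 \cdot \sigma_3$ are proper and comparable, then $\sigma_1$ and $\sigma_2$ are comparable — which contradicts incomparability and forces the join to be $\top$.

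This order-reflection property is the crux and the main obstacle. I would prove it by expanding the products with the definition of $\cdot$: writing $\sigma_3 = u/u'$ and a generic factor as $x/y$, the product is either $x/u'y''$ (when $y = u y''$, so the pushed part dominates the popped part of $\sigma_3$) or $x u''/u'$ (when $u = y u''$, so the popped part dominates). Comparability of the two products means one is obtained from the other by appending a common suffix to both components; substituting the four combinations of these two shapes for $\sigma_1 \cdot \sigma_3$ and $\sigma_2 \cdot \sigma_3$ and matching suffixes, each combination forces the expected relation between the factors (in two of the four, the appended suffix is forced to be $\epsilon$, collapsing the products to equality), so in every case $\sigma_1$ and $\sigma_2$ are comparable. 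This is a finite but delicate case analysis in the same spirit as the associativity argument in Appendix~\ref{appendix:assoc}: the difficulty is entirely in tracking the prefix relations and verifying that no cancellation against $\sigma_3$ can turn two incomparable factors into comparable proper products. Equivalently, one could first use the compatibility characterisation to rewrite $\cdot$ as composition of strictly compatible signatures, where the reflection property becomes more transparent.
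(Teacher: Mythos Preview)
Your proposal is correct and follows essentially the same strategy as the paper's proof. Both reduce the nontrivial content to the order-reflection statement you isolate---that comparability of $\sigma_1\cdot\sigma_3$ and $\sigma_2\cdot\sigma_3$ (as proper signatures) forces comparability of $\sigma_1$ and $\sigma_2$---and both discharge it by the same four-way case split on which side dominates in each product; the paper's Cases~1--4 are exactly your four shape combinations, and its ``this case does not occur'' subcases are your observations that certain directions force the appended suffix to be $\epsilon$. The only differences are cosmetic: you frame the argument as two inequalities (getting one for free from Lemma~\ref{lem:ordered}) and explicitly dispose of the $\bot$/$\top$ degenerate cases, whereas the paper proceeds directly via the comparable/incomparable dichotomy and leaves those degenerate cases implicit.
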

\proof
If $\sigma_1 \le \sigma_2$, then
$\sigma_1 \cdot \sigma_3 \le \sigma_2 \cdot \sigma_3$ by Lemma~\ref{lem:ordered} and thus the proposition holds.
To cover the other case, 
we show $\sigma_1 \sqcup \sigma_2 \ne \top$
by assuming $(\sigma_1 \cdot \sigma_3) \sqcup (\sigma_2 \cdot \sigma_3) \ne \top$.
\begin{description}
\item[Case 1] $\sigma_1\cdot\sigma_3 = \sigma_1\cdot\sigma_3'$ 
for strictly compatible $\sigma_1$ and $\sigma_3'$, 
and $\sigma_2\cdot\sigma_3 = \sigma_2\cdot\sigma_3''$
for strictly compatible $\sigma_2$ and $\sigma_3''$.
By Lemma~\ref{lem:lele}, without loss of generality, we assume $\sigma_1\cdot\sigma_3 \le \sigma_2\cdot\sigma_3$.

Let $\sigma_1 = w_1/w_1'$, $\sigma_2 = w_2/w_2'$, and $\sigma_3 = w_3/w_3'$.
Then, $w_1'  = w_3 w_{13}$ and $w_2'  = w_3 w_{23}$ for some
$w_{13}$ and $w_{23}$.
Then, $\sigma_1\cdot\sigma_3 = w_1/w_3'w_{13}$ and 
$\sigma_2\cdot\sigma_3 = w_2/w_3'w_{23}$.
From $\sigma_1\cdot\sigma_3 \le \sigma_2\cdot\sigma_3$,
$w_2 = w_1w$ and $w_{23} = w_{13}w$ for some $w$.
Then, $\sigma_1 = w_1/w_3 w_{13}$ and $\sigma_2 = w_1w/w_3 w_{13}w$.

\item[Case 2]
$\sigma_1\cdot\sigma_3 = \sigma_1\cdot\sigma_3'$ for strictly compatible
$\sigma_1$ and $\sigma_3'$, and
$\sigma_2\cdot\sigma_3 = \sigma_2'\cdot\sigma_3$
for strictly compatible $\sigma_2'$ and $\sigma_3$.
Let $\sigma_1 = w_1/w_1'$, $\sigma_2 = w_2/w_2'$, and $\sigma_3 = w_3/w_3'$.
Then, $w_1'  = w_3 w_{13}$ and $w_3  = w_2' w_{23}$ for some
$w_{13}$ and $w_{23}$.
Then, $\sigma_1\cdot\sigma_3 = w_1/w_3'w_{13}$ and 
$\sigma_2\cdot\sigma_3 = w_2w_{23}/w_3'$.
\begin{itemize}
\item Subcase $\sigma_2\cdot\sigma_3 \le \sigma_1\cdot\sigma_3$.
Then, we have $w_1 = w_2w_{23}w_{13}$ hence
$\sigma_1 = w_2w_{23}w_{13}/w_2'w_{23}w_{13}$ and therefore $\sigma_2 = w_2/w_2'$.

\item Subcase $\sigma_1\cdot\sigma_3 < \sigma_2\cdot\sigma_3$.
This case does not occur because $\sigma_1\cdot\sigma_3 = w_1/w_3'w_{13}$ and 
$\sigma_2\cdot\sigma_3 = w_2w_{23}/w_3'$.
\end{itemize}

\item[Case 3]
$\sigma_1\cdot\sigma_3 = \sigma_1'\cdot\sigma_3$
for strictly compatible $\sigma_1'$ and $\sigma_3$, and
$\sigma_2\cdot\sigma_3 = \sigma_2'\cdot\sigma_3$
for strictly compatible $\sigma_2'$ and $\sigma_3$.
From $(\sigma_1 \cdot \sigma_3) \sqcup (\sigma_2 \cdot \sigma_3) \ne \top$,
we have $\sigma_1'\cdot\sigma_3 = \sigma_2' \cdot\sigma_3$.
Then, $\sigma_1' = \sigma_2'$.
Hence, we have $\sigma_1 \le \sigma_2$ or $\sigma_2 \le \sigma_1$ 
by Lemma~\ref{lem:lele}.

\item[Case 4] $\sigma_1\cdot\sigma_3 = \sigma_1'\cdot\sigma_3$
for strictly compatible $\sigma_1'$ and $\sigma_3$, and
$\sigma_2\cdot\sigma_3 = \sigma_2\cdot\sigma_3'$
for strictly compatible $\sigma_2$ and $\sigma_3'$.
This case is the same as the case 2 by exchanging $\sigma_1$ and 
$\sigma_2$.\qed
\end{description}

\section{Correspondence for Well-Structured Pushdown Systems} \label{appendix:wellpds}

\begin{restatement}{Restatement of Proposition~\ref{prop:wellpds}} \mbox{}

\begin{itemize}
\item If $\utransrule{\tup{p_1,w_1}}{\tup{p_2,w_2}}{}{\mathcal{P}}$,
then $\utransrule{\tup{p_1,m_1}}{\tup{p_2,m_2}}{a}{\mathcal{P}'}$ and $w_1 \in  a(w_2)$.
\item If $\utransrule{\tup{p_1,m_1}}{\tup{p_2,m_2}}{a}{\mathcal{P}'}$ and 
$w_1 \in  a(w_2)$, then 
$\utransrule{\tup{p_1,w_1}}{\tup{p_2,w_2'}}{}{\mathcal{P}}$ for some $w_2 \preceq w_2'$.
\end{itemize}
 where $m_1 = |w_1|$ and $m_2 = |w_2|$.
\end{restatement}
\proof \mbox{}

\begin{itemize}
\item We prove the first statement by induction on the derivation of $\utransrule{\tup{p_1,w_1}}{\tup{p_2,w_2}}{}{\mathcal{P}}$.
\begin{description}
\item[Case] $\utransrule{\tup{p,w}}{\tup{p,w}}{}{}$ where $|w| = m$.
Then, 
$\utransrule{\tup{p,m}}{\tup{p,m}}{a}{\mathcal{P}'}$
where $a = \lambda w. \upcl{\set{w}})$.
Then, $w \in a(w)$.

\item[Case] $\utransrule{\tup{p,\gamma w'}}{\tup{p',\phi(\gamma)w'}}{}{}$,
$|w'| = m$, and $|\phi(\gamma)| = i$.
Then, 
$\utransrule{\tup{p_1,m+1}}{\tup{p_2,m+i}}{a}{\mathcal{P}'}$
where $a = \lambda \tup{w,w'}. \phi^{-1}(\upcl{\set{w}}) \times \upcl{\set{w'}}$.
Then, we have $a(\phi(\gamma)w') = \phi^{-1}(\upcl{\set{\phi(\gamma)}}) \times \upcl{\set{w'}} \ni \gamma w'$.

\item[Case] 
$\utransrule{\tup{p_1,w_1}}{\tup{p_3,w_3}}{}{\mathcal{P}}$ is obtained from
$\utransrule{\tup{p_1,w_1}}{\tup{p_2,w_2}}{}{\mathcal{P}}$ and
$\utransrule{\tup{p_2,w_2}}{\tup{p_3,w_3}}{}{\mathcal{P}}$.
By the induction hypotheses we have
$\utransrule{\tup{p_1,m_1}}{\tup{p_2,m_2}}{a_1}{\mathcal{P}'}$ and $w_1 \in  a(w_2)$,
as well as
$\utransrule{\tup{p_2,m_2}}{\tup{p_3,m_3}}{a_2}{\mathcal{P}'}$ and $w_2 \in  a(w_3)$.
Then, $\utransrule{\tup{p_1,m_1}}{\tup{p_2,m_2}}{a_1\odot a_2}{\mathcal{P}'}$
and $a_1\odot a_2(w_3) = \bigcup_{w \in a_2({w_3})} a_1(w) \supseteq
a_1(w_2) \ni w_1$.

\end{description}
\item We prove the second statement by induction on the derivation of 
$\utransrule{\tup{p_1,m_1}}{\tup{p_2,m_2}}{a}{\mathcal{P}'}$.

\begin{description}
\item[Case] $\utransrule{\tup{p,\#^{m}}}{\tup{p,\#^{m}}}{1_{m/m}}{\mathcal{P}'}$.
Let $w_1 \in \upcl{\set{w_2}} = 1_{m/m}(w_2)$.
Then, $\utransrule{\tup{p,w_1}}{\tup{p,w_1}}{}{\mathcal{P}}$ and $w_2 \preceq w_1$.

\item[Case]
  $\utransrule{\tup{p_1,\#^{m+1}}}{\tup{p_2,\#^{m+i}}}{a}{\mathcal{P}'}$
  is obtained from the fact that $(p_1, p_2, \phi) \in \Delta$ and from
$a = \lambda \tup{w,w'}. \phi^{-1}(\upcl{\set{w}}) \times \upcl{\set{w'}}$.
Let $w_2 = w_2'w_2''$ and $w_1 = \gamma w_1''$ where $|w_2'| = i$ and $|w_1''| = |w_2''| = m$.
Let $\gamma \in \phi^{-1}(\upcl{\set{w_2'}})$ and $w_1'' \in \upcl{\set{w_2''}}$.
Then, $\phi(\gamma) = w_2'''$ for some $w_2' \preceq w_2'''$.

Hence,
$\utransrule{\tup{p_1,\gamma w_1''}}{\tup{p_2,w_2''' w_1''}}{}{\mathcal{P}}$
and $w_2 = w_2'w_1'' \preceq w_2''' w_1''$.

\item[Case]
$\utransrule{\tup{p_1,m_1}}{\tup{p_3,m_3}}{a_1 \odot
  a_2}{\mathcal{P}'}$ is obtained from 
transitions
$\utransrule{\tup{p_1,m_1}}{\tup{p_2,m_2}}{a_1}{\mathcal{P}'}$ and 
$\utransrule{\tup{p_2,m_2}}{\tup{p_3,m_3}}{a_2}{\mathcal{P}'}$.
Let $w_1 \in  a_1 \odot a_2 (w_3) = \bigcup_{w \in a_2({w_3})} a_1(w)$.
Then, $w_1 \in a_1(w_2)$ and $w_2 \in a_2(w_3)$ for some $w_2$.
By the induction hypothesis,
$\utransrule{\tup{p_1,w_1}}{\tup{p_2,w_2'}}{}{\mathcal{P}}$ for some $w_2 \preceq w_2'$ and
$\utransrule{\tup{p_2,w_2}}{\tup{p_2,w_3'}}{}{\mathcal{P}}$ for some $w_3 \preceq w_3'$.
By the monotonicity of $\mathcal{P}$, 
$\utransrule{\tup{p_2,w_2'}}{\tup{p_2,w_3''}}{}{\mathcal{P}}$ for some $w_3' \preceq w_3''$.
Then, $\utransrule{\tup{p_1,w_1}}{\tup{p_3,w_3''}}{}{\mathcal{P}}$ and
$w_3 \preceq w_3''$.\qed

\end{description}
\end{itemize}

\section{Correspondence for Pushdown Systems with Stack Manipulation} \label{appendix:trpds}

\begin{restatement}{Restatement of Proposition~\ref{prop:trpds}} 
Let $m_1 = |w_1|$ and $m_2 = |w_2|$.
\begin{itemize}
\item If $\utransrule{\tup{p_1,w_1}}{\tup{p_2,w_2}}{}{\mathcal{P}}$,
then $\utransrule{\tup{p_1,m_1}}{\tup{p_2,m_2}}{a}{\mathcal{P}'}$ and 
$\tup{\epsilon,\epsilon} \in a(w_1,w_2)$ for some $a$.

\item If $\utransrule{\tup{p_1,m_1}}{\tup{p_2,m_2}}{a}{\mathcal{P}'}$ and 
$\tup{\epsilon,\epsilon} \in a(w_1,w_2)$, then 
$\utransrule{\tup{p_1,w_1}}{\tup{p_2,w_2}}{}{\mathcal{P}}$.
\end{itemize}
\end{restatement}
\proof Let $|w_i| = m_i$ for $1 \le i \le 3$ in this proof.
\begin{itemize}
\item We prove the first statement 
by induction on the derivation of
$\utransrule{\tup{p_1,w_1}}{\tup{p_2,w_2}}{}{\mathcal{P}}$.

\begin{description}
\item[Case] $\utransrule{\tup{p_1,w_1}}{\tup{p_1,w_1}}{}{\mathcal{P}}$.
We have $\utransrule{\tup{p,m_1}}{\tup{p,m_1}}{1_{m_1/m_1}}{\mathcal{P}'}$
and $1_{m_1/m_1}(w_1,w_1) =  1_{\mathcal{T}} \ni \tup{\epsilon,\epsilon}$.

\item[Case] $\utransrule{\tup{p,\gamma w'}}{\tup{p',ww''}}{}{\mathcal{P}}$ is obtained
from $\tup{p, \gamma, p', w, t} \in \Delta$ and $\tup{w',w''} \in t$.
Let $|w| = n$ and $|w'| = |w''| = m$.
Then, 
$\utransrule{\tup{p_1,m+1}}{\tup{p_2,m+n}}{a}{\mathcal{P}'}$
where $a = \ext{1/n,1+m/n+m}{\trstep{t}{\gamma}{w}}$ and
$a(\gamma w',w w'') = \tup{w',w''}^{-1}(\trstep{t}{\gamma}{w}(\gamma,w))
= \tup{w',w''}^{-1}t \ni \tup{\epsilon,\epsilon}$.

\item[Case] $\utransrule{\tup{p_1,w_1}}{\tup{p_3,w_3}}{}{\mathcal{P}}$ is obtained from
$\utransrule{\tup{p_1,w_1}}{\tup{p_2,w_2}}{}{\mathcal{P}}$ and
$\utransrule{\tup{p_2,w_2}}{\tup{p_3,w_3}}{}{\mathcal{P}}$.
By the induction hypotheses, 
$\utransrule{\tup{p_1,m_1}}{\tup{p_2,m_2}}{a_1}{\mathcal{P}'}$,
$\utransrule{\tup{p_2,m_2}}{\tup{p_3,m_3}}{a_2}{\mathcal{P}'}$,
$\tup{\epsilon,\epsilon} \in a_1(w_1,w_2)$, and
$\tup{\epsilon,\epsilon} \in a_2(w_2,w_3)$.
Then,  $\utransrule{\tup{p_1,m_1}}{\tup{p_3,m_3}}{a_1 \odot a_2}{\mathcal{P}'}$
and $\tup{\epsilon,\epsilon} \in a_1(w_1,w_2) \circ a_2(w_2,w_3) \subseteq
a_1 \odot a_2 (w_1,w_3)$.
\end{description}

\item We prove the second statement 
by induction on the derivation of
$\utransrule{\tup{p_1,m_1}}{\tup{p_2,m_2}}{a}{\mathcal{P}'}$.

\begin{description}
  
\item[Case]$\utransrule{\tup{p,m}}{\tup{p,m}}{1_{m/m}}{\mathcal{P}'}$ and
$\tup{\epsilon,\epsilon} \in 1_{m/m}(w_1,w_2)$. 
By the definition of $1_{m/m}$, $w_1 = w_2$.
Thus, $\utransrule{\tup{p,w_1}}{\tup{p,w_2}}{}{\mathcal{P}}$.

\item[Case] $\utransrule{\tup{p_1,m+1}}{\tup{p_2,m+n}}{a}{\mathcal{P}'}$
where $a = \ext{1/n,1+m/n+m}{\trstep{t}{\gamma}{w}}$. 
%By the construction of $\mathcal{P}'$, $\tup{p, \gamma, p', w, t} \in \Delta$.
Let $\tup{\epsilon,\epsilon} \in a(\gamma_0 w',w_0 w'')$ where
$|w'| = |w''| = m$, $|w_0| =n$. 
\begin{eqnarray*}
a(\gamma_0 w',w_0 w'') & = & \ext{1/n,1+m/n+m}{\trstep{t}{\gamma}{w}} (\gamma_0 w',w_0 w'') \\
& = & \tup{w',w''}^{-1} (\trstep{t}{\gamma}{w}(\gamma_0,w_0)) 
\end{eqnarray*}
Then, we have $\gamma_0 = \gamma$, $w_0 = w$, and $\tup{\epsilon,\epsilon} \in \tup{w',w''}^{-1}t$, i.e., $\tup{w',w''} \in t$. 
Hence, $\utransrule{\tup{p,\gamma w'}}{\tup{p',ww''}}{}{}$.

\item[Case] 
$\utransrule{\tup{p_1,m_1}}{\tup{p_3,m_3}}{a_1 \odot a_2}{\mathcal{P}'}$ is
obtained from
$\utransrule{\tup{p_1,m_1}}{\tup{p_2,m_2}}{a_1}{\mathcal{P}'}$ and $\utransrule{\tup{p_2,m_2}\\}{\tup{p_3,m_3}}{a_2}{\mathcal{P}'}$.
Let $\tup{\epsilon,\epsilon} \in a_1 \odot a_2 (w_1,w_3)$.
Then, $\tup{\epsilon,\epsilon} \in a_1 (w_1,w_2) \circ a_2(w_2,w_3)$
for some $w_2$. Since $a_1(w_1,w_2)$ and $a_2(w_2,w_3)$ are letter-to-letter transducers,
$\tup{\epsilon,\epsilon} \in a_1 (w_1,w_2)$
and $\tup{\epsilon,\epsilon} \in a_2 (w_2,w_3)$.
Then, we obtain 
$\utransrule{\tup{p_1,w_1}}{\tup{p_3,w_3}}{}{\mathcal{P}}$ from the induction hypotheses.\qed
\end{description}
\end{itemize}

\end{document}